\newtheorem{theorem}{Theorem}
\newtheorem{lemma}[theorem]{Lemma}
\newtheorem{proposition}[theorem]{Proposition}
\newtheorem{remark}[theorem]{Remark}
\let\a=\alpha
\let\e=\varepsilon
\let \z=\zeta
\let\d=\delta
\let\s=\sigma
\let\pt=\partial
\let\t=\vartheta
\let\O=\Omega
\let\G=\Gamma
\let\o=\omega
\let\g=\gamma
\let\l=\lambda
\let\f=\varphi
\let\t=\vartheta
\let\vr=\varrho
\let\b=\beta
\newcommand{\sm}{\sqrt{\mu}}
\newcommand{\R}{\mathbb{R}}
\newcommand{\cg}{\mathfrak{c}}
\newcommand{\ag}{\mathfrak{a}}
\newcommand{\bg}{\mathfrak{b}}
\renewcommand{\P}{\mathbf{P}}
\renewcommand{\S}{\mathbf{S}}
\newcommand{\ip}{(\mathbf{I}-\mathbf{P})}
\newcommand{\ipc}{(\mathbf{I}-\mathbf{P}_\c)}
\newcommand{\be}{\begin{equation}}
\newcommand{\bm}{\begin{multline}}
\newcommand{\ee}{\end{equation}}
\newcommand{\dd}{\mathrm{d}}
\renewcommand{\c}{\mathfrak{u} }
\newcommand{\1}{\mathbf{1}}
\newcommand{\lbr}{[\hskip-1pt [}
\newcommand{\rbr}{]\hskip-1pt ]}
\newcommand{\eq}{\eqref}
\numberwithin{equation}{section}
\numberwithin{theorem}{section}
\date{}
\title[Hydrodynamic Limit of a Kinetic Gas Flow Past an Obstacle\dots]{ {\large {
Hydrodynamic Limit of a Kinetic Gas Flow Past an Obstacle}}}
\author{R. Esposito}
\thanks{(R. E.) International Research Center M\&MOCS, Univ. dell'Aquila,
Cisterna di Latina, (LT) 04012 Italy}
\author{Y. Guo}
\thanks{(Y. G.) Division of Applied Mathematics, Brown University,
Providence, RI 02812, U.S.A.}
\author{R. Marra}
\thanks{(R. M.) Dipartimento di Fisica and Unit\`a INFN, Universit\`a di Roma
Tor Vergata, 00133 Roma, Italy}
\begin{document}

\begin{abstract}
Given an obstacle in $\R^3$ and a non-zero velocity with small amplitude at the infinity,   we construct the unique steady Boltzmann solution flowing around such an obstacle with the prescribed velocity as $|x|\to \infty$, which approaches the corresponding Navier-Stokes steady flow, as the mean-free path goes to zero. Furthermore, we establish the error estimate between the Boltzmann solution and its Navier-Stokes approximation. Our method consists of new $L^6$ and $L^3$ estimates in the unbounded exterior domain, as well as an iterative scheme preserving the positivity of the distribution function.
\end{abstract}

\maketitle
\tableofcontents 
\section{Introduction}\label{intro}
Let $\O$ be a smooth bounded open subset of $\R^3$ and $\overline{\O}$ its closure. A gas moves in $\O^c=\R^3\backslash\overline{\O}$ with prescribed velocity $\c$ at infinity  and vanishing velocity on $\pt \O$, evolving according to the incompressible Navier-Stokes equations. The steady boundary value problem for this system is classical in Fluid Mechanics and a huge literature has been devoted to it \cite{Bab,Fin,Lady,Lam,Ler,OS} (see also \cite{Gal} and references quoted therein). 
One of the main difficulties of this problem is related to the presence  of the ``wake" \cite{Th} and the corresponding slow decay to $\c$ of the velocity field at infinity. 

In the case of a rarefied gas, an alternative description is possible in terms of the Boltzmann equation and suitable boundary conditions. In this paper we study the link between these two descriptions in the small Knudsen numbers and low Mach numbers regime.

It is well known that in this regime the time dependent Boltzmann equation behaves as the incompressible Navier-Stokes equation,\cite{BGL89,BU,DEL,Guo06,GJ,GJJ,LM,MS,SR}. 
Much less is know  for the corresponding steady Boltzmann problem, where the natural $L^1$ and entropy estimates are not available, and only the entropy production can be exploited. 

Ukai and Asano \cite{UA1, UA2}, see also \cite{UYZ}, studied the Boltzmann equation in the exterior domain with fixed Knudsen number. They considered a rarefied gas outside a piecewise smooth convex domain of $\R^3$, with suitable boundary conditions and a prescribed Maxwellian behavior at infinity. The Maxwellian at infinity was centered at a small velocity field. For this problem Ukai and Asano were able to prove existence of the  steady solution and its dynamical stability.

Our main result is the construction of the steady solution to the Boltzmann equation in the exterior domain and  the estimate of its closeness to the steady incompressible Navier Stokes equation when Knudsen and Mach numbers are small.
Recently in \cite{EGKM2} we have constructed the solution to the Boltzmann equation for small Knudsen and Mach numbers in a smooth bounded domain, under the action of a suitably small external force and small variations of the boundary temperature. The exterior problem is even more difficult, due to the need of good decay properties for large $x$. 

\medskip
Before describing the difficulties to achieve our program, let us state more precisely the problem and the result. 

We assume that $\O\subset \R^3$ is a $C^2$ bounded domain, not necessarily convex.
Let $x \in\O^c= \R^3\backslash \overline{\O}$ and $v\in \R^3$. Let $F(x,v)\ge0$ be the (unnormalized) distribution function of a rarefied gas in $\O^c$ with position $x$ and velocity $v$, satisfying the steady Boltzmann equation 
\be v\cdot\nabla F=\frac{1}{\e}Q(F,F),\quad\text{ in }\O^c\label{1}\ee
where $\nabla\equiv\nabla_x$ and
\begin{eqnarray} Q(f,g)(v)&=& Q^+(f,g)-Q^{-}(f,g),\notag\\
Q^+(f,g)(v)&=& \int_{\R^3}  \dd v_*\int_{\{\o\in \R^3\,:\, |\o|=1\}} \dd\o B(\o, v-v_*)f(v')g(v_*'),\\
Q^-(f,g)(v)&=& f(v)\int_{\R^3}  v_*\int_{\{\o\in \R^3\,:\, |\o|=1\}} \dd\o B(\o, v-v_*)g(v_*).\end{eqnarray}
Here $v'$ and $v'_*$ are the incoming velocities in the elastic collision, defined by
\be v'= v-\o(v-v_*)\cdot \o,\quad v_*'= v_*+\o(v-v_*)\cdot \o,\ee
and $B(\o,V)$ is the cross section for hard potentials with Grad's angular cutoff, so that \newline $\int_{\{|\o|=1\}}\dd\o B(V,\o)\lesssim |V|^\theta$ for $0\le \theta\le 1$ depending on the interaction potential. In particular, $B(\o,V)=|\o\cdot V|$ for hard spheres and $\theta=1$.

We assume diffuse reflection boundary condition: Let $\gamma=\pt\O\times\R^3=\gamma_+\cup\gamma_-\cup \gamma_0$, with
\be
\gamma _{\pm} =\{(x,v)\in \partial \Omega \times \R^{3}\ :\
n(x)\cdot v\gtrless 0\}, 
\quad \gamma _{0} =\{(x,v)\in \partial \Omega \times \R^{3}\ :\
n(x)\cdot v=0\},\label{gammapm}\ee
$n(x)$ denoting the normal at $x$ to $\pt\O$ pointing inside $\O$.
Let
\begin{equation}
M_{\rho, u,T}:=\frac{\rho}{(2\pi T)^{\frac 3 2}}\exp\Big[-\frac {|v-u|^2}{2 T
}\Big],\label{maxwel}
\end{equation}
 be the {local} Maxwellian with density $\rho$, mean velocity $u$, and temperature $T$ and 
 \be \mu=M_{1,0,1}=\frac{1}{(2\pi)^{\frac 3 2}}\exp\Big[-\frac {|v|^2}{2 
}\Big].\label{mawelstan}\ee
On the boundary $F$ satisfies the \textit{diffuse reflection} condition defined as
\begin{equation}
F (x,v)=\mathcal{P}^w_\g(F)(x,v) \quad \text{ on } \g_-, \label{bc0}
\end{equation}
where
\begin{equation}
\mathcal{P}^w_{\gamma } (F)(x,v){ \ := \ }M^w(x,v)\int_{\{n(x)\cdot {w}>0\}}\dd {w}\, {F}(x,{w})\{n(x)\cdot {w}\},  \label{pgamma}
\end{equation}
with the {\it wall Maxwellian} defined as 
\begin{equation}\label{Tw}
M^w= \sqrt{2\pi}\mu=\frac{1}{2\pi}\exp\Big[-\frac {|v|^2}{2 
}\Big],\quad \int_{\{v\cdot n\gtrless 0\}} \dd v  M^w(v)|n\cdot v|= 1.\ee
We also specify the \textit{condition at infinity}. Since we  study the problem in the small Mach number regime, we assume that the velocity at infinity is of order $\e$. In other words, fixed a constant vector $\c $, denoting 
\be v_\c:=v-\e\c, \quad \mu_\c(v):=\mu(v_\c)=M_{1,\e \c,1}(v),\ee
we assume in a suitable sense
\be \lim_{|x|\to \infty} F(x,v)=\mu_\c(v).\label{Finfty}\ee
Note that we have prescribed the same uniform temperature on $\pt \O$ and at infinity for sake of simplicity, but {we} believe that a temperature difference of order $\e$ could be included. We do not discuss this. The case of sufficiently small difference of temperature for fixed $\e$ has been discussed in \cite{UYZ1}.

Let the couple velocity field and pressure, $(U, p)$,  be solution to the Stationary Incompressible Navier-Stokes equation (SINS) in $\O^c$:
\be U\cdot \nabla U + \nabla P= \mathfrak{v}\Delta U, \quad \nabla\cdot U=0, \quad U=0\text{ on }\pt \O, \quad U\to \c, \text{ as } |x|\to \infty\ee
where $\mathfrak{v}>0$ is the viscosity coefficient. 
It is convenient to represent $U=u+\c$, with $(u, P)$ solving 
\be (\c+u)\cdot \nabla u + \nabla P= \mathfrak{v}\Delta u, \quad \nabla\cdot u=0, \quad u=-\c\text{ on }\pt \O, \quad u\to 0, \text{ as } |x|\to \infty.\label{INS}\ee
Solutions to this equation do exist in $L^p$, for any $p>2$ and uniqueness is ensured for $|\c|$ small (see e.g. \cite{Gal}, Thm. X.6.4).

\medskip
  Our aim is to show that $F\approx M_{1,\e(u+\c),1}$ as $\e\to 0$. More precisely, since $M_{1,\e(u+\c),1}=\mu_\c+\e f_1\sm_\c+O(\e^2)$, where
\be f_1= \sqrt{\mu_\c} u\cdot v_\c,\label{f1}\ee
  we need to show that  
$\e^{-1}(F- \mu_\c) \approx f_1\sm_\c$ as $\e\to 0$
is in $L^p$ for any $p>2$, with the same decay of $u$.
Therefore, we set $\tilde R=\e^{-\frac 1 2}\mu_\c^{-\frac 1 2}[F-\mu_\c-\e f_1\sm_\c]$  and write the equation for $\tilde R$.
Let $L_\c$ be the usual linearized Boltzmann operator defined as
\be L_\c f=-\mu_\c^{-\frac 1 2} [Q(\mu_\c, \mu_\c^{\frac 1 2} f)+Q(\mu_\c^{\frac 1 2} f,\mu_\c)]:=\nu f -K f,\ee
where: $\nu(v)=\int_{\R^3\times \{|\o|=1\}}dv_*d\o B(v-v_*,\o)\mu(v_*)$ is such that $0\le \nu_0|v|^\theta\le \nu(v)\le \nu_1|v|^\theta$; $K$ is a compact operator on $L^2(\R^3_v)$.  $L_\c$ is an operator on $L^2(\R^3_v)$ whose null space  is 
\be \mathop{\rm Null}L_\c=\mathop{\rm span}\{1,v_\c, |v_\c|^2\}\sqrt{\mu_\c},\ee 
Let $\P_\c$ be the orthogonal projector on $\mathop{\rm Null}L_\c$. In particular, $L_\c f_1=0$.
Thus we have
\be v\cdot \nabla\tilde R+\e^{-1} L_\c \tilde R=[\G_\c(f_1,\tilde R)+ \G_\c(\tilde R,f_1)]+ \e^{\frac 1 2}\G_\c(\tilde R, \tilde R)+ \e^{-\frac 1 2} [\G_\c(f_1,f_1)-v\cdot \nabla f_1],\label{rtemp} \ee
where
\begin{eqnarray}\Gamma_\c(f,g)&=&\G_\c^+(f,g)-\G_\c^-(f,g),\notag\\ \G_\c^\pm(f,g)&=&\mu_\c^{-\frac 1 2} Q^\pm(\mu_\c^{\frac 1 2}f, \mu_\c^{\frac 1 2}g),\\
\tilde \Gamma_\c(f,g)&=&\frac 1 2\big[\Gamma_\c(f,g)+\Gamma_\c(g,f)\big].\notag\end{eqnarray}
To remove the divergent term in \eqref{rtemp}, we note that, since $\nabla\cdot u=0$, then 
\be\P_\c ({v_\c}\cdot \nabla f_1)= 0, \label{Pvnablaf1}\ee 
and 
\be f_2= L_\c^{-1}\big[-\ipc[{v_\c} \cdot\nabla f_1]+\Gamma_\c(f_1,f_1)\big]\label{f2}\ee
is well defined and is in $L^{p}$ for any $p>\frac 4 3$, because so is $\nabla  u$ (see e.g. \cite{Gal}, Thm. X.6.4).
Since $u$ solves the SINS equation, then it is easy to check that
\be \P_\c[  {v_\c}\cdot \nabla f_2+  {\c\cdot \nabla f_1}]=0. \label{Pvnablaf2}\ee
Therefore, by setting $R=\tilde R -\e^{\frac 1 2} f_2$, which means
$F=\mu_\c+\e(f_1+\e f_2+\e^{\frac 1 2}R)\sqrt{\mu_\c}$, we see that
$F$ is a stationary solution to \eq{1} if and only if $R$ solves the
{equation}:
\be v\cdot \nabla R +\e^{-1} L_\c R= L_\c^{(1)} R +\e^{\frac 1 2} \Gamma_\c(R,R) +\e^{\frac 1 2}A_\c,\label{eqR}\ee where
\be L_\c^{(1)} R = 2\tilde\Gamma_\c(f_1+\e f_2, R)
\label{L1}\ee
\be A_\c= -\ipc [v\cdot \nabla f_2]+ 2\tilde\Gamma_\c(f_1,f_2)
+\e\Gamma_\c(f_2,f_2)- {\e\c\cdot\nabla f_2}.\label{A}\ee 
Since $u\to 0$ at $\infty$, then $f_1$ and $f_2$ also go to $0$ at $\infty$. Thus we have to impose
\be\lim_{|x|\to \infty} R =0.\label{Rinfty}\ee
\medskip
For $f\in L^1(\g_\pm)$ we define  
\be P^\c_{\gamma}f= \mu_\c^{-\frac 1 2}\mathcal{P}^w_\g[\mu_\c^{\frac 1 2} f]=\sqrt{2\pi}\frac{\mu}{\sm_\c}z_{\g_+}(f),\quad z_{\g_\pm}(f)(x)=\int_{\{v\cdot n(x)\gtrless0\}}\dd v\sm_\c(v) |v\cdot n(x)| f(x,v),\label{Pgammac}\ee
$z_{\g_\pm}(f)(x)$ being the outgoing/incoming mass flux at $x\in \pt\O$.
We will omit the index $\pm$ when there is no ambiguity.

The boundary condition for $R$ is:
\be R=P^\c_{\g} R +\e^{\frac 1 2} r,\label{bcR}\ee 
where 
\be r= {P}^\c_{\g}[ f_2-\phi_\e]- (f_2-\phi_\e),\quad \text{ on } \g_-,\label{128}\ee
with $\phi_\e$ defined as 
\be \phi_\e=\e^{-2}\mu_\c^{-\frac 1 2}[ M_{1,\e(\c+ u), 1}-\mu_\c-\e \sqrt{\mu_\c} f_1],\ee 
such that
\be |\phi_\e|\le C_\beta(|u|^2+|\c|^2)\exp[-\beta |v|^2] \quad \text{ for any }\beta<\frac 1 4.\label{estzeta}\ee
Indeed, for $x\in \pt \O$, where $u(x)=-\c$,  we have $\mu=M_{1,\e(\c+u),1}$ and hence 
\be\mu= M_{1,\e(\c+ u), 1}\Big|_{\pt \O}=\mu_\c+\e \sqrt{\mu_\c} f_1\Big|_{\pt \O} + \e^2\sqrt{\mu_\c}\phi_\e\Big|_{\pt \O}.\label{mupos}\ee
and, in consequence of
$\mu=\mathcal{P}^w_\g \mu$, on $\gamma_-$  
we have
\be\mu_\c +\e f_1\mu_\c^{\frac 1 2} +\e^2\phi_\e\mu_\c^{\frac 1 2}= \mathcal{P}^w_\g[\mu_\c +\e f_1\mu_\c^{\frac 1 2} +\e^2\phi_\e\mu_\c^{\frac 1 2}].\label{mu=Pmu}\ee
On the other hand the boundary condition \eqref{bc0} for $F$ gives on $\gamma_-$  ,
\[\mu_\c +\e f_1\mu_\c^{\frac 1 2}+\e^2 f_2 \mu_\c^{\frac 1 2} +\e^{\frac 3 2} R \mu_\c^{\frac 1 2}= \mathcal{P}^w_\g[\mu_\c +\e f_1\mu_\c^{\frac 1 2}+\e^2 f_2 \mu_\c^{\frac 1 2} +\e^{\frac 3 2} R \mu_\c^{\frac 1 2}].\]
Therefore, subtracting the last two equations
\[\e^2 f_2 \mu_\c^{\frac 1 2} +\e^{\frac 3 2} R \mu_\c^{\frac 1 2}-\e^2\phi_\e\mu_\c^{\frac 1 2}=\mathcal{P}^w_\g[\e^2 f_2 \mu_\c^{\frac 1 2} +\e^{\frac 3 2} R \mu_\c^{\frac 1 2}-\e^2\phi_\e\mu_\c^{\frac 1 2}],\]
which implies \eqref{bcR}.

Note that, from the definition of $A_\c$, it follows that 
\be\P_\c  A_\c=0 
.\label{espPbarA}\ee
Moreover, it can be checked that  
\be z_{\g_-}(r)=\int_{\{v\cdot n<0\}}\dd v \, r\sqrt{\mu_\c}n\cdot v=0.\label{r0}\ee 
From the definition of $r$ it follows that 
\be |r|_{2,-}+ |r|_\infty\lesssim |\c|.\label{stimar}\ee

\medskip
\noindent {\bf Notation}. Depending on the context, we  denote $\|f\|_p=\|f\|_{L^p(\O_x^c\times \R^3_v)}$ or $\|f\|_p=\|f\|_{L^p(\O_x^c)}$ or $\|f\|_p=\|f\|_{L^p(\pt\O)}$ for $1\le p\le \infty$. $\|f\|_\nu=\|f \nu^{\frac 1 2}\|_2$. We set 
 $|f|_{p, \pm}=\big(\int_{\g_\pm }\dd \g|f(x,v)|^p \big)^{\frac 1 p}$, with 
\be\int_{\g_\pm}f\dd\g = \int_{\pt \O}dS(x)\int_{\{v\cdot n(x)\gtrless0\}}\dd v\,|v\cdot n(x)| f(x,v).\label{dgamma}\ee  
Finally, we define \be \lbr f\rbr_{\beta,\beta'}
=\e^{-1}\|\ipc f\|_\nu + \e^{-\frac 1 2}|(1-P_\g^\c)f|_{2,+}+ \|\P_\c f\|_6 +\e^{\frac 1 2}\|\P_\c f\|_3+ \e^{\frac 1 2}\|w f\|_\infty\label{lbrrbr}\ee
with the weight function $w(v)=\langle v\rangle^{{\beta'}}\exp[\beta |v|^2]$, where $\langle v \rangle=(1+|v|^2)^{\frac 1 2}$. 

\bigskip
The main result is
\begin{theorem}\label{mainth} Let $\O$ be a $C^2$   bounded open set of $\R^3$ and $\O^c=\R^3\backslash\overline{\O}$. Fix $\c\in \R^3$ such that $0<|\c|\ll1$. 
For any  $0<\e\ll1$ consider the steady boundary value problem  
\begin{eqnarray}\begin{cases}v\cdot\nabla F=\displaystyle{\frac{1}{\e}Q(F,F)},\quad\text{ in }\O^c\label{mainprob}\\\\
F (x,v)=\displaystyle{M^w\int_{\{v\cdot n>0\}}\hskip -.6cm F \/v\cdot n \/  \dd v} \,\, \text{ on } \g_-,\\\\ \displaystyle{\lim_{|x|\to \infty}} F(x,v)=\mu_\c(v).\end{cases}\end{eqnarray}
Then
\begin{itemize}
\item
the problem \eqref{mainprob} has a positive solution which can be represented as
\be F=\mu_\c+\sm_\c[\e f_1+\e^2 f_2 +\e^{\frac 3 2} R],\ee
with $f_1$ and $f_2$ given by \eqref{f1} and \eqref{f2},  $u$ solving \eqref{INS}, and  $R$ solving \eqref{eqR},  \eqref{bcR}. 
\item $R$ satisfies the bound 
\be\lbr R\rbr_{\beta,\beta'}
\lesssim |\c|,\label{136}\ee 
 for $\beta'\ge 0$ and ${0}<\beta\ll\frac 1 4$.
\item
$R$ is unique in the ball $\{ f\, :\, \lbr f\rbr_{0,\beta'}
\lesssim |\c|\}$.\end{itemize}
\end{theorem}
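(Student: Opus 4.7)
The plan is to solve the closed equation \eqref{eqR}--\eqref{bcR} for $R$ via a linear theory combined with a nonlinear iterative scheme, while guaranteeing positivity of $F$ through a parallel iteration directly on $F$. Since $|r|_{2,-}+|r|_\infty \lesssim |\c|$ by \eqref{stimar}, with an analogous estimate on $A_\c$ inherited from the $L^p$ theory of \eqref{INS}, and the compatibility conditions $\P_\c A_\c=0$ from \eqref{espPbarA} and $z_{\g_-}(r)=0$ from \eqref{r0}, smallness of $\e$ and $|\c|$ will turn the nonlinear map into a contraction on the ball $\{\lbr R\rbr_{\beta,\beta'}\lesssim|\c|\}$ once a suitable linear estimate in the norm $\lbr\cdot\rbr_{\beta,\beta'}$ is available.

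For the linear problem, the standard $L^2$ energy estimate is obtained by multiplying by $R$ and integrating. Coercivity of $L_\c$ on $\ipc R$ combined with the boundary trace computation on $\g_\pm$, using the mass-conservation identity $|P_\g^\c R|_{2,+}=|P_\g^\c R|_{2,-}$ for diffuse reflection together with $R|_{\g_-}=P_\g^\c R+\e^{1/2}r$, produces
\[
\e^{-1}\|\ipc R\|_\nu^2 + \e^{-1}|(1-P_\g^\c)R|_{2,+}^2 \lesssim |r|_{2,-}^2 + \text{coupling to }\P_\c R.
\]
The coupling must be absorbed by separate hydrodynamic estimates, which form the heart of the argument. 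Writing $\P_\c R=(a+b\cdot v_\c+c|v_\c|^2)\sqrt{\mu_\c}$, I would test the equation against functions whose macroscopic components solve an auxiliary exterior Stokes-type problem sourced by $a,b,c$. The $L^6$ bound on $\P_\c R$ exploits the homogeneous Sobolev embedding $\dot D^{1,2}(\O^c)\hookrightarrow L^6(\O^c)$ in three dimensions, while the $L^3$ bound uses exterior-domain Stokes regularity in the Galdi spirit. A Vidav-type double iteration along backward characteristics then upgrades the result to the weighted $L^\infty$ estimate $\e^{1/2}\|wR\|_\infty$, closing the linear bound in $\lbr\cdot\rbr_{\beta,\beta'}$.

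With the linear theory in hand, the nonlinear iteration $R^{n+1}$ solving the linear problem with source $L_\c^{(1)}R^n+\e^{1/2}\G_\c(R^n,R^n)+\e^{1/2}A_\c$ is a contraction on the claimed ball, yielding existence and the bound \eqref{136}. Uniqueness in the weaker ball $\lbr\cdot\rbr_{0,\beta'}\lesssim|\c|$ follows from the same contraction argument, since $\beta$ enters only through the $L^\infty$ weight. To guarantee $F\ge 0$, I would run in parallel the loss-gain iteration
\[
v\cdot \nabla F^{n+1} + \e^{-1}F^{n+1}\!\int B(v-v_*,\o)\,F^n(v_*)\,dv_*\,d\o = \e^{-1}Q^+(F^n,F^n),
\]
with diffuse reflection boundary data; each step is a linear transport equation with nonnegative source and nonnegative inflow, hence preserves positivity, and its limit is identified with $\mu_\c+\sqrt{\mu_\c}[\e f_1+\e^2 f_2+\e^{3/2}R]$ through the uniqueness statement.

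The main obstacle I anticipate is precisely the construction of the $L^6$ and $L^3$ hydrodynamic estimates in the exterior domain. Unlike the bounded-domain setting of \cite{EGKM2}, no Poincar\'e inequality is available, and the dual Stokes problem exhibits only slow decay at infinity (reflecting the Oseen wake). Constructing test functions that accommodate this slow decay and still yield usable coupling identities for $a,b,c$ is the most delicate step; it is also what forces the particular combination of $L^6$ and $L^3$ scales in $\lbr\cdot\rbr_{\beta,\beta'}$ for the hydrodynamic part, rather than the $L^2$ norm sufficient in bounded domains.
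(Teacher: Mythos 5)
Your overall architecture (the decomposition of $F$, a linear estimate in the composite norm $\lbr\cdot\rbr_{\beta,\beta'}$, and a contraction argument) matches the paper, and your $L^6$ step via the homogeneous embedding $\dot H^1(\O^c)\hookrightarrow L^6$ is exactly the right idea. However, there are two genuine gaps.

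First, the $L^3$ estimate. ``Exterior-domain Stokes regularity in the Galdi spirit'' cannot produce the bound that is actually needed, and the issue is quantitative. Any direct resolvent estimate on the macroscopic system yields at best $\|\P_\c f\|_p\sim\e^{-1}$ for $p>2$, because the viscous and heat-conducting terms carry a factor $\e$; but closing the nonlinearity $\e^{1/2}\G_\c(R,R)$ requires $\e^{1/2}\|\P_\c R\|_3\lesssim 1$, i.e.\ a gain of half a power of $\e$. The paper obtains this by (i) rewriting the cut-off balance laws for $f^\z$ as an Oseen-type system in all of $\R^3$ and exploiting $\c\neq 0$ through the multiplier $N_{\s,\b}(k)=\e[\s|k|^2+\b i\,\c\cdot k]$, whose low-frequency $L^q$ norms are $O(\e^{-1}|\c|^{-1+\varrho})$ for $q$ up to (but below) $2$ rather than divergent as for Stokes; (ii) a delicate estimate of the total mass, momentum and energy fluxes $Q_\a$ through $\pt\O$, which needs a Bogovskii-type construction for the pressure in a collar of $\pt\O$; and (iii) interpolating the resulting $L^{p}$ bound ($p$ close to $2$), of size $\e^{-1}$, against the \emph{uniform} $L^6$ bound to land at $\e^{-1/2}$ in $L^3$. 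You correctly identify this as the main obstacle, but flagging it is not resolving it: without the interpolation mechanism in (iii) the nonlinear iteration does not close.

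Second, positivity. The steady loss--gain iteration $v\cdot\nabla F^{n+1}+\e^{-1}F^{n+1}\nu(F^n)=\e^{-1}Q^+(F^n,F^n)$ is positivity-preserving stepwise, but in the stationary setting there is no Kaniel--Shinbrot monotone envelope, hence no a priori reason for the sequence to converge; and identifying its limit ``through uniqueness'' presupposes that the limit lies in the uniqueness ball, which would force you to reprove all the $\lbr\cdot\rbr$ estimates for this entirely different scheme. The paper instead solves the modified equation $v\cdot\nabla F=\e^{-1}[Q(F^+,F^+)-2Q(\mu_\c,F^-)]$ with $F|_{\g_-}=\mathcal{P}^w_\g(F^+)$ and proves a posteriori that $F^-=0$. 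A consequence you have not accounted for is that in this scheme the compatibility conditions $\P_\c A_\c=0$ and $z_{\g_-}(r)=0$, which you invoke, are destroyed (the truncation $\bar R$ and the large-velocity cutoff introduce small but nonzero hydrodynamic sources and boundary mass fluxes); this is precisely why the linear theorem must be proved with the extra terms $\|\P_\c g\|_{6/5}$ and $\|z_\g(r)\|_2$ present in $\mathscr{M}(g,r)$.
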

 
\begin{remark} Note that while the $L^2$ norm of $\ipc R$ is bounded and actually small as $\e\to 0$, $\P_\c R$ is bounded uniformly in $\e$ only in $L^6$, while the $L^3$ and $L^\infty$ bounds of $\P_\c R$ are divergent with $\e^{-\frac 1 2}$. It turns out that that the $L^p$ norm of $\P_\c R$ is bounded  for $p>2$, but the bound is not uniform in $\e$ for $2<p<6$. This is the counterpart of the slow decay of the velocity field $u$ at infinity, which is well known in Fluid Dynamics, where it is proved that the $L^2$ norm of $u$ is unbounded. We do not know if  a similar statement is true for $R$, but it is certainly true for $f_1$ which is linear in $u$ and hence for $\e^{-1}(F-\mu_\c)$.
\end{remark}
\begin{remark}We also note that combining the estimates implied by \eqref{136}, it follows that $\|R\|_6$ is bounded uniformly in $\e$. In fact, we have $\|\P_\c R\|_6\le\lbr R\rbr\lesssim |\c|$ and 
\[\|\ipc R\|_6\le \|\ipc R\|_2^{\frac 1 3}\|R\|_\infty^{\frac 2 3}\le (\e\lbr R\rbr)^{\frac 1 3} (\e^{-\frac 1 2}\lbr R\rbr)^{\frac 2 3}\le \lbr R\rbr\lesssim |\c|.\]
Since $f_1$ and $f_2$ are also bounded in $L^6$, uniformly in $\e$, we conclude that $\e^{-1}(F-\mu_\c)$ is bounded in $L^6$ uniformly in $\e$. The condition at infinity for $F$ is verified in this sense. 
\end{remark}
\begin{remark}
The uniqueness is proved in the ball $\{ f\, :\, \lbr f\rbr_{0,\beta'}
\lesssim |\c|\}$. No exponential decay in $v$ is required for uniqueness.
\end{remark}

\medskip

In Sections \ref{energiasec}--\ref{stimaPc3} we shall consider the following linear problem:
\be\begin{cases} v\cdot \nabla f +\e^{-1} L_\c f= g, \quad\quad (x,v)\in \O^c\\f=P_\g^\c f + \e^{\frac 1 2} r,\quad\quad\quad\quad\,\,\,\/\/  (x,v)\in \g_-, \\ \displaystyle{\lim_{|x|\to \infty} }f=0.\end{cases}\label{linprob0}\ee

{By \eqref{espPbarA} and 
\eqref{r0}, 
$\P_\c g=0
$ and 
$z_{\g_-}(r)=0$ in the linearization of the problem \eqref{eqR},  \eqref{bcR}. However, to prove the positivity of the solution to \eqref{1} we are going to construct, we have to suitably modify the equation \eqref{1} and in the resulting linear problem to be studied \eqref{espPbarA} and 
\eqref{r0} is no more exact but $\P_\c g$ and 
$z_\g(r)$ is small for $\e$ small. Therefore in the next sections we shall drop the condition \eqref{espPbarA} and 
\eqref{r0}.}

We shall prove the following
\begin{theorem} \label{mainlinth}  Fixed $\c$ with $0<|\c|\ll1$, if $\e\ll 1$, the solution to the linear problem \eqref{linprob0} satisfies the inequality
\be \lbr f\rbr_{\beta,\beta'}^2\lesssim \mathscr{M}(g,r),\label{mainlinest}\ee
 where
\begin{eqnarray}&& \mathscr{M}(g,r)=\|\nu^{-\frac 1 2}\ipc g\|_2^2 +\e\|\nu^{-\frac 1 2}g\|^2_{\frac 32}+ \e^{3} \| \langle v\rangle^{-1} w g \|^2_{\infty}+\e | w r |^2_{\infty,-}+ |r|_{2,-}^2\notag\\\label{mathscrM}\\&&+\|\P_\c g\|_2^2+ \e^{-2} |\c|^{-2}\|\P_\c g\|_{\frac 6 5}^2+\e^{-1}|\c|^{-2+2\varrho}\|\P_\c g\|_{{\frac 65}^-}^2+(|\c|^{-2+2\varrho}+|\c|^{-1}\e^{-1})\|z_\g(r)\|^2_2 
,\notag\end{eqnarray}
for $\beta'\ge 0$ and $0\le \beta\ll \frac 1 4$ and $\rho>0$.\end{theorem}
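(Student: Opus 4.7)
The natural approach is to establish separate estimates for each of the five ingredients of $\lbr f\rbr_{\beta,\beta'}$ -- the microscopic piece $\e^{-1}\|\ipc f\|_\nu$, the outgoing boundary piece $\e^{-1/2}|(1-P_\g^\c)f|_{2,+}$, the macroscopic pieces $\|\P_\c f\|_6$ and $\e^{1/2}\|\P_\c f\|_3$, and the weighted $L^\infty$ piece $\e^{1/2}\|wf\|_\infty$ -- and then close the argument by combining them with Young's inequality, using the smallness of $\e$ and $|\c|$ to absorb all cross terms. Each dual quantity appearing in $\mathscr M(g,r)$ then arises, by duality, as the dual norm against one of these five ingredients; in particular the term $\e^{-2}|\c|^{-2}\|\P_\c g\|_{6/5}^2$ pairs (via Young) with $\|\P_\c f\|_6^2$, while $\e|wr|_{\infty,-}^2$ and $\e^3\|\langle v\rangle^{-1}wg\|_\infty^2$ feed into the $L^\infty$ control.

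\textbf{Steps 1 and 2: the standard estimates.} The $L^2$ energy estimate is obtained by multiplying \eqref{linprob0} by $f$ and integrating over $\O^c\times\R^3$: the coercivity $\langle L_\c f,f\rangle\gtrsim\|\ipc f\|_\nu^2$ together with the standard diffuse-reflection identity produces control of $\e^{-1}\|\ipc f\|_\nu^2+\e^{-1}|(1-P_\g^\c)f|_{2,+}^2$ in terms of $\langle g,f\rangle$, $|r|_{2,-}^2$ and the boundary flux $|z_\g(r)|_2^2$; the pairing $\langle \P_\c g,\P_\c f\rangle$ is postponed, to be absorbed through Step~3. The weighted $L^\infty$ bound on $wf$ is obtained by integrating along backward characteristics with stochastic cycles at $\pt\O$ and iterating the Duhamel formula twice (the double-Duhamel argument of Guo), exploiting the smoothing of $K$ to trade one copy of $\|wf\|_\infty$ for a bound on $\|f\|_2$. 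The unboundedness of $\O^c$ is handled by truncating trajectories at scale $\e^{-1/2}$; the outcome matches the terms $\e^3\|\langle v\rangle^{-1}wg\|_\infty^2$ and $\e|wr|_{\infty,-}^2$ of $\mathscr M$.

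\textbf{Step 3 and main obstacle: $L^6$ and $L^3$ bounds on $\P_\c f$.} Writing
\[
\P_\c f=\Bigl[a(x)+b(x)\cd v_\c+c(x)\frac{|v_\c|^2-3}{\sqrt 6}\Bigr]\sm_\c,
\]
and taking moments of \eqref{linprob0} against $\sm_\c\,\psi(v)$ for $\psi\in\{1,v_\c,|v_\c|^2\}$, one derives a perturbed stationary Stokes-type system for $(a,b,c)$ whose source depends on $\ipc f$, $\P_\c g$ and the boundary flux $z_\g(r)$. Estimating $\|\P_\c f\|_{L^p(\O^c)}$ then reduces to exterior $L^p$ regularity for the stationary Stokes system (Galdi \cite{Gal}), after testing the equation against $v_\c\sm_\c\cd\nabla\phi(x)$ for a suitably chosen $\phi$ solving an auxiliary Stokes/Laplace problem on $\O^c$. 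The central difficulty lies precisely here: $\O^c$ is unbounded, Poincar\'e fails, and the ``wake'' behind the obstacle forces $L^6$ to be the lowest Lebesgue exponent controllable uniformly in $\e$ (the counterpart of the Sobolev embedding $\dot W^{1,2}(\R^3)\hookrightarrow L^6$). The $L^3$ estimate cannot be made uniform and only holds with the extra weight $\e^{1/2}$, which is why $\P_\c f$ appears in $\lbr\cdot\rbr$ with both an unweighted $L^6$ piece and an $\e^{1/2}$-weighted $L^3$ piece; the auxiliary exponents $0<\s\ll1$ and $\vr>0$ in $\mathscr M$ quantify the loss coming from interpolating between the $L^{6/5}$ and $L^{(6/5)^-}$ duals of these two Stokes estimates and from keeping the exterior-domain $L^p$ constants finite. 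Finally, combining Steps~1--3 by Cauchy--Schwarz and Young's inequality, and using $\e,|\c|\ll1$ to absorb all cross terms, produces \eqref{mainlinest}.
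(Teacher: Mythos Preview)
Your overall architecture matches the paper: energy estimate, $L^\infty$ via double Duhamel, macroscopic $L^6$ and $L^3$ bounds, then close by absorption. Steps~1--2 are fine. Step~3, however, glosses over precisely the two parts of the argument that are new and nontrivial, and your description of the $L^3$ step would not work as stated.

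\textbf{The $L^6$ bound.} The paper does not invoke Stokes regularity. Instead it tests the weak form of \eqref{linprob0} against $\psi_c=(|v_\c|^2-5)\,v_\c\!\cdot\!\nabla\f_c\,\sm_\c$ where $-\Delta\f_c=c^5$ (and analogous choices for $a,b$). This nonlinear right-hand side is the point: it produces $\|c\|_6^6$ directly on the left, with all other volume terms controlled by $\|c\|_6^5$ times something already bounded. The boundary term from $P_\g^\c f$ is handled via a trace estimate $\|\nabla\f_c\|_{L^{4/3}(\pt\O)}\lesssim\|c\|_6^5$ together with the algebraic cancellation $\int\mu\,v(|v|^2-5)\,v\!\cdot\!n\,\dd v=0$. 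None of this is ``exterior Stokes $L^p$ theory''; the Lax--Milgram step is only used to construct $\f_c$ in $\dot H^1(\O^c)$.

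\textbf{The $L^3$ bound.} Here ``Stokes regularity from Galdi'' would fail: the exterior Stokes symbol $|k|^{-2}$ is not in $L^q$ near $k=0$ for any $q\le 3/2$, so no $L^p$ control with $p$ near $2$ is available and interpolation cannot reach $L^3$. The paper's mechanism is essentially Oseen, not Stokes: after a spatial cutoff $\zeta$ one obtains in Fourier variables the multiplier $N_{\s,\b}(k)^{-1}=[\e(\s|k|^2+i\b\,\c\!\cdot\!k)]^{-1}$, and the key Lemma~\ref{multipl} shows $\mathfrak j\,N^{-1}\in L^q$ for $3/2\le q<2$ precisely because $\c\neq0$ --- this is the origin of the exponent $\varrho$. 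The low-frequency source is then split into five pieces $\mathbf s^{(1)},\dots,\mathbf s^{(5)}$; the hardest, $\mathbf s^{(2)}$, is built from the boundary fluxes $Q_\a=-\int_{\pt\O}\!\int f\,v\!\cdot\!n\,\psi_\a$, whose smallness of order $\e$ (Lemma~\ref{estiQ}) requires the diffuse boundary condition, an averaging-over-shells argument, and the $L^6$ bound already proved. Only after these Fourier estimates give $\|\S_i f\|_p$ for $p>2$ does interpolation with the uniform $L^6$ bound yield $\e^{1/2}\|\P_\c f\|_3\lesssim\text{r.h.s.}$; the exponent $\sigma$ records how close to $2^+$ one pushes $p$ in that interpolation. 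Your one-line reduction to ``Stokes $L^p$ regularity'' misses both the Oseen gain from $\c\neq0$ and the boundary-flux analysis, and without them the $L^3$ estimate does not close.
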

\begin{remark} We note that the 
second line of \eqref{mathscrM} vanishes when 
the hydrodynamic part of $g$ and 
the net mass flux of $r$ vanishes. This is the case for the  problem \eqref{eqR}, \eqref{bcR}. In the modified problem introduced for the proof of positivity it does not vanish, but its contribution  turns out to be small.
\end{remark}

Before going into a short sketch of the arguments we use, it is worth to comment the choice of the power of  $\e$ in front of $R$, $\alpha=\frac 3 2$. Clearly, to deal with the non linear term is easier when this power is large. However we are limited by the fact that $f_2$ does not satisfy the boundary conditions and a power $\alpha>2$ would require the introduction of a boundary layer correction with serious regularity issues due to the general geometry (see \cite{GW} for the analysis of such problems). On the other hand $\alpha\le \frac 3 2$ is required to avoid a divergent contribution from the boundary terms in the energy inequality. It turns out that the value   $\alpha=\frac 3 2$ is exactly what we need to bound the non linear term thanks to the uniform estimate we are able to obtain for $\e^{\frac 1 2}\|\P_\c R\|_3$.

Our analysis relies crucially on energy inequality to control entropy production. It gives important information: the microscopic part of the solution $\ipc R$ is of order $\e$ in $L^2$ and moreover $|(1-P_\g^\c) R|_{2,+} \sim \sqrt{\e}$. 

Our main technical achievement is  establishing the linear
estimate \eqref{mainlinest},    $\lbr f\rbr^2_{\beta,\beta'}\lesssim \mathscr{M}(g,r).$ The starting point is a new $L^{6}$ estimate for $\mathbf{P}_\c f$ in Section \ref{stimal6}, which extends the one in the recent paper \cite{EGKM2},
while  the $L^{\infty }$ estimate follows directly from \cite{EGKM2}. The key observation is
that the  $L^{6}$ estimate for the macroscopic part of $R$, $\P_\c R$, is valid in the unbounded exterior region, thanks to scaling invariance in the homogeneous Sobolev space $\dot{H}^{1}$. The proof, which requires a weak formulation and a careful choice of the test functions, is also based on delicate estimates of the boundary terms.

However, to deal with the nonlinear part $\Gamma_\c(R,R)$, the $L^6$ estimate is not sufficient, some control of the $L^3$ estimate is required.
Unlike in the bounded domains, the $L^{6}$ bound alone cannot imply $L^{3}$
bound, for $|x|\rightarrow \infty$. In fact, the $L^{3}$ bound requires faster
decay as $|x|\rightarrow \infty$, which is a much stronger estimate than $L^{6}$ estimate.  This gain of lower integrability near infinity can be viewed as opposite to the velocity averaging ideas which lead to higher integrability gain for bounded $|x|$.  In fact,  starting from the bound for $L^6$  norm, we need to show bounds on lower $p$'s norms. By working on the balance laws we can prove  a uniform in $\e$ bound for $\e^{\frac 1 2}\|\mathbf{P}_\c f\|_{3}$ for $|x|\gg1,$ which is sufficient to close our estimate (Sections \ref{iterazione}). 

To this purpose, inspired by Maslova, \cite{Masl}, in Section \ref{eqbilancio}, after multiplying the equation by a smooth spatial cutoff function $\zeta$ vanishing at $\pt \O$, we rewrite the macroscopic projection of the linear Boltzmann equation for $f^\z=\z f$  as a (non closed) system for $\mathbf{P}_\c f^{\zeta }$ in the whole space (see Eq.\eqref{massFs}, \eqref{momenFs}, \eqref{enerFs}) (In \cite{Masl} a similar system was introduced to solve the steady Boltzmann equation with $\varepsilon =1$, with in-flow boundary condition and asymptotic Maxwellian with prescribed mean velocity at infinity):  
\begin{eqnarray*}
\nabla _{x}\cdot b^{\zeta }+\e\c\cdot \nabla _{x}a^{\zeta } &=&s_{0},
\\
\nabla _{x}(a^{\zeta }+c^{\zeta })-\e\mathfrak{v}\Delta b^{\zeta }+\e\c\cdot \nabla
_{x}b^{\zeta } &=&\underline{s}, \\
\nabla _{x}\cdot b^{\zeta }-\e\kappa \Delta c^{\zeta }+\frac{3}{2}\e\c\cdot
\nabla _{x}c^{\zeta } &=&s_{4},
\end{eqnarray*}
where $\P_\c f^\z= [a^\z+ b^\z\cdot v_\c+ \frac 1 2 c^\z (|v_\c|^2-3)]\sm_\c$ and the sources $s_0$, $\underline{s}$, $s_4$ depend on $f$ and on $\z$.
For $|\c|\ll1$ 
we study the above system via Fourier analysis, by means of a decomposition of $\P_\c f^\z$ into  high-frequency  and  low-frequency parts. Of course, in the large $|x|$ regime the low-frequency part is the difficult one and its treatment requires a further decomposition in different contributions, the most delicate being the one for the total mass, momentum and energy fluxes at the boundary, needed in Lemma \ref{maslova}, which are obtained thanks to the condition $\c\neq 0$, an ingredient also entering crucially in the Fluid Dynamic treatment of the problem (see e.g. \cite{Gal}). We establish in Section \ref{stimaPc3} very precise $L^{p}$ estimates $p>2$ for the different parts of $\P_\c f$, because $\c\neq 0$ ensures more integrability than in the corresponding Stokes system. It is worth to stress that such arguments, however accurate they are, only produce an estimate of $\|\P_\c f\|_p\sim \e^{-1}$, which would not be good enough for our purposes, we need  at most $\|\P_\c f\|_3\sim \e^{-\frac 1 2}$ to deal with the non linearity because of the limitation explained before. It is only thanks to the essential uniform in $\e$ estimate of $\|\P_\c f\|_6\sim 1$, that, via a careful estimate of the mass momentum and energy fluxes at the boundary in Subsection  5.3 and interpolation, we can obtain a bound  $\sqrt{\varepsilon }\|\mathbf{P}_\c f\|_{3}\sim 1$, uniform in $\varepsilon$.

It is well-known that it is challenging to prove positivity for steady
Boltzmann solutions.  We succeed in this by suitably adapting and extending the
positivity-preserving scheme of Arkeryd and Nouri \cite{AN}. When dealing with the diffuse reflection boundary condition for this new scheme we encounter an extra difficulty with a new term determining a potential violation of the vanishing net mass flux condition at the boundary, that is controlled via accurate estimates in the large velocity set and the Ukai trace theorem \cite{UA1}.

Finally we prove our main theorem in Section \ref{iterazione} via iteration, based on the linear estimate \eqref{mainlinest}. A crucial information we need to close the iteration is the smallness of the velocity field when $|\c|$ is small. This estimate is proven in the Appendix \ref{appendice}. 
\bigskip
\section{Energy estimate}\label{energiasec}

We shall use in many points   the following two lemmas whose proof is standard and can be found for example in \cite{EGKM}:
\begin{lemma}
{\label{green}} Assume that $f(x,v), \ h(x,v)\in
L^p(\Omega^c\times\mathbb{R}^3)$, $p\ge2$ and $v \cdot \nabla_x f, v \cdot \nabla_x h  \in L^{\frac p{p-1}}(\Omega^c\times\mathbb{R}^3)$ and $f
\big|_{\gamma}, h\big|_{\gamma}\in L^2(\partial\Omega\times\mathbb{R}^3)$.
Then
\begin{eqnarray}
\iint_{\Omega^c\times\mathbb{R}^3}\dd x\dd v[(v \cdot \nabla_x h) f + (v \cdot \nabla_x f)h] = \int_{\gamma_+}\dd \g
f h - \int_{\gamma_-}\dd \g f h .  \label{steadyGreen}
\end{eqnarray}

\end{lemma}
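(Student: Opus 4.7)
The plan is to reduce the identity to the classical divergence theorem via the Leibniz rule and then handle the two non-standard features of the setting: the unboundedness of $\Omega^c$ and the limited regularity of $f, h$.

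First, I would suppose $f, h \in C^1_c(\overline{\Omega^c}\times \mathbb{R}^3)$ with supports contained in $(\overline{\Omega^c}\cap B_R) \times B_N$ for some $R, N$. In this smooth case the integrand is exactly $v\cdot \nabla_x (fh)$ by the Leibniz rule. For each fixed $v$, viewing $v$ as a constant vector field on $\mathbb{R}^3$ and applying the classical divergence theorem on $\Omega^c\cap B_R$, the contribution at $|x|=R$ vanishes by the support condition, and on $\partial \Omega$ the outward normal of $\Omega^c$ coincides with $n(x)$ (which by convention points into $\Omega$). Integrating in $v$ and decomposing $\mathbb{R}^3_v$ according to the sign of $v\cdot n(x)$ produces $\int_{\gamma_+} fh\, d\gamma - \int_{\gamma_-} fh\, d\gamma$ as required, by the definition of $d\gamma$ in \eqref{dgamma}.

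Second, I would remove the compact support in $v$ with a cutoff $\chi_N(v)=\chi(|v|/N)$, passing to the limit $N\to\infty$ by dominated convergence. The volume integrals converge by H\"older, using $f,h\in L^p$ and $v\cdot\nabla f, v\cdot\nabla h \in L^{p/(p-1)}$, while the boundary terms converge using $f|_\gamma, h|_\gamma\in L^2(\partial\Omega\times\mathbb{R}^3)$ and the factor $|v\cdot n|$ in $d\gamma$. To remove the compact support in $x$, I would apply the smooth identity to $f\chi_R$ and $h\chi_R$ with $\chi_R(x)=\chi(|x|/R)$. This introduces, in addition to the desired boundary terms on $\partial \Omega$, an interior contribution involving $v\cdot\nabla_x \chi_R$ supported in the annulus $\{R\le |x|\le 2R\}$ and a spurious boundary term on $\{|x|=2R\}$ which is absent because $\chi_R$ already vanishes there. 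The annular contribution is bounded by $R^{-1}\|v\cdot (fh)\|_{L^1(R\le |x|\le 2R)}$, which goes to zero as $R\to\infty$ because $\iint v\cdot \nabla_x (fh)\in L^1$ from the hypotheses.

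Third, to reach the stated low-regularity hypotheses, I would approximate $f$ and $h$ by mollifications $f_\delta, h_\delta$ in both $x$ (after a suitable extension across $\partial \Omega$) and $v$. The DiPerna--Lions renormalization/commutator argument gives $v\cdot\nabla_x f_\delta \to v\cdot\nabla_x f$ in $L^{p/(p-1)}$ and analogously for $h$, and standard trace theorems in kinetic theory (as used in \cite{EGKM}) combined with $v\cdot\nabla f \in L^{p/(p-1)}$ and $f|_\gamma \in L^2$ guarantee that the traces $f_\delta|_\gamma\to f|_\gamma$ in $L^2_{\text{loc}}(\gamma)$. Taking $\delta\to 0$ and using the two truncations above yields the full identity.

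The main obstacle is the trace convergence in the last step: traces of generic $L^p$ functions are not well-defined, and one must rely on the fact that the full transport operator $v\cdot \nabla f$ controls the trace. This is exactly where the mixed integrability $f\in L^p$, $v\cdot\nabla f \in L^{p/(p-1)}$ and the assumption $f|_\gamma \in L^2$ enter, and where the argument genuinely uses the kinetic trace theory; the remaining steps are all dominated-convergence bookkeeping once the approximants are chosen carefully enough to keep these norms uniformly bounded.
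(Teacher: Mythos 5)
The paper offers no proof of this lemma; it is declared standard with a pointer to \cite{EGKM}, where the bounded-domain analogue is proved by essentially the route you describe (the smooth compactly supported case via the divergence theorem for the constant field $v$, removal of the cutoffs, and an approximation argument whose delicate point is the convergence of traces, controlled by the pair $f\in L^p$, $v\cdot\nabla_x f\in L^{p/(p-1)}$ together with $f|_\gamma\in L^2$ through the kinetic trace theory). Your outline therefore matches the intended argument, and you correctly isolate the trace convergence as the step that genuinely uses the hypotheses.

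The one step whose justification does not follow as written is the removal of the spatial cutoff. The annular contribution is $\iint fh\,(v\cdot\nabla_x\chi_R)$, and since $v\cdot\nabla_x\chi_R=\chi'(|x|/R)\,\frac{x\cdot v}{R|x|}$, your bound $R^{-1}\|\,|v|\,fh\,\|_{L^1(\{R\le|x|\le2R\})}$ involves the first velocity moment of $fh$, which is not controlled by the stated hypotheses: $f,h\in L^p$ and $v\cdot\nabla_xf,\ v\cdot\nabla_xh\in L^{p/(p-1)}$ give $fh\in L^{p/2}$ and $v\cdot\nabla_x(fh)\in L^1$, but no $v$-moment. Moreover, $v\cdot\nabla_x(fh)\in L^1$ only tells you that the annular term \emph{converges} as $R\to\infty$ (it is the difference of two convergent quantities in the truncated identity); it does not identify the limit as zero. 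To finish one needs either an extra decay/moment assumption on the functions --- harmless in this paper, since the lemma is only ever applied to functions with finite $\|wf\|_\infty$, i.e.\ Gaussian decay in $v$, and with enough spatial integrability --- or an averaging argument over the cutoff radius applied to a quantity actually known to be in $L^1$. As the lemma is stated, the vanishing of the contribution at spatial infinity is precisely what distinguishes the exterior domain from the bounded case treated in \cite{EGKM}, and it deserves a correct argument rather than the one you gave.
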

\begin{lemma}

\label{trace_s}Assume $\Omega_1$ is an open bounded subset of $\mathbb{R}^{3}$
with $\partial(\Omega_1\backslash\overline{\O})$ in $C^{2}$, such that $\{x\in \O^c\, |\, d(x,\O)\le 1\}\subset \O_1$. We define 
\begin{equation}  \label{non_grazing}
\gamma_{\pm}^{\delta} : = \{ (x,v) \in \gamma_{\pm} : | n(x)\cdot v | >
\delta, \ \ \delta\leq |v| \leq \frac{1}{\delta} \}.
\end{equation}
Then 
\begin{equation*}
| f\mathbf{1}_{\gamma_{\pm}^{\delta}} |_{1} \lesssim_{\delta, \Omega_1} \| f
\|_{L^1(\O_1\backslash\O)} + \| v\cdot \nabla_{x} f \|_{L^1(\O_1\backslash\O)} .
\end{equation*}
\end{lemma}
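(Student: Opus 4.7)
The plan is to prove this trace estimate by a geometric characteristics argument. The crucial observation is that on $\gamma_{\pm}^{\delta}$ the non-grazing condition $|n(x)\cdot v|>\delta$, together with the velocity bounds $\delta\le |v|\le 1/\delta$, forces the straight-line trajectory $s\mapsto x\mp sv$ to cross $\pt\O$ transversally and to remain in $\O_1\backslash\overline{\O}$ for a uniform interval of times $s\in[0,s_0]$, with $s_0=s_0(\delta,\O_1)>0$ chosen independently of $(x,v)\in\gamma_{\pm}^{\delta}$. Existence of this uniform $s_0$ follows from the $C^2$ regularity of $\pt\O$ (bounded second fundamental form), the tubular neighborhood theorem, and the compactness of the velocity slab in the definition of $\gamma_{\pm}^{\delta}$; the resulting tube lies in the unit neighborhood of $\pt\O$, which by hypothesis is contained in $\O_1\backslash\overline{\O}$.

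First I would write, for $(x,v)\in\gamma_{\pm}^{\delta}$ and $s\in[0,s_0]$, the fundamental identity along characteristics
\[
f(x,v) \;=\; f(x\mp sv,v) \;\pm\;\int_0^s v\cdot \nabla_x f(x\mp \tau v,v)\,\dd \tau,
\]
with the sign chosen so that the straight line enters $\O^c$ (minus on $\gamma_+$, plus on $\gamma_-$). Averaging this identity in $s$ on $[0,s_0]$ yields
\[
f(x,v) \;=\; \frac{1}{s_0}\int_0^{s_0} f(x\mp sv,v)\,\dd s \;\pm\;\frac{1}{s_0}\int_0^{s_0}\!\!\int_0^s v\cdot\nabla_x f(x\mp \tau v,v)\,\dd\tau\,\dd s,
\]
which trades a pointwise trace for a bulk integral plus a remainder involving $v\cdot\nabla_x f$.

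Next, I would multiply by $|n(x)\cdot v|$ and integrate in $\dd\gamma$ over $\gamma_{\pm}^{\delta}$. For each fixed $v$ with $\delta\le |v|\le 1/\delta$, the map $(x,s)\mapsto y = x\mp sv$ from $\{x\in\pt\O: |n(x)\cdot v|>\delta\}\times[0,s_0]$ into $\O_1\backslash\overline{\O}$ is a diffeomorphism onto its image (again using the transversality and the $C^2$ boundary), with Jacobian equal to $|n(x)\cdot v|$. This absorbs the surface measure factor exactly; after a Fubini swap of the $\tau$ and $s$ integrals in the remainder, one obtains
\[
|f\mathbf{1}_{\gamma_{\pm}^{\delta}}|_1 \;\lesssim\; \frac{1}{s_0}\|f\|_{L^1(\O_1\backslash\O)} + \|v\cdot \nabla_x f\|_{L^1(\O_1\backslash\O)},
\]
with implicit constant depending only on $\delta$ and $\O_1$ through $s_0$.

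The main (and really only) technical point is the construction of the uniform $s_0$: one must rule out that a trajectory starting at $x\in\pt\O$ with velocity $v\in\gamma_{\pm}^{\delta}$ re-enters $\O$ before time $s_0$. This is a standard consequence of the $C^2$ control on $\pt\O$ combined with the uniform lower bound $|n(x)\cdot v|\ge\delta$: the signed distance to $\pt\O$ along the characteristic grows at a rate $\ge \delta - C\|\pt\O\|_{C^2}\,s\,(1/\delta)^2$, which stays positive on $[0,s_0]$ provided $s_0$ is small enough in terms of $\delta$ and $\pt\O$. Everything else is an application of Fubini and the classical change-of-variables formula for characteristics, and the statement can also be quoted verbatim from the bounded-domain analog in \cite{EGKM}.
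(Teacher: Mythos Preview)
Your proposal is correct and is precisely the standard characteristics argument that the paper has in mind: the paper does not give its own proof but explicitly defers to \cite{EGKM}, where exactly this averaging-along-trajectories construction (uniform non-grazing time $s_0$, change of variables $(x,s)\mapsto x\mp sv$ with Jacobian $|n(x)\cdot v|$) is carried out. Nothing to add.
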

\begin{remark}
Since, as proved in \cite{EGKM}, page 194, eq. (3.8), $|P_\g^\c f|_{2,\pm}\lesssim |P_\g^\c f\mathbf{1}_{\gamma_{\pm}^{\delta}}|_{2,\pm}$ and $\delta^{\theta/2}\lesssim\nu^{\frac 1 2}\lesssim \delta^{-\theta/2}$, from previous lemma applied to $\nu f^2$ we get
\be|
P_\g^\c 
f|_{2,\pm}\lesssim_\delta  \|f\|_{L^2(\O_1\backslash\O)}+ \|\nu^{-\frac 1 2}v\cdot \nabla f\|_{L^2(\O_1\backslash\O)}.\label{ukai}\ee
\end{remark}
  Next two lemmas are useful to bound the boundary terms in the energy inequality:
\begin{lemma}\label{23}
\be\Big|\int_{\pt\O} \dd S\int_{\{v\cdot n>0\}}\dd v \,v\cdot n |P^\c_\g f|^2-\int_{\pt\O} dS\int_{\{v\cdot n<0\}}\dd v\, |v\cdot n| |P^\c_\g f|^2\Big|\lesssim \e|\c|\int_{\g_+}|f|^2\dd \g.\label{ineq1}\ee
\be\Big|\int_{\pt\O} \dd S\int_{\{v\cdot n>0\}}\dd v\, v\cdot n P^\c_\g f (1-P^\c_\g)f\Big|\lesssim \e|\c|\int_{\g_+}|f|^2\dd \g.\label{ineq2}\ee
\end{lemma}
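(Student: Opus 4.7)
The plan is to exploit the factorization $P^\c_\g f(x,v) = \psi(v)\,z(x)$ with
\[
\psi(v) := \sqrt{2\pi}\,\mu(v)/\sqrt{\mu_\c(v)}, \qquad z(x) := z_{\g_+}(f)(x),
\]
together with the perturbative identity $\mu/\mu_\c = e^{-\e v\cdot\c + \e^2|\c|^2/2}$, whose deviation from $1$ is $O(\e|\c|)$ when weighted against $\mu_\c$ for $\e|\c|\ll 1$. A pointwise Cauchy--Schwarz estimate
\[
z(x)^2 \lesssim \int_{\{v\cdot n > 0\}} v\cdot n\, |f(x,v)|^2\, \dd v,
\]
which holds because $\int_{v\cdot n > 0} v\cdot n\, \mu_\c\, \dd v = O(1)$ uniformly in $x$, will be used to convert $\int_{\pt\O} z(x)^2\, \dd S$ into $\int_{\g_+} |f|^2\, \dd\g$.

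For \eqref{ineq1}, I factor $z(x)^2$ out of both velocity integrals, noting that the sign flip from $|v\cdot n|$ to $v\cdot n$ on $\g_-$ combines the two half-space integrals into a single integral over $\R^3$:
\[
\int_{\{v\cdot n>0\}} v\cdot n\, \psi^2\,\dd v \;-\; \int_{\{v\cdot n<0\}} |v\cdot n|\, \psi^2\,\dd v \;=\; 2\pi\int_{\R^3} v\cdot n\,\frac{\mu^2}{\mu_\c}\, \dd v.
\]
A direct Gaussian computation, completing the square in the exponent and shifting the integration variable, gives $\int_{\R^3} v\cdot n\,\mu^2/\mu_\c\, \dd v = -\e\,(\c\cdot n)\, e^{\e^2|\c|^2}$, which is $O(\e|\c|)$ for $\e|\c|\ll 1$. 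Integrating in $x$ and invoking the pointwise bound on $z(x)^2$ yields \eqref{ineq1}.

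For \eqref{ineq2}, on $\{v\cdot n>0\}$ I expand
\[
\int v\cdot n\, (P^\c_\g f)(1-P^\c_\g)f\, \dd v = z(x)\int v\cdot n\, \psi f\, \dd v - z(x)^2\int v\cdot n\, \psi^2\, \dd v,
\]
and insert $\psi = \sqrt{2\pi}\sqrt{\mu_\c}(1+\varphi)$ with $\varphi := \mu/\mu_\c - 1$. The first integral becomes $\sqrt{2\pi}\,z(x) + \sqrt{2\pi}\int v\cdot n\,\sqrt{\mu_\c}\,\varphi f\, \dd v$, while the projection identity $\int_{v\cdot n>0} v\cdot n\,\sqrt{\mu_\c}\,\psi\, \dd v = 1$ combined with the same $\varphi$-expansion gives $\int_{v\cdot n>0} v\cdot n\,\psi^2\, \dd v = \sqrt{2\pi} + O(\e|\c|)$. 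The leading $\sqrt{2\pi}\,z(x)^2$ contributions cancel, leaving
\[
\sqrt{2\pi}\,z(x)\int v\cdot n\,\sqrt{\mu_\c}\,\varphi f\, \dd v + O(\e|\c|)\,z(x)^2,
\]
which I bound by Cauchy--Schwarz in $v$ (using $\int v\cdot n\,\mu_\c\varphi^2\, \dd v = O(\e^2|\c|^2)$), followed by AM--GM and the pointwise bound on $z(x)^2$, then integration in $x$.

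The main technical obstacle is not conceptual but rather the uniform verification of the Gaussian integrals $\int v\cdot n\,\mu_\c\varphi^2\, \dd v = O(\e^2|\c|^2)$ in the regime $\e|\c|\ll 1$: the exponential growth $e^{-\e v\cdot\c}$ implicit in $\mu/\mu_\c$ must be dominated by the Gaussian tail of $\mu_\c$, which holds comfortably for $\e|\c|$ small but must be tracked explicitly to extract the correct power of $\e|\c|$.
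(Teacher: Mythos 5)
Your proof is correct and follows essentially the same route as the paper: the same factorization $P^\c_\g f=\psi(v)z(x)$, the same cancellation of the leading $\sqrt{2\pi}\,z^2$ terms in \eqref{ineq2}, and the same Cauchy--Schwarz reduction of $\int_{\pt\O}z^2\,\dd S$ to $\int_{\g_+}|f|^2\,\dd\g$. The only cosmetic differences are that you evaluate the difference in \eqref{ineq1} as a single exact full-space Gaussian integral (the paper expands each half-space integral as $1+O(\e|\c|)$ via $\mu^2\mu_\c^{-1}=\mu-\e\mu\,\c\cdot v_\c-\e^2\mu\phi_\e\mu_\c^{-1/2}$), and that you work with $\varphi=\mu/\mu_\c-1$ directly rather than the paper's $\phi_\e$ expansion.
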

\begin{proof}
From the definition of $P^\c_\g $, 
\[\int_{\{v\cdot n\gtrless0\}}\dd v\,|v\cdot n||P^\c_\g f|^2=\sqrt{2\pi}|z_\g(f)|^2\int_{\{v\cdot n\gtrless0\}}\dd v\sqrt{2\pi}\mu^2\mu_\c^{-1}|v\cdot n|.\]
Since  by \eqref{mupos}
\[\mu^2\mu_\c^{-1}= \mu-\mu(\mu_\c-\mu)\mu_\c^{-1}= \mu - \mu[ \e\c\cdot v_\c\mu_\c +\e^2\phi_\e\sqrt{\mu_\c}]\mu_\c^{-1}=  \mu - \e\mu\c\cdot v_\c- \e^2\mu \phi_\e\mu_\c^{-\frac 1 2},\]
\begin{multline*}\int_{\{v\cdot n\gtrless0\}}\dd v\sqrt{2\pi}\mu^2\mu_\c^{-1}|v\cdot n|=\int_{\{v\cdot n\gtrless0\}}\dd v\sqrt{2\pi}|v\cdot n|[\mu-\e\mu\c\cdot v_\c-\e^2\mu \phi_\e\mu_\c^{-\frac 1 2}]\\=1-\e\int_{\{v\cdot n\gtrless0\}}\dd v\sqrt{2\pi}|v\cdot n|\mu\c\cdot(v-\e\c)-\e^2 \int_{\{v\cdot n\gtrless0\}}\dd v\sqrt{2\pi}\mu |v\cdot n|\phi_\e|\mu\mu_\c^{-\frac 1 2}=1+O(\e|\c|).\end{multline*}
The last  term is bounded because, by \eqref{mupos},  $|\phi_\e|\lesssim |\c|^2\mu_\c^{\frac 1 2^-}$.
Therefore
\[\int_{\{v\cdot n\gtrless0\}}\dd v|P_\g f|^2=\sqrt{2\pi}|z_\g(f)|^2(1+ O(\e|\c|).\]
Thus
\[\Big|\int_{\pt\O} \dd S\int_{\{v\cdot n>0\}}\dd v\, v\cdot n |P^\c_\g f|^2-\int_{\pt\O} \dd S\int_{\{v\cdot n<0\}}\dd v |v\cdot n| |P^\c_\g f|^2\Big|\lesssim O(\e|\c|)\int_{\pt\O} \dd S|z_\g(f)|^2 \]
and this proves   \eqref{ineq1}, because 
\be \int_{\pt\O}\dd S |z_\g(f)|^2\le |f|^2_{2,+}.\label{zgfl2}\ee
To prove   \eqref{ineq2} we note that
\[\int_{\{v\cdot n>0\}}\dd v\, v\cdot n P^\c_\g f (1-P^\c_\g)f
=\int_{\{v\cdot n>0\}}\dd v\, v\cdot n f  P^\c_\g f 
-\int_{\{v\cdot n>0\}}\dd v\, v\cdot n |P^\c_\g f|^2.
\]
\begin{multline*}\int_{\{v\cdot n>0\}}\dd v\, v\cdot n f  P^\c_\g f =\sqrt{2\pi} z_\g(f)\int_{\{v\cdot n>0\}}\dd v\, v\cdot n f \mu\mu_\c^{-\frac 1 2}\\=\sqrt{2\pi} z_\g(f)\int_{\{v\cdot n>0\}}\dd v\, v\cdot n f [\mu_\c^{\frac 1 2}+ (\mu-\mu_\c)\mu_\c^{-\frac 1 2}]\\= \sqrt{2\pi} |z_\g(f)|^2+ \sqrt{2\pi} |z_\g(f)|\int_{\{v\cdot n>0\}}\dd v\, v\cdot n f (\mu-\mu_\c)\mu_\c^{-\frac 1 2}
\end{multline*}
Using again $(\mu-\mu_\c)\mu_\c^{-\frac 1 2}= \e\c\cdot v_\c\mu_\c^{\frac 1 2} +\e^2\phi_\e$,
\begin{multline*}\int_{\{v\cdot n>0\}}\dd v\, v\cdot n f (\mu-\mu_\c)\mu_\c^{-\frac 1 2}\le \e|\c|\Big(\int_{\{v\cdot n>0\}}\dd v\, v\cdot n f^2\Big)^{\frac 1 2}\Big(\int_{\{v\cdot n>0\}}\dd v\, v\cdot n [|v_\c|^2\mu_\c+ \e{|\c|^{-2}}|\phi_\e|^2]\Big)^{\frac 1 2}\\\lesssim \e|\c|\Big(\int_{\{v\cdot n>0\}}\dd v\,  v\cdot n f^2\Big)^{\frac 1 2}\end{multline*}
Therefore
\[\Big|\int_{\pt\O} \dd S\int_{\{v\cdot n>0\}}\dd v\, v\cdot n P^\c_\g f (1-P^\c_\g)f\Big|\lesssim \e|\c|\int_{\pt\O} \dd S|z_\g(f)| \Big(\int_{\{v\cdot n>0\}}\dd v \, v\cdot n f^2\Big)^{\frac 1 2}\le \e|\c||f|_{2,+}^2,\]
and this   concludes the proof.
\end{proof}

\begin{lemma}\label{2.4} 
For any $\eta>0$,  
\be\Big|\int_{\pt\O} \dd S\int_{\{v\cdot n<0\}}\dd v  |v\cdot n|\e^{\frac 1 2} r  P^\c_\g f\Big| \lesssim \frac 1{\eta} \|z_\g(r)\|_2^2+\e\eta |f|_{2,+}^2+ \e^{\frac 3 2}|r|_{2,-}^2
.\label{ineq3}\ee
\end{lemma}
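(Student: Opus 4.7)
My plan is to substitute $P_\g^\c f = \sqrt{2\pi}\mu\mu_\c^{-1/2} z_{\g_+}(f)$ on $\g_-$, which, since $z_{\g_+}(f)(x)$ depends only on $x$, factorizes the left-hand side of \eqref{ineq3} as
\begin{equation*}
\sqrt{2\pi}\,\e^{1/2}\int_{\pt\O}\dd S\, z_{\g_+}(f)(x)\, J(x), \qquad J(x) := \int_{\{v\cdot n<0\}} |v\cdot n|\, r\,\frac{\mu}{\sqrt{\mu_\c}}\,\dd v.
\end{equation*}
The right-hand side of the claim features $\|z_\g(r)\|_2$ with a potentially large prefactor $1/\eta$ but $|r|_{2,-}$ only with the small prefactor $\e^{3/2}$; this asymmetry tells me that $J(x)$ must be written as the net incoming mass flux $z_{\g_-}(r)(x)$ plus a correction of size $O(\e|\c|)$, obtained by pulling out $\sqrt{\mu_\c}$ (rather than $\mu/\sqrt{\mu_\c}$) inside the $v$-integral.

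I will therefore invoke the same Taylor-type expansion exploited in the proof of Lemma \ref{23}: on $\pt\O$ one has $\mu = \mu_\c - \e\mu_\c\,\c\cdot v_\c + \e^2\sqrt{\mu_\c}\,\phi_\e$, so that
\begin{equation*}
\frac{\mu}{\sqrt{\mu_\c}} - \sqrt{\mu_\c} = -\e\sqrt{\mu_\c}\,\c\cdot v_\c + \e^2\phi_\e.
\end{equation*}
This yields the clean decomposition $J(x) = z_{\g_-}(r)(x) + E(x)$, where $E(x)$ is the integral of $r$ against the two correction terms. A Cauchy-Schwarz in $v$, using the bound $|\phi_\e|\lesssim|\c|^2\exp[-\beta|v|^2]$ from \eqref{estzeta} together with the Gaussian decay of $\sqrt{\mu_\c}$ and of $\mu_\c|v_\c|^2$, will give $|E(x)| \lesssim \e|\c|\,\|r\|_{2,-}(x)$, where $\|r\|_{2,-}^2(x) := \int_{\{v\cdot n<0\}}|v\cdot n|\,r^2\,\dd v$.

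Next, applying Cauchy-Schwarz on $\pt\O$ together with the elementary bounds $\|z_{\g_+}(f)\|_{L^2(\pt\O)}\lesssim |f|_{2,+}$ and $\int_{\pt\O}\|r\|_{2,-}(x)^2\,\dd S = |r|_{2,-}^2$, I bound the quantity of interest by
\begin{equation*}
\e^{1/2}|f|_{2,+}\,\|z_\g(r)\|_2 \;+\; \e^{3/2}|\c|\,|f|_{2,+}\,|r|_{2,-}.
\end{equation*}
A Young inequality with parameter $\eta$ on the first term produces exactly $\tfrac{1}{2\eta}\|z_\g(r)\|_2^2 + \tfrac{\e\eta}{2}|f|_{2,+}^2$. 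For the second term, another Young inequality together with the smallness of $\e$ and $|\c|$ will deliver $\e^{3/2}|r|_{2,-}^2$ plus a contribution to $|f|_{2,+}^2$ that is absorbed (after adjusting the constant hidden in $\lesssim$) into $\e\eta|f|_{2,+}^2$.

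The only delicate point in this argument is precisely the leading-order cancellation between $\mu/\sqrt{\mu_\c}$ and $\sqrt{\mu_\c}$ on $\pt\O$, which produces $z_{\g_-}(r)$ instead of a generic $L^2$ norm of $r$ in the main term; without this cancellation a brute-force Cauchy-Schwarz on $r\cdot P^\c_\g f$ would only yield $\e^{1/2}|f|_{2,+}|r|_{2,-}$, far from the refined bound we need (and useless, since the $\|z_\g(r)\|_2$ on the right-hand side is the quantity expected to be small, vanishing entirely in the original problem because of \eqref{r0}).
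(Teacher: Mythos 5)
Your proposal is correct and follows essentially the same route as the paper: substitute the explicit form of $P_\g^\c f$, split $\mu\mu_\c^{-1/2}$ into $\sqrt{\mu_\c}$ (producing the main term $z_\g(f)\,z_{\g}(r)$) plus the $O(\e|\c|)$ remainder $(\mu-\mu_\c)\mu_\c^{-1/2}$, then apply Cauchy--Schwarz and Young with parameter $\eta$ on the main term and Cauchy--Schwarz plus Young on the remainder. The only cosmetic difference is that you track explicitly the absorption of the $\e^{3/2}|\c|\,|f|_{2,+}^2$ leftover into $\e\eta|f|_{2,+}^2$, a point the paper glosses over.
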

\begin{proof}
We note that
\begin{multline*}\e^{\frac 1 2}\int_{\{v\cdot n <0\}} rP_\g^\c f\dd v |v\cdot n|=\e^{\frac 1 2}\sqrt{2\pi}z_\g(f)\int_{\{v\cdot n <0\}}\dd v\, r|v\cdot n|\mu\mu_\c^{-\frac 1 2}\\= \e^{\frac 1 2}\sqrt{2\pi}z_\g(f)z_\g(r)+\e^{\frac 1 2}\sqrt{2\pi}z_\g(f)\int_{\{v\cdot n <0\}}\dd v\, r|v\cdot n|(\mu-\mu_\c)\mu_\c^{-\frac 1 2}\end{multline*}
The integral on $\pt \O$ of the first term is bounded by 
\begin{multline*}\e^{\frac 1 2}|\sqrt{2\pi}\int_{\pt\O}\dd S|z_\g(r)|z_\g(f)|\le \frac {2\pi}{4\eta}\int_{\pt\O}\dd S|z_\g(r)|^2+ \eta\e \int_{\pt\O}\dd S|z_\g(f)|^2\\\lesssim \eta^{-1}\|z_\g(r)\|_2^2+\e\eta |f|_{2,+}^2].\end{multline*} 
The second by is bounded by 
\begin{multline*}\Big|\e^{\frac 1 2}\int_{\pt\O}\dd S|z_\g(f)|\int_{\{v\cdot n <0\}}\dd v\, r|v\cdot n|(\mu-\mu_\c)\mu_\c^{-\frac 1 2}\Big|\\\lesssim \e^{\frac 3 2}|\c|\int_{\pt\O}\dd S\Big(\int_{\{v\cdot n <0\}}\dd v\, |v\cdot n||r|^2\Big)^{\frac 1 2}|z_\g(f)|\le \e^{\frac 3 2}|\c|(|r|_{2,-}^2+|f|_{2,+}^2)\end{multline*} and we obtain \eqref{ineq3}.
\end{proof}

\medskip
For fixed $\e$ the construction of the solution to the linear problem \eqref{linprob0} is standard, see e.g. \cite{Masl}. To prove Theorem \ref{mainlinth}, we begin with the energy inequality.  

\begin{proposition}\label{propener} 
For 
$|\c|$ sufficiently small the solution to \eqref{linprob0}  
 satisfies the inequality
\begin{multline} \e^{-2} \|\ipc f\|_\nu^2 +\e^{-1}|(1-P_\g^\c)f|^2_{2,+}\lesssim \|\nu^{-\frac 1 2}\ipc g\|_2^2 +|\c|^2\|\P_\c f\|_6^2+ (1+\e^{\frac 1 2})|r|_{2,-}^2\\+(\e|\c|)^{-1}\|z_\g(r)\|_2^2+ \e^{-2} |\c|^{-2}\|\P_\c g\|_{\frac 6 5}^2+\|\P_\c g\|_2^2.\label{ener99}\end{multline}
\end{proposition}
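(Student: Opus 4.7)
The plan is to run the standard energy identity for \eqref{linprob0}, testing the equation against $f$ itself and applying Green's identity (Lemma \ref{green}), which gives
\[
\tfrac12\bigl(|f|_{2,+}^2-|f|_{2,-}^2\bigr)+\e^{-1}\!\!\iint_{\O^c\times\R^3}\!\!f\,L_\c f\,\dd x\dd v=\!\!\iint_{\O^c\times\R^3}\!\!fg\,\dd x\dd v,
\]
combined with the spectral gap of $L_\c$ in the form $\iint f L_\c f\ge\delta_0\|\ipc f\|_\nu^2$, recalling $L_\c f=L_\c\ipc f$.

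Next I would expand the boundary contribution. On $\g_-$ the boundary condition yields $|f|_{2,-}^2=|P_\g^\c f|_{2,-}^2+2\e^{\frac12}(P_\g^\c f,r)_{\g_-}+\e|r|_{2,-}^2$, while on $\g_+$ the orthogonal decomposition gives $|f|_{2,+}^2=|P_\g^\c f|_{2,+}^2+|(1-P_\g^\c)f|_{2,+}^2+2(P_\g^\c f,(1-P_\g^\c)f)_{\g_+}$. Lemma \ref{23} converts the $|P_\g^\c f|_{2,+}^2-|P_\g^\c f|_{2,-}^2$ gap and the cross term into an $O(\e|\c|)|f|_{2,+}^2$ error, and Lemma \ref{2.4} with parameter $\eta\sim|\c|$ controls the $r$-cross term by $|\c|^{-1}\|z_\g(r)\|_2^2+C\e|\c||f|_{2,+}^2+\e^{3/2}|r|_{2,-}^2$. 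The portion $\e|\c||(1-P_\g^\c)f|_{2,+}^2$ of the error is absorbed on the left; for the remaining $\e|\c||P_\g^\c f|_{2,+}^2$ I apply the Ukai-type trace inequality \eqref{ukai} and use the equation to substitute $\nu^{-\frac12}v\cdot\nabla f=\nu^{-\frac12}g-\e^{-1}\nu^{-\frac12}L_\c\ipc f$. Since $\O_1\setminus\O$ is bounded, H\"older gives $\|\P_\c f\|_{L^2(\O_1\setminus\O)}\lesssim\|\P_\c f\|_6$, and $\|\nu^{-\frac12}\P_\c g\|_2\lesssim\|\P_\c g\|_2$ by the Gaussian factor in $\P_\c g$; altogether
\[
\e|\c||P_\g^\c f|_{2,+}^2\lesssim\tfrac{|\c|}{\e}\|\ipc f\|_\nu^2+\e|\c|\,\|\P_\c f\|_6^2+\e|\c|\bigl(\|\nu^{-\frac12}\ipc g\|_2^2+\|\P_\c g\|_2^2\bigr),
\]
the first of these being absorbable into $\delta_0\e^{-1}\|\ipc f\|_\nu^2$ once $|\c|\ll\delta_0$.

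Finally I would handle the source pairing by orthogonality, $\langle g,f\rangle=\langle\ipc g,\ipc f\rangle+\langle\P_\c g,\P_\c f\rangle$: the microscopic part is controlled by $\|\nu^{-\frac12}\ipc g\|_2\|\ipc f\|_\nu$ and Young, while the hydrodynamic part is controlled via H\"older in $L^{6/5}\!-\!L^6$ together with Young's inequality tuned with weight $|\c|^2$, producing precisely $\e^{-2}|\c|^{-2}\|\P_\c g\|_{6/5}^2+|\c|^2\|\P_\c f\|_6^2$. Dividing the resulting inequality through by $\e$ puts the left-hand side into the form $\e^{-2}\|\ipc f\|_\nu^2+\e^{-1}|(1-P_\g^\c)f|_{2,+}^2$, and the remaining $\|\P_\c f\|_6^2$ prefactors from the boundary merge into $|\c|^2\|\P_\c f\|_6^2$ in the regime $\e\lesssim|\c|$. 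The principal difficulty is the interlocked bookkeeping: the Ukai trace recovers the outflow norm $|P_\g^\c f|_{2,+}^2$ only at the cost of reintroducing the very $\e^{-2}\|\ipc f\|_\nu^2$ that the spectral gap produces, so the coercivity constant of $L_\c$ must dominate the $|\c|$-losses uniformly, and the exponents of $\e$ and $|\c|$ from Lemmas \ref{23}, \ref{2.4} and \eqref{ukai} must be kept consistent with the $L^{6/5}\!-\!L^6$ Young splitting on the source side.
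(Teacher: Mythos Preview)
Your approach is essentially identical to the paper's: test against $f$, use the spectral gap of $L_\c$, expand the boundary terms via Lemmas \ref{23} and \ref{2.4} with $\eta\sim|\c|$, control $|P_\g^\c f|_{2,+}^2$ by the Ukai trace \eqref{ukai} combined with the equation, and split the source pairing with H\"older in $L^{6/5}\!-\!L^6$ and Young weighted by $\eta_2=|\c|^2$. Your closing remark about needing $\e\lesssim|\c|$ is unnecessary and slightly confused: after dividing by $\e$ the Ukai step yields a term $|\c|\|\P_\c f\|_6^2$ regardless of any relation between $\e$ and $|\c|$, and this is exactly what the paper's own computation produces (the $|\c|^2$ in the stated inequality is in fact $|\c|$ from the boundary plus $|\c|^2$ from the source splitting, and only the coarser $|\c|$ smallness is used downstream).
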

\begin{proof}
Use (\ref{steadyGreen}) with $h=f$. Then, multiplying by $\e^{-1}$  we have 
\[\frac 1 2 \e^{-1}\int_{\g_+}\dd \g f^2-\frac 1 2\e^{-1}\int_{\g_-}\dd \g f^2 +\e^{-2}\int_{\O^c\times \R^3}\dd x\dd vfL_\c f-\e^{-1}\int_{\O^c\times \R^3}\dd x\dd v fg=0.\]
We use the spectral inequality (see e.g. \cite{CIP}, Th. 7.2.5),
\[\e^{-2}\int_{\O^c\times\R^3}\dd x\dd v\, fL_\c f\gtrsim \e^{-2}\|\ipc f\|_\nu^2.\]
Moreover, using the  Holder inequality to bound $|(\P_\c f,\P_\c g)|\le \|\P_\c f\|_6\|\P_\c g\|_{\frac 6 5}$,
\begin{multline}\e^{-1}\Big|\int_{\O^c\times \R^3}\dd x\dd v fg\Big|\le \e^{-1}\|\ipc f\|_\nu\|\nu^{-\frac 1 2}\ipc g\|_2+ \e^{-1}\|\P_\c f\|_6\|\P_\c g\|_{\frac 6 5}\le\\
\eta_1\e^{-2}\|\ipc f\|_\nu^2+ \frac 1 {4\eta_1}\|\nu^{-\frac 1 2}\ipc g\|^2_2+ \eta_2\|\P_\c f\|_6^2+ \frac1 {4\eta_2 \e^2}\|\P_\c g\|^2_{\frac 6 5}   .  \end{multline}

From the boundary conditions, on $\g_-$ we have $f=P_\g^\c f + \e^{\frac 1 2} r$. Hence, using Lemma \ref{2.4},
\begin{multline}\e^{-1}\int_{\g_-}\dd \g f^2=\e^{-1}\int_{\g_-}\dd \g [P_\g^\c f+\e^{\frac 1 2} r]^2=\e^{-1} \int_{\g_-}\dd \g (|P_\g^\c f|^2+\e |r|^2+2\e^{\frac 1 2} rP_\g^\c f )\\=\e^{-1}\Big[\int_{\g_-}\dd \g |P_\g^\c f|^2 + \e|r|_{2,-}^2+\e^{\frac 3 2}|r|_{2,-}^2+\frac 1{\eta}\|z_\g(r)\|_2^2+\e\eta|f|_{2,+}^2\Big].\end{multline}
 Moreover
\[\e^{-1}\int_{\g_+}\dd \g f^2=\e^{-1}\int_{\g_+}\dd \g[(1-P_\g^\c)f]^2+ \e^{-1}\int_{\g_+}\dd \g[P_\g^\c f]^2 +2\e^{-1}\int_{\g_+}\dd \g[(1-P_\g^\c)f][P_\g^\c f].\]
The last term is bounded by (\ref{ineq2}) and the second is replaced by $ \int_{\g_-}[P_\g^\c f]^2$ by using (\ref{ineq1}). Then $(|\c|+\eta)|f|^2_{2,+}$ is split into $(|\c|+\eta)|(1-P_\g^\c)f|^2_{2,+}+(|\c|+\eta)|P_\g^\c f|^2_{2,+}$
Collecting the terms  and choosing  $\eta=|\c|$, $\eta_1$ sufficiently small and $\eta_2=|\c|^2$  we have the energy inequality
\begin{multline*}\e^{-2} \|\ipc f\|_\nu^2 +\e^{-1}|(1-P_\g^\c)f|_{2,+}\\\lesssim \|\nu^{-\frac 1 2}\ipc g\|_2^2+ \frac 1{\e^{2}|\c|^2}\|\P_\c g\|_{\frac 6 5}^2 + (1+\e^{\frac 1 2})|r|_{2,-}^2+(\e|\c|)^{-1}\|z_\g\|_2^2+ |\c||P_\g^\c f|_{2,+}^2+|\c|^2\|\P_\c f\|_6^2,\end{multline*}
where we have used
$\int_{\g_+}[(1-P_\g^\c)f]^2-|\c||(1-P_\g^\c)f|^2_{2,+}\gtrsim |(1-P_\g^\c)f|^2_{2,+}$ for $|\c|$ sufficiently small. Next we use (\ref{ukai}) to bound
\[|\c||P_\g^\c f|_{2,+}^2\le |\c|\|f\|_{L^2(\O_1\backslash\O)}^2+ |\c|\|\e^{-1}
\ipc f\|_{L^2(\O_1\backslash\O)}^2+|\c|\|\nu^{-\frac 1 2}
g\|_{L^2(\O_1\backslash\O)}^2\]
Moreover, we split $\|f\|_{L^2(\O_1\backslash\O)}^2=\|\ipc f\|_{L^2(\O_1\backslash\O)}^2+\|\P_\c f\|_{L^2(\O_1\backslash\O)}^2$ and bound
\[\|\P_\c f\|_{L^2(\O_1\backslash\O)}^2\lesssim \/_{\O_1}\|\P_\c f\|_{6}^2.\]
Finally, we bound
\[\|\nu^{-\frac 1 2}
g\|_{L^2(\O_1\backslash\O)}^2\lesssim \|{\nu}^{-\frac 1 2}
{\ipc g}\|_2^2 + \|\P_\c g\|_2^2.\]
We have so proved  \eqref{ener99}.
\end{proof}

\begin{proposition} \label{linftyest}Let $w=e^{\beta'|v|^2}\langle v\rangle^{\beta}$. 
 Then, for $0\le\beta'\ll1/4$ and $\beta\ge 0$ we have
\be\e^{\frac 1 2}\|w f\|_{L^\infty(\O^c)}\lesssim \e^{-1}\|\ipc f\|_\nu +\|\P_\c f\|_6 +  \ \e^{\frac 12} | w\, r |_{\infty}
+ \e^{\frac 32} \| \langle v\rangle^{-1} w\, g \|_{\infty} .\label{linf99}\ee
\end{proposition}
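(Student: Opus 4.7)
The plan is to run the stochastic-cycle $L^\infty$ bootstrap of \cite{EGKM2} on $h := wf$, adapted to the exterior domain and to the extra boundary source $\e^{\frac 12}r$. Multiplying the equation by $w$ and splitting $L_\c=\nu-K$ yields
\[v\cdot\nabla h + \e^{-1}\nu(v) h = \e^{-1}K_w h + wg, \qquad K_w\phi := w\, K(w^{-1}\phi),\]
with boundary identity $h|_{\g_-} = w\,P_\g^\c f + \e^{\frac 12}\,wr$. Since $0\le\beta'\ll 1/4$, the kernel $k_w(v,v')$ of $K_w$ retains Gaussian decay and is integrable in $v'$ uniformly in $v$. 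I would first solve by Duhamel along the backward characteristic $(x-sv,v)$: if the trajectory stays in $\O^c$ for all $s\ge 0$, the exponential weight $e^{-\nu(v)s/\e}$ handles the decay and no boundary contribution appears; if the trajectory strikes $\pt\O$ at time $\tb(x,v)$, I apply the boundary formula and iterate over stochastic reflection cycles in the manner of \cite{EGKM2}.

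The easy pieces come out directly. The source $wg$ contributes
\[\e^{-1}\int_0^{\infty} e^{-\nu(v) s/\e}\|wg\|_\infty\, ds \lesssim \nu(v)^{-1}\|wg\|_\infty\lesssim \|\langle v\rangle^{-1}wg\|_\infty,\]
which after multiplying by the $\e^{\frac 12}$ on the LHS matches the $\e^{\frac 32}\|\langle v\rangle^{-1}wg\|_\infty$ term. The boundary datum $\e^{\frac 12}wr$ yields $\e^{\frac 12}|wr|_\infty$ after summation over reflection cycles: each reflection costs a factor $w\mu^{1/2}\mu_\c^{-1/2}\le e^{-c|v|^2}$ for $\beta'\ll 1/4$, so the geometric series converges. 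The reflected trace $wP_\g^\c f=\sqrt{2\pi}\,w\mu\mu_\c^{-1/2}\,z_{\g_+}(f)$ is handled by one additional Duhamel step: $z_{\g_+}(f)$ depends only on $f$ in a neighborhood of $\pt\O$ and, after another characteristic integration, is controlled by a local $L^2$ norm of $f$, which is treated in the next step.

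The nontrivial term is $\e^{-1}\int_0^{\tb}e^{-\nu(v)s/\e}K_w h(x-sv,v)\,ds$, handled by the Vidav--double-Duhamel trick: substitute the Duhamel representation of $h$ back into $K_w h$ to produce a doubly-iterated kernel
\[\e^{-2}\int\!\!\!\int\!\!\!\int\!\!\!\int k_w(v,v')\,k_w(v',v'')\,|h(y,v'')|\,ds\,ds'\,dv'\,dv'',\]
with $y$ given by the two-step backward characteristic. Off the grazing set $|v'|\le\delta$ and the high-velocity set $|v'|\ge N$ (both contributing smallness $\eta(\delta,N)\|wf\|_\infty$ to be absorbed on the LHS for $\e\ll 1$), the change of variables $v''\mapsto y$ has Jacobian bounded below and converts the double velocity integration into the spatial $L^2$ norm of $f$ over a bounded neighborhood $B\subset\O^c$ of $\pt\O$. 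Splitting $\|f\|_{L^2(B)}\le \|\ipc f\|_{L^2(B)} + \|\P_\c f\|_{L^2(B)}\lesssim \|\ipc f\|_\nu + |B|^{1/3}\|\P_\c f\|_6$, and tracking $\e$ powers (the characteristic-reparametrization Jacobian produces an overall $\e^{-3/2}$, matched by the $\e^{\frac 12}$ on the LHS), yields the stated contribution $\e^{-1}\|\ipc f\|_\nu + \|\P_\c f\|_6$.

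The main obstacle is adapting the bounded-domain scheme of \cite{EGKM2} to the exterior setting. Trajectories escaping to infinity must be separated from those that actually reflect: the former are killed purely by the exponential decay $e^{-\nu(v)s/\e}$, while the latter feed the stochastic cycles. A related difficulty is localizing the spatial change of variables to a bounded region; this is resolved by the combined effect of the $e^{-\nu(v)s/\e}$ weights and the Gaussian decay of $k_w$, which together confine the relevant integration to a bounded set, so that Lemma \ref{trace_s} and the $L^6$ bound of $\P_\c f$ may be invoked as above. Beyond these two adaptations, the argument follows the template of \cite{EGKM2}, and the smallness of $\e$ allows the self-referential $\eta\|wf\|_\infty$ remainder to be absorbed into the LHS, delivering \eqref{linf99}.
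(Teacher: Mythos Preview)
Your approach is essentially the one the paper defers to: the paper's entire proof is ``As in \cite{EGKM2}'', and you have outlined precisely the stochastic-cycle/double-Duhamel argument of \cite{EGKM2}, with the correct observation that in the exterior domain one must separate the trajectories that escape to infinity (handled by the exponential weight $e^{-\nu s/\e}$ and the condition $f\to 0$ at infinity) from those that feed the reflection cycles.

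Two small slips to fix. First, in the Duhamel contribution of the source $wg$ there is no $\e^{-1}$ prefactor: the equation is $v\cdot\nabla h+\e^{-1}\nu h=\e^{-1}K_wh+wg$, so the $g$-term produces $\int_0^{\tb}e^{-\nu s/\e}\,wg\,ds\lesssim \e\,\nu^{-1}\|wg\|_\infty$, which after multiplying by $\e^{1/2}$ gives the correct $\e^{3/2}\|\langle v\rangle^{-1}wg\|_\infty$; your displayed line has an extra $\e^{-1}$ that would yield $\e^{1/2}$ instead. Second, after the change of variables $v'\mapsto y$ the relevant spatial region is not a fixed neighborhood of $\pt\O$ but a ball of bounded ($\e$-dependent) radius around the running point $x$; the H\"older step $\|\P_\c f\|_{L^2(B)}\lesssim |B|^{1/3}\|\P_\c f\|_6$ still applies because $|B|$ is uniformly bounded, so your conclusion is unaffected.
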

\begin{proof} As in \cite{EGKM2}, {Prop. 2.6}. \end{proof}

\section{$L^6$ estimate of $\P_\c f$}\label{stimal6}
Given $g$ and $r$, we consider the weak version of the linear problem \eqref{linprob0}:
for any  test function $\psi$,
\begin{multline}\int_{\g_+} \dd\g  f\psi-\int_{\O^c\times \R^3}\dd x\dd v\,  f v\cdot \nabla \psi +\e^{-1}\int _{\O^c\times \R^3}\dd x\dd v \, \psi L_\c f\\= \int_{\O^c\times \R^3}\dd x\dd v \, g \psi+\int_{\g_-} \dd\g   (P_\g ^\c f+\e^{\frac 1 2} r)\psi.\label{weak}\end{multline}

Remind that $\P_\c f= \sqrt{\mu_\c}[ a +b\cdot v_\c +\frac1 2({|v_\c|^2-3})]$.
To get a $L^6$ bound on $\P_\c f$ we bound separately the functions $a$, $b$ and $c$ by means of suitable choices of the test functions $\psi$. To this end we will need to solve $-\Delta \phi =h\in L^{6/5}(\Omega ^{c})$ with Dirichlet or Neumann boundary conditions. 
\begin{lemma}
For exterior domain $\Omega ^{c}$ with $C^2$ boundary $\partial \Omega $, there exists a unique solution to $-\Delta \phi =h\in
L^{6/5}(\Omega ^{c})$ \bigskip with either Dirichlet or Neumann boundary conditions such that 
\begin{equation}
\|\nabla \phi \|_{L^{2}}+\|\phi \|_{L^{6}}+\|\nabla ^{2}\phi
\|_{L^{6/5}}\leq \|h\|_{L^{6/5}}.  
\label{phih1}\end{equation}
\end{lemma}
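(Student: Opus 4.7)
The plan is to obtain the first-order estimates by a variational argument in homogeneous Sobolev space, and then upgrade to the $W^{2,6/5}$ bound by cutting $\phi$ into a near-boundary piece and a far-field piece, reducing the former to $W^{2,p}$ regularity on a bounded $C^2$ domain and the latter to the whole-space Calder\'on--Zygmund theorem. The central reason this works in $\R^3$ (as opposed to $\R^2$) is that the homogeneous embedding $\dot H^1\hookrightarrow L^6$ is scaling invariant, so it survives on the exterior domain $\O^c$ without any weight.

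First I would define $\dot H^1(\O^c)$ (with Dirichlet trace zero on $\pt\O$ for the Dirichlet problem, or without trace constraint for the Neumann problem) as the completion of the appropriate class of smooth compactly supported functions in the norm $\|\nabla\cdot\|_2$. The bilinear form $a(u,v)=\int_{\O^c}\nabla u\cd\nabla v\,\dd x$ is continuous and coercive on this space, and the linear functional $v\mapsto\int_{\O^c} hv\,\dd x$ is bounded by $\|h\|_{6/5}\|v\|_6\lesssim \|h\|_{6/5}\|\nabla v\|_2$ via H\"older and Sobolev. Lax--Milgram then produces a unique weak solution $\phi$ with $\|\nabla\phi\|_2\lesssim \|h\|_{6/5}$, and the same Sobolev embedding upgrades this to $\|\phi\|_6\lesssim \|h\|_{6/5}$. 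Note that no compatibility condition is needed for the Neumann case because the built-in decay at infinity of $\dot H^1$ functions absorbs the flux that would otherwise have to vanish on a bounded domain.

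For the second-order bound I would pick a smooth cutoff $\chi$ equal to $1$ in a neighborhood of $\pt\O$ and vanishing outside a ball $B_R\supset\O$, and split $\phi=\chi\phi+(1-\chi)\phi$. On the far-field piece, extended by zero, one has $-\Delta[(1-\chi)\phi]=(1-\chi)h+[\Delta,\chi]\phi$ in all of $\R^3$; the commutator is supported in a bounded annulus and is controlled in $L^{6/5}$ by $\|\nabla\phi\|_2+\|\phi\|_6$, so the whole-space Calder\'on--Zygmund theorem yields the required $L^{6/5}$ bound on the Hessian. On the near-boundary piece $\chi\phi$, viewed on the bounded $C^2$ domain $B_R\setminus\overline{\O}$ with homogeneous Dirichlet or Neumann data on $\pt\O$ and zero extension past $\pt B_R$, the classical Agmon--Douglis--Nirenberg $W^{2,6/5}$ theory applies and gives the estimate in terms of $\|\Delta(\chi\phi)\|_{6/5}$ plus lower-order terms, all of which are controlled by $\|h\|_{6/5}$ and the first-order quantities already bounded. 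The main obstacle is purely technical, namely setting up the right homogeneous function space for the exterior Neumann problem and handling the cutoff commutators cleanly; since these are entirely classical, one may also simply invoke the corresponding statements in Galdi's monograph.
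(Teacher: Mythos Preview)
Your proposal is correct and follows essentially the same route as the paper: Lax--Milgram in $\dot H^1(\O^c)$ (or $\dot H^1_0$) combined with the Sobolev embedding $\dot H^1\hookrightarrow L^6$ for the first-order bounds, then a cutoff decomposition into a far-field piece handled by the whole-space $W^{2,p}$ estimate and a near-boundary piece handled by $W^{2,p}$ theory on a bounded domain. The only cosmetic difference is that the paper phrases the decomposition via two separate cutoffs rather than $\chi$ and $1-\chi$, and it keeps the lower-order term $\|\chi\phi\|_{6/5}$ explicitly in the bounded-domain estimate before absorbing it.
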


\begin{proof}
We solve $-\Delta \phi =f\in L^{6/5}(\Omega ^{c})$ by the Lax-Milgram
theorem: define a bilinear form 
\[
\lbrack \lbrack \nabla \phi ,\nabla \psi ]]\equiv \int_{\O^c}\dd x\dd v \, \nabla \phi \cdot
\nabla \psi 
\]
with the functional $h$ defined by
\[
\langle h,\psi \rangle \equiv \int_{\O^c} \dd x\dd v \,f\psi .
\]
We choose homogeneous Sobolev space $\dot{H}^{1}(\Omega ^{c})$, with norm $\|\phi\|_{\dot{H}^{1}(\Omega ^{c})}=\|\nabla\phi \|_{L^2(\Omega ^{c})}$ for  Neumnann boundary conditions and $\dot{H}_{0}^{1}(\Omega ^{c})$ for Dirichlet boundary conditions.

We have the Sobolev embedding 
\[
\|\xi \|_{L^{6}(\Omega ^{c})}\lesssim \|\nabla \xi \|_{L^{2}(\Omega ^{c})}
\]
(see \cite{EV}, p. 263). Therefore  $\langle h,\psi
\rangle $ defines a bounded linear functional in $\dot{H}^{1}(\Omega ^{c})$ thanks to the inequality
\[
\langle h,\psi \rangle = \int_{\O^c}\dd x\dd v \, f\psi \leq \|f\|_{L^{6/5}}\|\psi
\|_{L^{6}}\leq c_{h}\|\nabla \psi \|_{L^{2}}.
\]
The existence and uniqueness as well as the first two inequalities then
follows from Lax-Milgram theorem. 
To bound $\|\nabla ^{2}\phi \|_{L^{6/5}},$ we take a smooth cutoff function $\chi$
such that 
\[
\Delta (\chi \phi )=\chi h+2\nabla \chi \cdot \nabla \phi +\Delta \chi \phi
\in L^{6/5}.
\]
If $\chi $ is zero near $\partial \Omega$, then, by the $W^{2,p}$ estimate for the whole
space, and the fact $\nabla \chi $ has compact support,  
\begin{eqnarray*}
\|\nabla ^{2}\chi \phi \|_{L^{6/5}} &\leq &\|\chi h+2\nabla \chi \cdot
\nabla \phi +\Delta \chi \phi \|_{L^{6/5}} \\
&\leq &\|h\|_{L^{6/5}}.
\end{eqnarray*}
On the other hand, if $\chi $ is zero for $|x|$ large, then by the $W^{2,p}$
estimate for mixed Dirichlet-Neumann b.c. in a fixed domain, we have 
\begin{eqnarray*}
\|\nabla ^{2}(\chi \phi )\|_{L^{6/5}} &\leq &\|\chi h+2\nabla \chi \cdot
\nabla \phi +\Delta \chi \phi \|_{L^{6/5}}+\|\chi \phi \|_{L^{6/5}} \\
&\leq &\|h\|_{L^{6/5}}.
\end{eqnarray*}
We therefore conclude \eq{phih1}. 
\end{proof}

\begin{proposition}\label{32}
If $|\c|$ is sufficiently small 
we have:
\begin{multline} \|\P_\c f\|_{6}\lesssim \e^{-1}\|\ipc f\|_\nu +\|\ipc f\|_6+
\| g{{\nu}^{-\frac 1 2}}\|_2+\e^{-\frac 1 2}|(1-P_\g^\c)f|_{2,+}\\+ \e^{\frac 1 2} |r|_\infty+ o(1)[\e^{\frac 1 2}\| f\|_{\infty}] 
.\label{P699}\end{multline}
\end{proposition}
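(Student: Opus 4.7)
The plan is to use duality on each hydrodynamic component of $\P_\c f = \sqrt{\mu_\c}\bigl[a + b\cdot v_\c + \tfrac{1}{2}c(|v_\c|^2 - 3)\bigr]$, combined with the weak formulation \eqref{weak} tested against functions built from solutions of Poisson problems on $\O^c$ supplied by the previous lemma. For brevity I describe the argument for $\|c\|_6$; the cases $\|a\|_6$ and $\|b\|_6$ are analogous.

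By Riesz duality, fix $h\in L^{6/5}(\O^c)$ with $\|h\|_{6/5}\le 1$ and solve $-\Delta\phi_c = h$ in $\O^c$ with the Neumann condition $\pt_n\phi_c|_{\pt\O} = 0$, giving $\|\nabla\phi_c\|_2 + \|\phi_c\|_6 + \|\nabla^2\phi_c\|_{6/5}\lesssim 1$. I would then test \eqref{weak} against
\[\psi_c = (|v_\c|^2 - 5)(v_\c\cdot\nabla\phi_c)\sqrt{\mu_\c},\]
whose velocity profile $(|v_\c|^2 - 5)v_\c\sqrt{\mu_\c}$ lies in $\mathop{\rm Null}(L_\c)^\perp$. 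The moment identities $\int v_{\c,i}v_{\c,j}(|v_\c|^2-5)\mu_\c\,\dd v = 0$ and $\int \tfrac{1}{2}(|v_\c|^2-3)(|v_\c|^2-5)v_{\c,i}v_{\c,j}\mu_\c\,\dd v = 5\delta_{ij}$ arrange that the principal part of $-\int \P_\c f\,v\cdot\nabla\psi_c$ reduces exactly to $5\int c\,\Delta\phi_c\,\dd x = -5\int c\,h\,\dd x$, cleanly extracting the duality pairing for $c$ (the $a$-piece is annihilated by the first identity, the $b$-piece by odd parity).

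The remaining terms in \eqref{weak} are handled individually. The collision term $\e^{-1}\int\psi_c L_\c f = \e^{-1}\int\ipc f\,L_\c\psi_c$ (using self-adjointness of $L_\c$ and $L_\c\P_\c f = 0$) is bounded by $\e^{-1}\|\ipc f\|_\nu\|\nabla\phi_c\|_2$, producing the $\e^{-1}\|\ipc f\|_\nu$ contribution on the RHS. The source pairing $\int g\psi_c$ gives $\|g\nu^{-1/2}\|_2\|\nabla\phi_c\|_2$. The microscopic part of $-\int f\,v\cdot\nabla\psi_c$ is treated by first computing the velocity moment $\int\ipc f\,v_\c\otimes v_\c(|v_\c|^2-5)\sqrt{\mu_\c}\,\dd v$ in $L^6_x$ (using the rapid velocity decay together with H\"older in $v$) and pairing with $\nabla^2\phi_c\in L^{6/5}$, producing the $\|\ipc f\|_6$ term. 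Lower-order corrections arising from $v = v_\c + \e\c$ supply the $o(1)[\e^{1/2}\|f\|_\infty]$ contribution.

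The boundary analysis is the main obstacle. Recombining the $P_\g^\c f$ parts of the boundary integrals yields $\sqrt{2\pi}\int_{\pt\O}z_\g(f)\int_{\R^3}\mu\mu_\c^{-1/2}\psi_c(v\cdot n)\,\dd v\,\dd S$; the Neumann condition $\pt_n\phi_c=0$ forces $v_\c\cdot\nabla\phi_c|_{\pt\O}$ to be purely tangential, so parity in the tangential $v_\c$-components (combined with $v\cdot n = v_{\c,n} + \e\c\cdot n$) makes the leading velocity moment vanish and leaves an $O(\e|\c|)$ remainder that can be absorbed. The $(1 - P_\g^\c)f$ piece of $\int_{\gamma_+}f\psi_c\,\dd\g$ is bounded by $|(1 - P_\g^\c)f|_{2,+}|\psi_c|_{2,+}$, with $|\psi_c|_{2,+}$ controlled via a Sobolev trace inequality applied to $\nabla\phi_c\in W^{1,6/5}(\O^c)\cap L^2(\O^c)$. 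Finally, the $\e^{1/2}r$ term gives the $\e^{1/2}|r|_\infty$ contribution after exploiting the rapid $v$-decay of $\sqrt{\mu_\c}$. For $\|a\|_6$ one uses a test function of $\psi_c$-type combined with a Stokes-type corrector (since $a$ and $c$ are coupled through the incompressibility relation); for $\|b\|_6$ one takes $\psi_b = (v_{\c,i}v_{\c,j} - \tfrac{1}{3}\delta_{ij}|v_\c|^2)\pt_j\phi_b^i\sqrt{\mu_\c}$ with $\phi_b^i$ satisfying a Dirichlet problem. In each case the Neumann/Dirichlet choice must be synchronized with the $v$-parity of the accompanying polynomial so that the $P_\g^\c f$ contribution is reduced to an $O(\e|\c|)$ remainder; this synchronization and the ensuing homogeneous-Sobolev trace estimates on the unbounded exterior are the principal technical difficulty.
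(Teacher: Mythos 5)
Your overall architecture is the paper's: test the weak formulation with $(\text{polynomial in }v_\c)\,(\nabla\phi)\sqrt{\mu_\c}$ where $\phi$ solves a Poisson problem in $\O^c$ with homogeneous-Sobolev bounds, use velocity-moment identities to isolate the component being estimated, bound the collision term by $\e^{-1}\|\ipc f\|_\nu\|\nabla\phi\|_2$, and control the boundary via moment cancellations plus the $L^{4/3}(\pt\O)$ trace estimate. The cosmetic difference that you run a genuine duality argument with arbitrary $h\in L^{6/5}$, while the paper takes $h=c^5$ (resp. $b_j^5$, $a^5$) directly and closes with Young's inequality, is immaterial.

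However, two of your three component estimates have concrete gaps as written. First, for $b$: with the single traceless test function $\psi_b=(v_{\c,i}v_{\c,j}-\tfrac13\delta_{ij}|v_\c|^2)\pt_j\phi_b^i\sqrt{\mu_\c}$, the fourth-moment identity $\int v_kv_\ell v_iv_j\mu\,\dd v=\delta_{k\ell}\delta_{ij}+\delta_{ki}\delta_{\ell j}+\delta_{kj}\delta_{\ell i}$ leaves you, after summation, with $\sum_i b_i\Delta\phi_b^i$ \emph{plus} a nonvanishing cross term proportional to $\sum_{i,j} b_j\pt_j\pt_i\phi_b^i$; this term is not controlled by anything on the right-hand side of \eqref{P699} and is exactly why the paper uses two families of test functions, first bounding $\int b_i\pt_i\pt_j\phi_b^j$ with $\psi_b^{i,j}=(v_{\c,i}^2-1)\pt_j\phi_b^j\sqrt{\mu_\c}$ and only then extracting $\|b\|_6^6$ with $\bar\psi_b^{i,j}$. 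Second, for $a$: "a $\psi_c$-type test function combined with a Stokes-type corrector" is not an argument. The paper's point is that no corrector is needed: choosing $\beta_a=10$ makes $\int(|v_\c|^2-10)(|v_\c|^2-3)v_\c\otimes v_\c\,\mu_\c\,\dd v=0$, which decouples $a$ from $c$ in the interior, and the boundary moment $\int\mu\,v_\c(|v_\c|^2-10)(n\cdot v)\,\dd v=-5n$ does \emph{not} vanish, which is precisely why $\phi_a$ must carry the Neumann condition (so that the surviving boundary contribution is $\int_{\pt\O}z_\g(f)\,n\cdot\nabla\phi_a\,\dd S=0$). Your assignment of Neumann to $\phi_c$ is harmless — for $c$ the leading boundary moment already vanishes from the $\beta_c=5$ identity, so either boundary condition works — but the Dirichlet/Neumann "synchronization" you correctly identify as the crux must be resolved component by component, and your sketch resolves it only for $c$.
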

\begin{remark}
Note that 
\begin{multline*}\|\ipc f\|_6\le \|\ipc f\|_2^{1/3}\|\ipc f\|_\infty^{2/3}=\e^{1/3}\|\e^{-1}\ipc f\|_2^{1/3}\e^{-\frac 12 \times \frac 2 3}\|\e^{\frac 1 2}\ipc f\|_\infty^{2/3}\\\lesssim
\eta \|\e^{\frac 1 2}\ipc f\|_\infty+\frac 1 {\eta} \|\e^{-1}\ipc f\|_2.\end{multline*}
Therefore by choosing $\eta$ small we obtain
\begin{multline} \|\P_\c f\|_{6}\lesssim \e^{-1}\|\ipc f\|_\nu +
\| g{{\nu}^{-\frac 1 2}}\|_2+\e^{-\frac 1 2}|(1-P_\g^\c)f|_{2,+}+ \e^{\frac 1 2} |r|_\infty+ o(1)[\e^{\frac 1 2}\| f\|_{\infty}] 
.\label{P6999}\end{multline}
\end{remark}
\begin{proof}
$\/$

\noindent\underline{Step 1}:

In order to get a bound for $c$, we choose the function $\psi_c$ in (\ref{weak}) as
\[ \psi_c=\sqrt{\mu_\c} (|v_\c|^2-\beta_c) v_\c\cdot \nabla \f_c,\]
with $\beta_c$ a suitable constant to be chosen later and $\f_c$ solution to the problem
\be -\Delta \f_c= c^5 \text{ in } \O^c,\quad \f=0 \text{ on } \pt\O.\label{deltac}\ee
Hence, by previous discussion, there is a unique $\f_c$ and
\be \|\nabla\f_c\|_{\dot H^1(\O^c)}\le \|c^5\|_{L^{\frac 6 5}(\O^c)}=\|c\|_{L^6(\O^c)}^5.\label{nablac}\ee
We start computing the term $\int_{\O^c\times \R^3} dx dv f v\cdot \nabla \psi_c$. We have:
\[\int_{\O^c\times \R^3} \dd x\dd v \, f v\cdot \nabla \psi_c=\int_{\O^c\times \R^3} \dd x\dd v \,f v_\c\cdot \nabla \psi_c+\e\int_{\O^c\times \R^3} \dd x\dd v \, f \c\cdot \nabla \psi_c.\]
By (\ref{nablac}), $f=\P_\c f+\ipc f$ and the Young inequality,
\[\Big|\e\int_{\O^c\times \R^3} \dd x\dd v \, f \c\cdot \nabla \psi_c\Big|\le \e|\c|\|c\|_6^5\|f\|_6\lesssim \e|\c|\|\P_\c f\|^6_6+\e|\c|\|\ipc f\|_6^6.\]
By using $f=\P_\c f+\ipc f$ and the expression of $\P_\c f$, we need to compute 
\be\int_{\O^c\times \R^3}\dd x\dd v \,  a \sqrt{\mu_\c} v_\c\cdot \nabla \psi_c,\label{a1}\ee 
\be
\int_{\O^c\times \R^3} \dd x\dd v \,  b\cdot v_\c\sqrt{\mu_\c} v_\c\cdot \nabla \psi_c,\label{b1}\ee
\be\int_{\O^c\times \R^3} \dd x\dd v \,  c \frac{|v_\c|^2-3}2 \sqrt{\mu_\c} v_\c\cdot \nabla \psi_c,\label {c1}\ee
\be\int_{\O^c\times \R^3} \dd x\dd v \, v\cdot \nabla \psi_c \ipc f=\int_{\O^c\times \R^3}  \dd x\dd v \,\sqrt{\mu_\c}(|v_\c|^2-\beta_c)v_\c\otimes v_\c:\nabla\otimes\nabla\f_c\ipc f.\ee
Using (\ref{nablac}), by the Young inequality, the last one is bounded by 
\[\|\ipc f\|_6 \|c\|^5_{L^6(\O^c)}\le \frac 5 6\eta \|c\|^6_{L^6(\O^c)}+\frac 1 6 \eta^{- {\frac15}} \|\ipc f\|_6^6,\]
for any $\eta>0$.

With the choice $\beta_c=5$ it results
\be \int_{\R^3} \dd v (|v_\c|^2-\beta_c) v_\c\otimes v_\c {\mu_\c=0},\label{betac}\ee 
and the term in (\ref{a1}) vanishes. The term \eqref{b1} vanishes because is odd in $v_\c$. Next we compute the term \eqref{c1}. We have
\be\int_{\R^3}\dd v\, v_\c\otimes  v_\c\frac{|v_\c|^2-3}2 (|v_\c|^2-\beta_c) \mu_\c=5\mathbf{I}.\label{betac0}\ee
Therefore
\begin{multline*}\int_{\O^c\times \R^3} \dd x\dd v \,  c \frac{|v_\c|^2-3}2 \sqrt{\mu_\c} v_\c\cdot \nabla \psi_c=\int_{\O^c} \dd x\, c \nabla\otimes \nabla \f_c :\int_{\R^3}\dd v \,  v_\c\otimes  v_\c\frac{|v_\c|^2-3}2(|v_\c|^2-\beta_c) \mu_\c=\\5\int_{\O^c}\dd x\,  c \Delta \f_c=-5\int_{\O^c}\dd x\, |c|^6=-5\|c\|_{L^6(\O^c)}^6,
\end{multline*}
because of \eqref{deltac}.
By (\ref{phih1}) and Young inequality, we have
\[\e^{-1}\Big|\int _{\O^c\times \R^3}\dd x\dd v \,  \psi_c L_\c f\Big|\le \e^{-1}\|\nabla\f_c\|_{L^2(\O^c)}\|\ipc f\|_\nu\le \frac 56 \eta \|c\|_6^6 +\frac 1 6(4\eta)^{- {\frac15}} [\e^{-1}\|\ipc f\|_\nu]^6,\]
for any $\eta>0$. 

Similarly, we get
\[\Big|\int _{\O^c\times \R^3}\dd x\dd v\,  \psi_c g\Big|\lesssim \|\nabla\f_c\|_{L^2(\O^c)}\| g{{\nu}^{-\frac 1 2}}\|_2\le \frac 56 \eta \|c\|_6^6 +\frac 1 6(4\eta)^{- {\frac15}}\| g{{\nu}^{-\frac 1 2}}\|_2^6,\]
for any $\eta>0$. 

Next we compute the boundary terms. We decompose $f$ on $\g$ as $f= P^\c_\g f +\1_{\g_+}(1-P^\c_\g)f + \1_{\g_-}\e^{\frac 1 2} r$.

First consider the term 
\[\int_\g\dd \g\, P_\g^\c f \psi_c=\int_{\pt \O} \dd S(x)\nabla \f_c\cdot \int_{\R^3} \dd v (n\cdot v)v_\c (|v_\c|^2-\beta_c)\sqrt{\mu_\c}P_\g^\c f.\]
From the expression of $P_\g^\c f$ we see that 
\[\sqrt{\mu_\c} P^\c_\g f= \sqrt{2\pi}\mu z_\g(f).\]
Therefore, we need to compute $\int_{\R^3} dv \int_{\R^3} dv (n\cdot v)v_\c (|v_\c|^2-\beta_c)\mu(v)$. We have
\[v_\c (|v-\e\c|^2-\beta_c)=v(|v|^2-\beta_c) +  {\e (-\c|v|^2-2\c\cdot v v+\beta_c \c)+ \e^2(|\c|^2v +2 \c\cdot v\c) -\e^3 |\c|^2\c}.\] 
Since the terms of order $\e$ and $\e^3$ are even in $v$, after multiplication by $v\cdot n$,
their contributions vanish (note that the integration in $v$ is on the full $\R^3$, not on $\{v\cdot n \lessgtr 0\}$. The contribution of the term of order $1$ vanishes by the choice of $\beta_c$ (\ref{betac}), so we conclude that
\[\int_\g\dd \g P_\g^\c f \psi_c=\e^2\int_{\pt \O} \dd S(x)\nabla \f_c\cdot \int_{\R^3} \dd v (n\cdot v)(|\c|^2v +2 \c\cdot v\c)\sqrt{\mu_\c}P_\g^\c f.\]
We need the Sobolev trace theorem to bound $\nabla \f_c$ on $\pt \O$.
\begin{lemma}\label{traces}
\[ \|\nabla \f_c\|_{L^{\frac 4 3}(\pt\O)}\le \|c\|_6^5.\]
\end{lemma}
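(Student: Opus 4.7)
The idea is to combine the global $W^{2,6/5}$ regularity from \eqref{phih1} with the Sobolev trace theorem followed by a Sobolev embedding on the two-dimensional surface $\pt\O$.

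First, \eqref{phih1} applied to the Dirichlet problem $-\Delta\f_c=c^5$ in $\O^c$ already yields
\[
\|\nabla^2\f_c\|_{L^{6/5}(\O^c)}+\|\nabla\f_c\|_{L^2(\O^c)}\lesssim \|c^5\|_{L^{6/5}(\O^c)}=\|c\|_{L^6(\O^c)}^5.
\]
Fix a bounded tubular neighbourhood $\O_1\setminus\O$ of $\pt\O$ inside $\O^c$, as in Lemma \ref{trace_s}. Since this region has finite measure, H\"older's inequality gives $\|\nabla\f_c\|_{L^{6/5}(\O_1\setminus\O)}\lesssim \|\nabla\f_c\|_{L^2(\O_1\setminus\O)}$. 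Combining with the bound on $\nabla^2\f_c$ yields
\[
\|\nabla\f_c\|_{W^{1,6/5}(\O_1\setminus\O)}\lesssim \|c\|_{L^6(\O^c)}^{5}.
\]

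Second, since $\pt\O$ is $C^2$, the classical trace theorem provides a bounded operator $W^{1,6/5}(\O_1\setminus\O)\to W^{1/6,\,6/5}(\pt\O)$, because $1-1/p=1-5/6=1/6$ for $p=6/5$. Applying the Sobolev embedding on the $2$-dimensional manifold $\pt\O$ with $s=1/6$, $p=6/5$, $d=2$, one computes the critical exponent from $\frac{1}{q}=\frac{1}{p}-\frac{s}{d}=\frac{5}{6}-\frac{1}{12}=\frac{3}{4}$, giving $q=4/3$ and hence $W^{1/6,\,6/5}(\pt\O)\hookrightarrow L^{4/3}(\pt\O)$. Composing these two continuous embeddings with the previous estimate proves the lemma.

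There is no substantial obstacle here; the only mild care needed is that the global estimate \eqref{phih1} controls $\nabla\f_c$ only in $L^2$ at infinity rather than $L^{6/5}$, so the trace argument must be localised to the bounded collar $\O_1\setminus\O$ where a H\"older reduction from $L^2$ to $L^{6/5}$ is available, putting both $\nabla\f_c$ and $\nabla^2\f_c$ in the same Lebesgue class on which the trace theorem applies.
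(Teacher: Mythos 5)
Your proof is correct and follows essentially the same route as the paper: both pass through the trace theorem $W^{1,6/5}(\O_1\backslash\O)\to W^{1-\frac{1}{p},\,6/5}(\pt\O)$ followed by the Sobolev embedding on the two-dimensional boundary manifold, with the same exponent computation yielding $L^{4/3}(\pt\O)$, and both control $\nabla\f_c$ in $L^{6/5}$ on the bounded collar via the global $L^2$ bound. Your explicit remark about localising to $\O_1\setminus\O$ to reconcile the $L^2$ control of $\nabla\f_c$ with the $L^{6/5}$ class needed for the trace is a point the paper leaves implicit, but the argument is the same.
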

\begin{proof}
If $\O$ is a $C^1$ domain in $\R^N$, we have the following trace estimate \cite{Leo}, p. 466:
\begin{equation}
\left(\int_{\pt \O } \dd S(x) |u|^{\frac{p(N-1)}{N-p}}\right)^{\frac{N-p}{p(N-1)
}}\le C(N,p)\left( {\int_{\O_1\backslash\O } \dd x |u|^{p}}+\int_{\O_1\backslash\O } dx |\nabla
u|^{p}\right)^{\frac{1}{p}}.
\end{equation}
This is a consequence of the trace theorem $W^{1,p}(\O_1\backslash \O ) \rightarrow W^{1- 
\frac{1}{p}, p}(\partial(\O_1\backslash\O) ),$ and the Sobolev embedding in $N-1$
dimensional sub-manifold $(W^{1-\frac{1}{p}, p} (\partial(\O_1\backslash\O)  ) \subset L^{ 
\frac{p (N-1)}{N-p} } (\O_1\backslash\O) $ for $\frac{N-p}{p (N-1)}=\frac{1}{p}- \frac{
1- \frac{1}{p}}{N-1}$). In particular, with $p=\frac 65$ and $N=3$ we have $
\frac{p(N-1)}{N-p}= 
\frac 4 3$. With $u=\nabla\f_{c}$, we have 
\begin{equation}  \label{4/3}
\| \nabla_{x} \f_{c} \|_{L^{\frac 4 3} (\partial\O )} \lesssim \|c\|_{L^6(\O_1\backslash\O)}^5\le\| c \|^{5}_{L^{6}
(\Omega^c)}.
\end{equation}\end{proof}
Therefore, by Holder inequality, 
\[\Big|\int_\g\dd \g P_\g^\c f \psi_c\Big|\le \e^2|\c|^2\| \nabla_{x} \f_{c} \|_{L^{\frac 4 3}(\pt\O)} \|
P_\g^\c f\|_{L^4(\g)}.\]
Since $\|
P_\g^\c f\|_{L^4(\g)} {\lesssim \e^{-{\frac 1 2}}[\e^{\frac 1 2}| f|_\infty]\le \e^{-{\frac 1 2}}[\e^{\frac 1 2}\| f\|_\infty]}$, we obtain 
\be\Big|\int_\g \dd\g P_\g^\c f \psi_c\Big|  {\lesssim} 
\e^2|\c|^2 \e^{-\frac 1 2}[\e^{\frac 1 2}\|f\|_\infty]\|c\|_6^5\lesssim \e^2\e^{-{\frac 1 2}}|\c|^2\frac 5 6 \|c\|_6^6+ \e^2|\c|^2\frac  16\e^{-\frac 1 2}[\e^{\frac 1 2}\| f\|_\infty]^6.\label{pgammapsic}\ee
Next, we need to bound   $\int_\g \1_{\g_+}(1-P_\g^\c) f \psi_c$. We have
\[\Big|\int_\g\dd\g \1_{\g_+}(1-P_\g^\c) f \psi_c\Big|\le \| \nabla_{x} \f_{c} \|_{L^{\frac 4 3}(\pt\O)} \|
\1_{\g_+}(1-P_\g^\c) f\|_{L^4(\g)}.\]
But 
\[\|\1_{\g_+}(1-P_\g^\c) f\|_{L^4(\g)}\le [\e^{-\frac 1 2}\|
\1_{\g_+}(1-P_\g^\c) f\|_{L^2(\g)}]^{\frac 1 2}[\e^{\frac 1 2}\|\1_{\g_+}(1-P_\g^\c) f\|_{L^\infty(\g)}]^{\frac 1 2}.\]
Thus, we conclude that, for any $\eta>0$ and $\eta'>0$
\be \Big|\int_\g \1_{\g_+}(1-P_\g^\c) f \psi_c {d\g}\Big|\lesssim \eta \|c\|_6^6 + \eta'[\e^{\frac 1 2}\| f\|_{\infty}]^6+C_{\eta, \eta'}[\e^{-\frac 1 2}\|\1_{\g_+}(1-P_\g^\c) f\|_{L^2(\g)}]^{6}\Big\}.\ee
In conclusion the boundary terms are bounded, for any $\eta>0$, $\eta'>0$, by
\[\Big|\int_{\gamma}\dd \g\psi_c
[P_\g^\c f +\1_{\g_+}(1-P_\g^\c)f]\Big|
\le \eta\|c\|^6_6 + \eta'[\e^{\frac 12}\|f\|_\infty]^6+C_{\eta,\eta'}[\e^{-\frac 1 2}\|(1-P_\g^\c)f)|_{2.\g_+}]^6.
\]
Finally, 
\[\Big|\int_{\g_-} \e^{\frac 12}\dd\g \, r \psi_c\Big |\le \|\nabla\f_c\|_{L^{4/3}(\pt\O)} \|\e^{\frac 1 2} r\|_{L^{4}(\pt\O)}\le \e^{\frac 1 2}\|c\|_6^5  |r|_\infty.\]
By collecting all the terms and choosing $\eta$ and $\eta'$ sufficiently small we conclude that
\begin{multline} 
\|c\|_6\lesssim 
\e^{-1}\|\ipc f\|_\nu +\|\ipc f\|_6+ {(\e|\c|)^{\frac 1 6}}\|\P_\c f\|_6+
\|{g}{{\nu}^{-\frac 1 2}}\|_{L^2(\O^c\times\R^3)}\\+\e^{-\frac 1 2}|(1-P_\g^\c)f|_{2,+}+ \e^{\frac 1 2} |r|_\infty+ o(1)[\e^{\frac 1 2}\|f\|_\infty].
\end{multline}

\noindent
\underline{Step 2}:

In order to estimate $b$ we shall use two test functions. The first is chosen as follows: for fixed $i,j$ 
\begin{equation}
\psi =\psi _{b}^{i,j}\equiv (v_{\c,i}^{2}-\beta _{b})\sqrt{\mu_\c }\partial _{j}
\f _{b}^{j},\quad i,j=1,\dots ,d,  \label{phibj}
\end{equation}
where $\beta _{b}$ is a constant to be determined, and 
\begin{equation}
-\Delta _{x}\f _{b}^{j}(x)=b^{5}_{j}(x),\ \ \ \f _{b}^{j}|_{\partial
\Omega }=0.  \label{jb}
\end{equation}
As before, there is a unique $\f_b^j$ and
\be \|\nabla\f_b^j\|_{\dot H^1(\O^c)}\le \||b_j|^5\|_{L^{\frac 6 5}(\O^c)}=\|b_j\|_{L^6(\O^c)}^5.\label{nablabj}\ee
We start computing the term $\int_{\O^c\times \R^3} dx dv f v\cdot \nabla \psi _{b}^{i,j}$. We have:
\[\int_{\O^c\times \R^3} \dd x \dd v\, f v\cdot \nabla \psi _{b}^{i,j}=\int_{\O^c\times \R^3} \dd x \dd v\, f v_\c\cdot \nabla \psi _{b}^{i,j}+\e\int_{\O^c\times \R^3} \dd x \dd v\, f \c\cdot \nabla \psi _{b}^{i,j}.\]
By (\ref{nablabj}) and the Young inequality,
\[\Big|\e\int_{\O^c\times \R^3} \dd x \dd v\, f \c\cdot \nabla \psi _{b}^{i,j}\Big|\le \e|\c|\|b_j\|_6^5\|f\|_6\le \e|\c|\|\P_\c f\|^6_6+\e|\c|\|\ipc f\|_6^6.\] 
By using $f=\P_\c f+\ipc f$ and the expression of $\P_\c f$, we need to compute 
\be\int_{\O^c\times \R^3} \dd x \dd v\, a \sqrt{\mu_\c} v_\c\cdot \nabla \psi _{b}^{i,j},\label{a1b}\ee 
\be
\int_{\O^c\times \R^3} \dd x \dd v\,  b\cdot v_\c \sqrt{\mu_\c} v_\c\cdot \nabla \psi _{b}^{i,j},\label{b1b}\ee
\be\int_{\O^c\times \R^3} \dd x \dd v\,  c \frac{|v_\c|^2-3}2 \sqrt{\mu_\c} v_\c\cdot \nabla \psi _{b}^{i,j},\label {c1b}\ee
\be\int_{\O^c\times \R^3} \dd x \dd v\, v\cdot \nabla \psi _{b}^{i,j} \ipc f=\int_{\O\times \R^3}\dd x \dd v\,  \sqrt{\mu_\c}(|v_\c|^2-\beta_b)^2v_\c\otimes v_\c:\nabla\otimes\nabla\f_{b}^j\ipc f.\ee
Using (\ref{nablabj}), the last one is bounded by 
\[\|\ipc\|_6 \|{b_j}\|^5_{L^6(\O^c)}\le \frac 5 6\eta \|{b_j}\|^6_{L^6(\O^c)}+\frac 1 6 \eta^{- {\frac15}} \|\ipc\|_6^6,\]
for any $\eta>0$.
By oddness the terms in (\ref{a1b}) and (\ref{c1b}) vanish.
We choose $\beta _{b}>0$ such that for all $i$, 
\begin{equation}
\int_{\mathbb{R}^{3}}[v_{\c,i}^{2}-\beta _{b}]\mu_\c (v) \mathrm{d} v=\frac 1{\sqrt{2\pi}}\int_{
\mathbb{R}}\dd v[v_{\c,1}^{2}-\beta _{b}]e^{-\frac{|v_{\c,1}|^2}2}\mathrm{d}
v_{1}=0,  \label{alpha}
\end{equation}
and we find $\beta_b=1$.
Note that for such choice of $\beta _{b}$ and for $i\neq k$, by an explicit
computation 
\begin{eqnarray*}
\int_{\R^3} (v_{\c,i}^{2}-\beta _{b})v_{\c,k}^{2}\mu_\c \mathrm{d} v &=&0, \\
\int_{\R^3} (v_{\c,i}^{2}-\beta _{b})v_{\c,i}^{2}\mu \mathrm{d} v &=& 2.
\end{eqnarray*}
As a consequence  

\begin{multline*}\sum_{k,\ell}\int_{\O^c\times \R^3} \dd x \dd v\,  b_k v_{\c,k} \sqrt{\mu_\c} v_{\c,\ell}(v_{\c,i}^{2}-\beta _{b})\sqrt{\mu_\c }\pt_\ell\partial _{j}
\f _{b}^{j}\\=\sum_{k,\ell}\d_{k,\ell}\d_{\ell,i}\int_{\O^c}\dd x\, b_k\pt_\ell\partial _{j}
\f _{b}^{j} = \int_{\O^c}\dd x\, b_i\pt_i\partial_j \f_{b}^{j}.\end{multline*}
We have also
\[\e^{-1}\Big|\int _{\O^c\times \R^3}\dd x \dd v\,  \psi_b^{i,j} L_\c f\Big|\le \e^{-1}\|\nabla\f_{b}^j\|_{L^2(\O^c)}\|\ipc f\|_\nu\le \frac 56 \eta \|b_j\|_6^6 +\frac 1 6\eta^{- {\frac15}} [\e^{-1}\|\ipc f\|_\nu]^6,\]
for any $\eta>0$. 

Similarly, we get
\[\Big|\int _{\O^c\times \R^3} \dd x \dd v\, \psi_b^{i,j} g\Big|\lesssim \|\nabla\f_{b_j}\|_{L^2(\O^c)}\| g{{\nu}^{-\frac 1 2}}\|_2\le \frac 56 \eta \|b_j\|_6^6 +\frac 1 6\eta^{- {\frac15}}\| g{{\nu}^{-\frac 1 2}}\|_2^6,\]
for any $\eta>0$. 

Next we compute the boundary terms. We decompose $f$ on $\g$ as $f= P^\c_\g f +\1_{\g_+}(1-P^\c_\g)f + \1_{\g_-}\e^{\frac 1 2} r$.
First consider the term 
\[\int_\g\dd\g P_\g^\c f \psi_{b}^{i,j}=\int_{\pt \O}\dd S(x)\nabla \f_{b}^j\cdot \int_{\R^3} \dd v (n\cdot v)(|v_{\c,i}|^2-\beta_b)\sqrt{\mu_\c}P_\g^\c f.\]
Since $\sqrt{\mu_\c} P^\c_\g f= \sqrt{2\pi}\mu z_\g(f)$, we need to compute $\int_{\R^3} dv \int_{\R^3} dv (n\cdot v) (v_{\c,i}^2-\beta_b)\mu(v)$. We have
\[ v_{\c,i}^2-\beta_b=v_i^2-\beta_b -2\e \c_iv_i+ \e^2\c_i^2.\] 
The terms of order $1$ and $\e^2$ vanish by oddness. Therefore
$$\int_\g P_\g^\c \dd\g f \psi_{b}^{i,j}=-\e\int_{\pt \O} \dd S(x)\nabla \f_{b}^j\cdot \int_{\R^3} \dd v (n\cdot v)2\c_i v_i\sqrt{\mu_\c}P_\g^\c f.$$
Thus,  {by using Lemma \ref{traces},}
\be\Big|\int_\g\dd\g P_\g^\c f \psi_{b}^{i,j}\Big|\le 
\e|\c| \e^{-\frac 1 2}[\e^{\frac 1 2}\|f\|_\infty]\|b_j\|_6^5\lesssim \e^{ {\frac 1 2}}|\c|\|b_j\|_6^6+ \e|\c|\e^{-\frac 1 2}[\e^{\frac 1 2}\|w f\|_\infty]^6.\label{pgammapsibj}\ee
Next, we need to bound $\int_\g \dd\g\1_{\g_+}(1-P_\g^\c) f \psi_{b}^{i,j}$. We have
$$\Big|\int_\g \dd\g\1_{\g_+}(1-P_\g^\c) f \psi_{b}^{i,j}\Big|\le \| \nabla_{x} \f_{b}^j \|_{L^{4/3}(\pt\O)} \|
\1_{\g_+}(1-P_\g^\c) f\|_{L^4(\g)}.$$
Thus, we conclude that, for any $\eta>0$ and $\eta'>0$
\be \Big|\int_\g\dd\g \1_{\g_+}(1-P_\g^\c) f \psi_{b}^{i,j}\Big|\lesssim \eta \|b_j\|_6^6 + \eta'[\e^{\frac 1 2}\| f\|_{\infty}]^6+C_{\eta, \eta'}[\e^{-\frac 1 2}\|
\1_{\g_+}(1-P_\g^\c) f\|_{L^2(\g)}]^{6}\Big\}.\ee
In conclusion, for any $\eta>0$, $\eta'>0$, by
$$\Big|\int_{\gamma}\dd\g\psi_b^{i,j}
[P_\g^\c f +\1_{\g_+}(1-P_\g^\c)f]\Big|
\lesssim \eta\|b_j\|^6_6 + \eta'[\e^{\frac 12}\|f\|_\infty]^6+C_{\eta,\eta'}[\e^{-\frac 1 2}\|(1-P_\g^\c)f)|_{2.\g_+}]^6.
$$
Finally, 
$$\Big|\int_{\g_-} \dd\g\,\e^{\frac 1 2} r \psi_{b}^{i,j}\Big |\lesssim \|\nabla\f_{b}^j\|_{L^{4/3}(\pt\O)} \|\e^{\frac 1 2} r\|_{L^{4}(\pt\O)}\le \e^{\frac 1 2}\|b\|_6^5  |r|_\infty.$$
By collecting all the terms and choosing $\eta$ and $\eta'$ sufficiently small we conclude that
\begin{multline} 
\Big|\int_{\O^c}\dd x\, b_i\pt_i\partial_j \f_{b}^{j}\Big|\lesssim 
(\e^{-1}\|\ipc f\|_\nu)^6 +\|\ipc f\|_6^6+
\| g{{\nu}^{-\frac 1 2}}\|_2^6+ \e|\c|\|\P_\c f\|^6_6\\+\eta \|b_j\|_6^6 +(\e^{-\frac 1 2}|(1-P_\g^\c)f|_{2,+})^6+ (\e^{\frac 12} |r|_\infty)^6+ o(1)[\e^{\frac 1 2}\|f\|_\infty]^6.\label{bfirst}
\end{multline}
To estimate $\partial _{j}(\partial _{j}\Delta ^{-1}b^{k-1}_{i})b_{i}$ for $
i\neq j $, we choose as test function \begin{equation}
\bar\psi_b^{i,j} =|v_\c|^{2}v_{\c,i}v_{\c,j}\sqrt{\mu_\c }\partial _{j}\f_{b}^{i}(x),\quad i\neq
j.  \label{phibij}
\end{equation}
We have:
$$\int_{\O^c\times \R^3} \dd x \dd v\, f v\cdot \nabla \bar\psi _{b}^{i,j}=\int_{\O^c\times \R^3} \dd x \dd v\,f v_\c\cdot \nabla \bar\psi _{b}^{i,j}+\e\int_{\O^c\times \R^3} \dd x \dd v\, f \c\cdot \nabla \bar\psi _{b}^{i,j}.$$
By (\ref{nablabj}) and the Young inequality,
$$\Big|\e\int_{\O^c\times \R^3} \dd x \dd v\, f \c\cdot \nabla \bar\psi _{b}^{i,j}\Big|\le \e|\c|\|b_j\|_6^5\|f\|_6\le \e|\c|\|\P_\c f\|^6_6+\e|\c|\|\ipc f\|_6^6.$$
By using $f=\P_\c f+\ipc f$ and the expression of $\P_\c f$, we need to compute 
\be\int_{\O^c\times \R^3} \dd x \dd v\, a \sqrt{\mu_\c} v_\c\cdot \nabla \bar\psi _{b}^{i,j},\label{a1b2}\ee 
\be
\int_{\O^c\times \R^3}  \dd x \dd v\, b\cdot v_\c \sqrt{\mu_\c} v_\c\cdot \nabla \bar\psi _{b}^{i,j},\label{b1b2}\ee
\be\int_{\O^c\times \R^3} \dd x \dd v\,  c \frac{|v_\c|^2-3}2 \sqrt{\mu_\c} v_\c\cdot \nabla \bar\psi _{b}^{i,j},\label {c1b2}\ee
\be\int_{\O^c\times \R^3} \dd x \dd v\, v\cdot \nabla \bar\psi _{b}^{i,j} \ipc f=\int_{\O^c\times \R^3} \dd x \dd v\, \sqrt{\mu_\c}(|v_\c|^2-\beta_c)^2v_\c\otimes v_\c:\nabla\otimes\nabla\f_{b}^j\ipc f.\ee
Using (\ref{nablabj}), the last one is bounded by 
$$\|\ipc\|_6 \|{b_j}\|^5_{L^6(\O^c)}\le \frac 5 6\eta \|{b_j}\|^6_{L^6(\O^c)}+\frac 1 6 \eta^{- {\frac 15}} \|\ipc\|_6^6,$$
for any $\eta>0$.

For $j\neq i$, the $O(\c)$ terms in (\ref{a1b2}), (\ref{b1b2}) and (\ref{c1b2}) vanish by oddness in $v_{\c,i}$. for the same reason the terms of order $1$ in  (\ref{a1b2}) and (\ref{c1b2}) vanish. The only surviving term is 
$$\sum_{k,\ell}   b_k \pt_\ell\pt_j\f_b^{i}\int_{\O^c\times \R^3}\dd x \dd v\,\mu_\c v_{\c,k}  v_{\c,\ell}v_{\c,i}  v_{\c, {j}}|v_{\c}|^2=21 \int_{\O^c}\dd x (b_j\pt_i\pt_j\f_b^i +b_i\pt_j^2\f_b^i),$$
because
$$\int_{\R^3}\dd v\mu_\c v_{\c,k}  v_{\c,\ell}v_{\c,i} v_{\c, {j}}|v_{\c}|^2=21(\d_{k,\ell}\d_{i,j}+\d_{k,i}\d_{\ell,j}+\d_{k,j}\d_{\ell,i}).$$
By taking the sum on $j$ this reduces to 
$\int_{\O^c} dx (b_i^6 + \sum_j b_j\pt_j\pt_i\Delta^{-1}b_i)$. The second term has been bounded in (\ref{bfirst}), thus, to complete the estimate of $\|b\|_6$ we just need to bound the remaining terms in the weak equation (\ref{weak}) for $\psi=\bar\psi_b^{i,j}$. 
As before, we have 
\[\e^{-1}\Big|\int _{\O^c\times \R^3} \dd x \dd v\, \bar \psi_b^{i,j} L_\c f\Big|\le \e^{-1}\|\nabla\f_{b}^j\|_{L^2(\O^c)}\|\ipc f\|_\nu\le \frac 56 \eta \|b_j\|_6^6 +\frac 1 6 {\eta^{-\frac 1 5}} [\e^{-1}\|\ipc f\|_\nu]^6,\]
and
\[\Big|\int _{\O^c\times \R^3} \dd x \dd v\, \bar\psi_b^{i,j} g\Big|\lesssim \|\nabla\f_{b_j}\|_{L^2(\O^c)}\| g{{\nu}^{-\frac 1 2}}\|_2\le \frac 56 \eta \|b_j\|_6^6 +\frac 1 6 {\eta^{-\frac 1 5}}\| g{{\nu}^{-\frac 1 2}}\|_2,\]
for any $\eta>0$. 
Finally, expanding, we have
\begin{multline*}
|v_{\c}|^{2}v_{\c,i}v_{\c,j}=|v|^2 v_iv_j+\e[|v|^2(\c_iv_j+\c_jv_i)-2\c\cdot vv_iv_i]\\+\e^2[|\c|^2v_iv_j ||v|^2\c_i\c_j-2\c\cdot v(\c_iv_j+\c_jv_i)]+\e^3[|\c|^2(\c_jv_j+\c_jv_i-2\c\cdot v\c_i\c_j] +\e^4|\c|^2\c_i\c_j.
\end{multline*}
Therefore in the contribution from $P_\g^\c f$ the term of order $0$ in $\e$ gives a vanishing contribution. Therefore, as before
\be\Big|\int_\g\dd\g P_\g^\c f \bar\psi_{b}^{i,j}\Big|\le 
\e|\c| \e^{-\frac 1 2}[\e^{\frac 1 2}\|f\|_\infty]\|b_j\|_6^5\lesssim \e^{ {\frac 1 2}}|\c|\|b_j\|_6^6+ \e|\c|\e^{-\frac 1 2}[\e^{\frac 1 2}\|w f\|_\infty]^6.\label{pgammapsibjbar}\ee
Moreover 
$$\Big|\int_\g \dd\g\1_{\g_+}(1-P_\g^\c) f \bar\psi_{b}^{i,j}\Big|\le \| \nabla_{x} \f_{b}^j \|_{L^{4/3}(\pt\O)} \|
\1_{\g_+}(1-P_\g^\c) f\|_{L^4(\g)}.$$
By collecting the previous bounds we conclude that
\begin{multline}\|b\|_6^6\lesssim (\e^{-1}\|\ipc f\|_\nu)^6 +\|\ipc f\|_6^6+
\| g{{\nu}^{-\frac 1 2}}\|_2^6+(\e^{-\frac 1 2}|(1-P_\g^\c)f|_{2,+})^6\\+\e|\c|\|\P_\c f\|_6+ (\e^{\frac 1 2} |r|_\infty)^6+ o(1)[\e^{\frac 1 2}\ f\|_\infty]^6.\end{multline}

\noindent\underline{Step 3}:

\noindent Then we bound $\|a\|_6$. The argument is similar to the one used for $c$, the only main difference being in the treatment of the boundary terms.
\begin{equation}
\psi =\psi _{a}\equiv (|v_{\c}|^{2}-\beta _{a})v_{\c}\cdot \nabla _{x}\f _{a}\sqrt{
\mu }=\sum_{i=1}^{d}(|v_\c|^{2}-\beta _{a})v_{\c,i}\partial _{i}\f_{a}\sqrt{\mu }
,  \label{phia}
\end{equation}
where 
\begin{equation}
-\Delta _{x}\f _{a}(x)=a^5,\quad\frac{\partial }{
\partial n}\f _{a}|_{\partial \Omega }=0, \label{Deltaa}
\end{equation}
whose solution satisfies
\be \|\nabla\f_a\|_{\dot H^1(\O^c)}\le \|\/\/|a|^5\|_{L^{\frac 6 5}(\O^c)}=\|a\|_{L^6(\O^c)}^5.\label{nablaa}\ee
We have
$$\int_{\O^c\times \R^3}\dd x \dd v\, f v\cdot \nabla \psi _a=\int_{\O^c\times \R^3} \dd x \dd v\,f (v-\e\c)\cdot \nabla \psi _a+\e\int_{\O^c\times \R^3} \dd x \dd v\, f \c\cdot \nabla \psi _a.$$
By (\ref{nablaa}) and the Young inequality,
$$\Big|\e\int_{\O^c\times \R^3} \dd x \dd v\, f \c\cdot \nabla \psi _a\Big|\le \e|\c|\|b_j\|_6^5\|f\|_6\le \e|\c|\|\P_\c f\|^6_6+\e|\c|\|\ipc f\|_6^6.$$

Proceeding as before, by using $f=\P_\c f+\ipc f$ and the expression of $\P_\c f$, we need to compute 
\be\int_{\O^c\times \R^3} \dd x \dd v\, a \sqrt{\mu_\c} v_{\c}\cdot \nabla \psi_a,\label{a1a}\ee 
\be
\int_{\O^c\times \R^3}  \dd x \dd v\, b\cdot v_{\c} \sqrt{\mu_\c} v_{\c}\cdot \nabla \psi_a,\label{b1a}\ee
\be\int_{\O^c\times \R^3}  \dd x \dd v\, c \frac{|v_{\c}|^2-3}2 \sqrt{\mu_\c} v_{\c}\cdot \nabla \psi_a,\label {c1a}\ee
\be\int_{\O^c\times \R^3} \dd x \dd v\, v\cdot \nabla \psi_a \ipc f=\int_{\O\times \R^3} \dd x \dd v\, \sqrt{\mu_\c}(|v_{\c}|^2-\beta_a)^2v_{\c}\otimes v_{\c}:\nabla\otimes\nabla\f_a\ipc f.\ee

Using (\ref{nablaa}), by the Young inequality, the last one is bounded by 
$$\|\ipc f\|_6 \|c\|^5_{L^6(\O^c)}\le \frac 5 6\eta \|a\|^6_{L^6(\O^c)}+\frac 1 6 \eta^{- {\frac 1 5}} \|\ipc\|_6^6,$$
for any $\eta>0$.

With the choice $\beta_a=10$ 
\be \int_{\R^3} \dd v (|v_{\c}|^2-\beta_a) (|v_{\c}|^2-3)v_{\c}\otimes v_{\c}=0,\label{betacc}\ee and the term in (\ref{c1a}) vanishes. The term of (\ref{b1a}) vanishes for the same reason. 

Now we compute the term in (\ref{a1a}):
we have 
\begin{multline*}\int_{\O^c\times \R^3} \dd x \dd v\,  a \sqrt{\mu_\c} v_{\c}\cdot \nabla \psi_a=\int_{\O^c}\dd x  a \nabla\otimes \nabla \f_a :\int_{\R^3}\dd v  v_{\c}\otimes v_{\c}v_{\c}(|v_{\c}|^2-\beta_a) \mu_\c\\=-5\int_{\O^c}\dd x  a \Delta \f_a=5\|a\|_{L^6(\O^c)}^6,
\end{multline*}
because of (\ref{Deltaa}). We have used 
\be\int_{\R^3}\dd x \dd v\,  v_{\c}\otimes  v_{\c}(|v_{\c}|^2-\beta_a) \mu_\c=-5\mathbf{I}.\label{betac01}\ee
As for the boundary term, we have 
\[\int_\g \dd\g P_\g^\c f \psi_a=\int_{\pt\O}\dd S\,  z_\g\nabla\f_a\cdot\int_{\R^3} \dd v \mu (v-\e\c)(|v-\e\c|^2-\beta_a)n\cdot v\]
But 
\begin{multline*}\int_{\R^3} \dd v \mu v_{\c}(|v_{\c}|^2-\beta_a)n\cdot v=\int_{\R^3} \dd v \mu v_{\c}(|v_{\c}|^2-\beta_a)n\cdot v_{\c}+\e\int_{\R^3} \dd v \mu v_{\c}(|v_{\c}|^2-\beta_a)n\cdot \c.\end{multline*}
The second term vanishes by oddness. The first by oddness is 
$$\int_{\R^3} \\d v \mu v_{\c,i}(|v_{\c}|^2-\beta_a) n\cdot v_{\c}=n_i \int_{\R^3} \dd v \mu |v_{\c}\cdot n|^2(|v_{\c}|^2-\beta_a)=-5 n_i.$$ 
Therefore 
$$\int_\g\dd \g P_\g^\c f \psi_a=\int_{\pt\O}dS z_\g n\cdot \nabla\f_a=0,$$
by the Neumann boundary condition on $\f_a$.
The term $\int_\g \dd \g\1_{\g_+}(1-P_\g^\c) f \psi_a$ is estimated as the similar term for $c$.
By collecting the estimate, we conclude that
\begin{multline} 
\|a\|_6\lesssim 
\e^{-1}\|\ipc f\|_\nu +\|\ipc f\|_6+
\| g{{\nu}^{-\frac 1 2}}\|_2+  {(\e|\c|)^{\frac 1 6}}\|\P_\c f\|_6+\e^{-\frac 1 2}|(1-P_\g^\c)f|_{2,+}\\+ \e^{\frac 1 2} |r|_\infty+ o(1)[\e^{\frac 1 2}\| f\|_\infty].
\end{multline}
In conclusion, for $|\c|$ small,
$$ \|\P_\c f\|_6\lesssim \e^{-1}\|\ipc f\|_\nu +\|\ipc\|_6+
\| g{{\nu}^{-\frac 1 2}}\|_2+\e^{-\frac 1 2}|(1-P_\g^\c)f|_{2,+}+ \e^{\frac 1 2} |r|_\infty+ o(1)[\e^{\frac 1 2}\| f\|_\infty] 
.$$\end{proof}
\section{Balance laws} \label{eqbilancio}
The mass, momentum and energy balance equations are obtained by projecting \eqref{linprob0} on the null space of $L_\c$. Since $\P_\c L_\c=0$, we have:
\be \P_\c(v\cdot \nabla f)=\P_\c g.\label{consR}\ee
More explicitly, we write $\P_\c g=(\ag+\bg\cdot v_\c +\frac 1 2(|v_\c|^2-3) \cg)\sm_\c$, 
and $\P_\c f= [a+b\cdot v_\c+\frac 1 2(|v_\c|^2-3)c]\sm_\c$.  We have
\begin{eqnarray} &&\nabla\cdot b+\e\c \cdot \nabla a=\ag,\label{anocut}
\\&&
\nabla P+\e \c \cdot \nabla b+\nabla\cdot \tau=\bg,
\label{Pnocut}
\\&&
\nabla \cdot b+\frac 3 2\e\c\cdot\nabla c+\nabla\cdot \mathfrak{q}=\cg,
\label{cnocut}
\end{eqnarray}
where
\be \tau=\int_{\R^3} \dd v\, v_\c\otimes v_\c\sqrt{\mu_\c}\ipc f,\label{stress}\ee
\be \mathfrak{q}=\int_{\R^3} \dd v\, \frac{|v_\c|^2-3}2v_\c\sqrt{\mu_\c}\ipc f,\label{heatflux}\ee
\be P=a+c.\ee
We have to complete  equations \eqref{anocut}, \eqref{Pnocut}, \eqref{cnocut} with boundary conditions following from \eqref{linprob0}, which are not immediately translated into conditions on $a, b,c$. 
Therefore, as in \cite{Masl}, we  introduce a smooth cutoff function 
$$\z(x)=\begin{cases} 1 \text{\hskip .3cm if }x\in \R^3\backslash\O\text{ and }d(x,\O)>1\\ 0\text{\hskip .3cm if } x\in \overline{\O}\end{cases}$$  and define $f^\z= \z f$ extended as $0$ in $\O$. If $f$ solves the problem \eqref{linprob0}, then  $f^\z$ solves the equation
 \be v\cdot \nabla f^\z +\e^{-1}L_\c f^\z= \z g+ \mathcal{C}\quad \text{ in \hskip .3cm} \R^3,\label{eqfzeta}\ee 
where \be \mathcal{C}=fv\cdot \nabla \z.\label{gzeta}\ee
By projecting the equation for $f^\z$ on the null space of $L_\c$ we obtain the balance laws
$$\P_\c(v\cdot \nabla  f^\z)=\P_\c \mathcal{C}+\z\P_\c  g,$$
More explicitly, with $\P_\c f^\z= [a^\z+b^\z\cdot v_\c+c^\z(|v_\c|^2-3)/2]\sqrt{\mu}_\c$ 
and $ P^\z=a^\z+c^\z$, we have, 
\begin{eqnarray} &&\nabla\cdot b^\z+\e\c \cdot \nabla a^\z=\z \ag+\int_{\R^3} \dd v\,\mathcal{C} ,\label{acut}
\\&&
\nabla P^\z+\e \c \cdot \nabla b^\z+\nabla\cdot \tau^\z=\z \bg+\int_{\R^3} \dd v\, \mathcal{C}\/v\sm_\c\label{Pcut}
\\&&
\nabla \cdot b^\z+\frac 3 2\e\c\cdot\nabla c^\z+\nabla\cdot q^\z=\z \cg+ \int_{\R^3}\dd v\,\frac 1 2\mathcal{C}\/  (|v|^2-3)\sm_\c ,
\label{ccut}
\end{eqnarray}
where
\be \tau^\z=\int_{\R^3} \dd v\, v_\c\otimes v_\c\sqrt{\mu_\c}\ipc f^\z,\label{stress1}\ee
\be q^\z=\int_{\R^3} \dd v\, \frac{|v_\c|^2-3}2v_\c\sqrt{\mu_\c}\ipc f^\z,\label{heatflux1}\ee
and $P^\z=a^\z+c^\z$.

It is convenient to write above equations in the Fourier space:
The Fourier transform is normalized as  
\be \hat f (k)=\mathcal{F}_x(f)(k)=\frac {1}{(2 \pi)^{\frac 32}}
\int_{\R^3} \dd x f(x) e^{i k\cdot x}.
\ee
We have  
\be ik\cdot v \hat f^\z +\e^{-1} L_\c f^\z=\widehat{\z g}+\hat{\mathcal{C}},\label{eqfzetak}\ee
By writing \be\hat f^\z= (\hat a^\z+\hat b^\z\cdot v_\c+ \frac 1 2\hat c^\z (|v_\c|^2-3))\sm_\c +\ipc \hat f^\z,\label{decompfou}\ee
the projection on $\text{Null}\/\/ L_\c$ is  
\begin{eqnarray}&& ik\cdot \hat b^\z +i\e k\cdot \c \hat a^\z =\int_{\R^3} \dd v\, \sm_\c \hat {\mathcal{C}}(k,v)+\widehat{\z \ag},\label{massk}\\
&& i k \hat P^\z+ i k\cdot \hat \tau^\z +i\e\c\cdot k\hat b^\z= \int_{\R^3} \dd v\, v_\c\sm_\c \hat {\mathcal{C}}(k,v)+\widehat{\z \bg},\label{momenk}\\
 &&i k \cdot \hat b^\z +\frac 3 2i\e k\cdot\c \hat c^\z+ i k\cdot \mathfrak{q}^\z= \int_{\R^3} \dd v\, \frac 1 2(|v_\c|^2-3)\sm \hat {\mathcal{C}}(k,v)+\widehat{\z \cg}.\label{enerk}\end{eqnarray}
Let
\be \mathscr{B}_\c=L_\c^{-1}[(v_\c\otimes v_\c-\frac 1 3|v_\c|^2\mathbf{I})\sm_\c],\quad \mathscr{A}_\c=L_\c^{-1}[\frac 1 2(v_\c(|v_\c|^2-5)\sm_\c)]\label{defab}\ee
The momentum equation \eqref{momenk}   then becomes
\be i k \hat P^\z +i\e\c\cdot k\hat b^\z+i k\cdot \int_{\R^3}\dd v\, L_\c\hat f^\z \mathscr{B}_\c= \int_{\R^3}\dd v\, v_\c\sm_\c \hat{\mathcal{C}}(k,v )+\widehat{\z \bg},\ee
and the energy equation \eqref{enerk}   becomes
\be i k\cdot  \hat b^\z+\frac 3 2i\e k\cdot\c \hat c^\z+ i k\cdot \int_{\R^3} \dd v\, L_\c\hat f^\z \mathscr{A}_\c= \int_{\R^3} \dd v\, \frac 1 2(|v_\c|^2-3)\sm_\c \hat{\mathcal{C}}(k,v)+\widehat{\z \cg}\ee
Substituting from the equation \eqref{eqfzetak} , $L_\c \hat f^\z= -\e ik\cdot v \hat f^\z +\e (\widehat{\z g}+\hat{\mathcal{C}})$,
\be i k (\hat a^\z +\hat c^\z)+i\e\c\cdot k\hat b^\z+  i k\cdot \int_{\R^3} \dd v\, [-i\e k\cdot v \hat f^\z      +\e (\widehat{\z g}+\hat{\mathcal{C}})] \mathscr{B}_\c=  {\widehat{\z \bg}+}\int_{\R^3} \dd v\, v_\c\sm_\c \hat{\mathcal{C}}(k,v)\ee
\be i k\cdot \hat b^\z+i\e \frac 3 2k\cdot\c \hat c^\z+  i k\cdot \int_{\R^3} \dd v\, [-i\e k\cdot v \hat f^\z +\e (\widehat{\z g}+\hat{\mathcal{C}}) \mathscr{A}_\c=  {\widehat{\z \cg}+}\int_{\R^3} \dd v\, \frac 1 2(|v_\c|^2-3)\sm_\c \hat{\mathcal{C}}(k,v).\ee
Using again (\ref{decompfou}),  the term $\int dv  \hat f^\z v \cdot \mathscr{B}_\c$ becomes
\begin{multline} \int_{\R^3} \dd v\,  \hat f^\zeta v\cdot \mathscr{B}_\c=\int_{\R^3} \dd v\,  v_\c \cdot (\hat a^\z+\hat b^\z\cdot v_\c+ \hat c^\z (|v_\c|^2-3)/2)\sm_\c\mathscr{B}_\c+\int_{\R^3} \dd v\,  v \cdot\mathscr{B}_\c\ipc \hat f^\z
\\+\e\int_{\R^3} \dd v\,  \c \cdot\mathscr{B}_\c (\hat a^\z+\hat b^\z\cdot v_\c+ \frac 1 2\hat c^\z (|v_\c|^2-3))\sm_\c.
\end{multline}
The second line vanishes because $\P_\c \mathscr{B}_\c=0$. From the properties of $\mathscr{B}_\c$, only the $\hat b$ term survives of the first part of first line. Since, again $\ipc(v_\c\otimes v_\c) \sm_\c= L_\c\mathscr{B}_\c$, we obtain
\be \int_{\R^3} \dd v\, \hat f^\z v \cdot\mathscr{B}_\c=\hat b^\z \int_{\R^3} \dd v\, \mathscr{B}_\c L_\c \mathscr{B}_\c+\int_{\R^3} \dd v\,  v\cdot \mathscr{B}_\c \ipc \hat f^\z.\ee
We have (see e.g. \cite{Ce})
$$-\int dvL_u\mathscr{B}_\c\otimes\mathscr{B}_\c\cdot  k \ k \cdot\hat b^\z\sm= \mathbb{K}b^\z$$
 with 
 \be\mathbb{K}=\mathfrak{v}[|k|^2 \mathbf{I}+\frac 2 3 k \otimes k],\label{defK}\ee with $\mathfrak{v}>0$ the viscosity coefficient. Note that for any  $y\in\mathbb{R}^3$ with $|y|=1$, 
 \be  y\cdot\mathbb{K}y\sim \mathfrak{v}|k|^2, \quad y\cdot\mathbb{K}^{-1}y\sim \mathfrak{v}^{-1}|k|^{-2}\label{bK} \ee We obtain:
\begin{multline} i k (\hat a^\z +\hat c^\z)+i\e\c\cdot k\hat b^\z+\e {\mathbb{K} b^\z}
+\e k\otimes k \cdot \int_{\R^3} \dd v\, v  \mathscr{B}_\c\ip \hat f^\z+\e i k \cdot\int (\widehat{\z g}+\hat{\mathcal{C}})\mathscr{B}_\c\\=  {\widehat{\z \bg}+}\int_{\R^3} \dd v\, v_\c\sm_\c \hat {\mathcal{C}} (k,v).\end{multline}
Similarly, since $\ipc [v_\c(|v_\c|^2-3)/2\sm_\c]= L_\c \mathscr{A}_\c$,
\begin{multline} i k \cdot \hat b^\z+i\e\frac 3 2 k\cdot\c \hat c^\z+\e\ \kappa|k|^2 \hat c^\z +\e k\otimes k \cdot \int_{\R^3} \dd v\, v \ipc \hat f^\z \mathscr{A}_\c+\e i k \cdot\int_{\R^3} \dd v\, (\widehat{\z g}+\hat{\mathcal{C}}) \mathscr{A}_\c\\=  {\widehat{\z \cg}+}\int_{\R^3} \dd v\, \frac 1 2(|v_\c|^2-3)\sm_\c \hat{\mathcal{C}}(k,v),\end{multline}
with $\kappa=\int_{\R^3} \dd v\, \mathscr{A}L\mathscr{A}$.
Therefore  the balance laws in the Fourier space are
\begin{eqnarray}
&&ik\cdot \hat b^\z +i\e k\cdot \c \hat a^\z=\hat s_0,\label{massFs}\\
&&i k  \hat P^\z +\e
 {\mathbb{K} b^\z}
+i\e\c\cdot k\hat b^\z=\hat{\underline{s}}\label{momenFs}\\
&& i k \cdot \hat b^\z+\e\ \kappa|k|^2+\frac 3 2ik\cdot\c ] \hat c^\z=\hat s_4\label{enerFs},\end{eqnarray}
where the transport coefficient $\kappa$
is
defined by$\kappa=\int dv \mathscr{A}_\c L^{-1}_\c\mathscr{A}_\c$ and the source terms are
\begin{eqnarray}\hat s_0&=&\int_{\R^3} \dd v\, \sm  \hat{\mathcal{C}}(k,v)+\widehat{\z \ag},\notag\\
\hat{\underline{s}}\ &=& {-}\e k \otimes k \cdot \int_{\R^3} \dd v\, v \ipc  \hat f^\z \mathscr{B}_\c-i\e k \cdot\int_{\R^3} \dd v\, (\widehat{\z g}+ \hat{\mathcal{C}}) \mathscr{B}_\c+\int_{\R^3} \dd v\, v_\c\sm_\c  \hat{\mathcal{C}}(k,v)\notag\\&+&\widehat{\z \bg},\label{sourcesfou}\\
\hat s_4&=& {-}\e k\otimes k \cdot \int_{\R^3} \dd v\, v \ipc  \hat f^\z \mathscr{A}_\c-i\e k \cdot\int_{\R^3} \dd v\, (\widehat{\z g}+ \hat{\mathcal{C}}) \mathscr{A}_\c+ \int_{\R^3} \dd v\, \frac 1 2(|v_\c|^2-3)\sm_\c  \hat{\mathcal{C}}(k,v)\notag\\&+&\widehat{\z \cg}.\notag\end{eqnarray}

To eliminate the pressure $\hat P^\z$ from (\ref{momenFs}) we apply the Leray projector $\Pi$ defined, in Fourier space, by 
$$\hat \Pi= \mathbf{I} - \frac{k\otimes k}{|k|^2}.$$
We use the short notation
$${N}_{\s,\b}(k)=\begin{cases}\e[
 {\mathbb{K}}+\b i\c\cdot k {\mathbf{I}}]\quad \text{for } \s=\mathfrak{v} \text{ and }\b=1\\
\e[{\frac{\kappa}2} |k|^2+\frac 5 2  i\c\cdot k]\quad\text{for } \s= \frac{\kappa}2 \text{ and } \beta =\frac 5 2
.\end{cases}$$ 
Thus we get 
\be\hat\Pi\hat b^\z=N_{\mathfrak{v}, 1}^{-1}\hat\Pi\hat{\underline{s}}.\label{momenFPi}\ee
Then we multiply the momentum equation by $k$ and divide by $i|k|^2$ to obtain
$$\hat P^\z+\frac{N_{\mathfrak{v},1}}{i|k|^2}\hat b^\z\cdot k=\frac{k}{i|k|^2}\cdot \hat{\underline{s}}.$$
From the mass equation we have
\be \hat b^\z\cdot k= -i\hat s_0  {-}\e\c\cdot k \hat a^\z.\label{massa}\ee
Hence $$\hat P+\frac{{N}_{\mathfrak{v},1}}{i|k|^2}(-i\hat s_0  {-}\e\c\cdot k \hat a)=\frac{k}{i|k|^2}\cdot \hat{\underline{s}},$$
and reminding that $\hat a=\hat P-\hat c$, we have, for $|\c|$ 
sufficiently small,
\be\hat P^\z=\Big(1 {-}
\e\frac{{N}_{\mathfrak{v},1}}{i|k|^2}\c\cdot k
\Big)^{-1}\Big[\frac{{N}_{\mathfrak{v},1}}{i|k|^2}[i\hat s_0 -\e\c\cdot k \hat c^\z]+\frac{k}{i|k|^2}\cdot \hat{\underline{s}}\Big].\label{press}\ee
Subtracting the mass equation from the energy equation and using $\hat a^\z=\hat P^\z-\hat c^\z$,
the equation for $\hat c^\z$ becomes
\be({N}_{\kappa, \frac 5 2}) \hat c^\z-i\e\c\cdot k \hat P^\z=\hat s_4-\hat s_0.\label{cener}\ee
Replacing the expression of the pressure
we obtain
\be \hat c^\z=(\overline{{N}})^{-1}\Big\{\hat s_4-\hat s_0 {+}i\e\c\cdot k \Big(1 {-}
\e\frac{{N}_{\mathfrak{v},1}}{i|k|^2}\c\cdot k
\Big)^{-1}\Big[\frac{{N}_{\mathfrak{v},1}}{i|k|^2}i\hat s_0 +\frac{k}{i|k|^2}\cdot \hat{\underline{s}}\Big]\Big\}\label{energia}\ee
 with
\be\overline{{N}}=N_{\kappa, \frac 5 2}+i\e^2(\c\cdot k)^2\frac{{N}_{\mathfrak{v},1}}{i|k|^2} \Big(1 {-}
\e\frac{{N}_{\mathfrak{v},1}}{i|k|^2}\c\cdot k
\Big)^{-1}.\ee
Then $\hat a^\z=\hat P^\z-\hat c^\z$ is obtained by subtracting the expressions of $\hat P^\z$ and $\hat c^\z$ just obtained. 
Finally, using \eqref{massa} we compute $(1-\hat\Pi \hat b^\z)$.

\numberwithin{equation}{subsection}
\section{Estimate of $\|\P_\c f\|_3$} \label{stimaPc3}
\subsection{Splitting of $\P_\c f$}\label{split}   

\medskip
We define the small $k$'s cutoff as a smooth function
\be
\mathfrak{j}=\begin{cases} 1& \text{for }|k|<1\\0& \text{for } |k|>2\end{cases},\ee
and \be 
\mathfrak{j}^c=1-\mathfrak{j}.\ee

We will split the source terms $\mathbf{s}=(s_0,\underline{s},s_4)$ into five different contributions $s^{(i)}= (s_0^{(i)},\underline{s}^{(i)},s_4^{(i)})$, for $i=1,\dots, 5$:
\be \mathbf{s}=\sum_{i=1}^5 \mathbf{s}^{(i)},\ee
The source $\mathbf{s}^{(1)}$ corresponds large $k$'s:
\be\hat {\mathbf{s}}^{(1)}(k)=\mathfrak{j}^c\hat {\mathbf{s}}.\label{s1}\ee
Then we split $\hat{\mathcal{C}}(k,v)=\mathcal{F}[ f\/v\cdot  {\nabla}\z](k,v)$ as 
\be \hat{\mathcal{C}}= \hat{\mathcal{C}}_{s}+k\cdot \hat{\mathcal{C}}_{r},\ee
with
\be \hat{\mathcal{C}}_{s}(v)= \hat{\mathcal{C}}(0,v),\ee
and
\be \hat{\mathcal{C}}_{r}(k,v)=\int_0^1 \dd\l \nabla_k \hat{\mathcal{C}}(\l k,v),\label{560}\ee
so that
\be  \hat{\mathcal{C}}(k,v)- \hat{\mathcal{C}}(0,v) = \int_0^1 \dd\l\frac d{d\l}\hat{\mathcal{C}}(\l k,v)= \int_0^1 \dd\l\, k\cdot \nabla_k \hat{\mathcal{C}}(\l k,v)=k\cdot \hat{\mathcal{C}}_{r}(k,v).\ee
We set 
\begin{eqnarray}
\hat s^{(2)}_0(k)&=&\mathfrak{j}\int_{\R^3} \dd v\, \sm_\c  \hat{\mathcal{C}}_s(0,v),\notag\\
\hat{\underline{s}}^{(2)}(k)&=&\mathfrak{j}\int_{\R^3} \dd v\, v_\c\sm_\c  \hat{\mathcal{C}}_s(0,v),\\
\hat s_4^{(2)}(k)&=& \mathfrak{j}\int_{\R^3} \dd v\, \frac 1 2(|v_\c|^2-3)\sm_\c  \hat{\mathcal{C}}_s(0,v).\notag\\\notag\\
\hat s^{(3)}_0(k)&=&\mathfrak{j}\Big[\int_{\R^3} \dd v\, \sm_\c  k\cdot \hat{\mathcal{C}}_r(k,v)\Big],\notag\\
\hat{\underline{s}}^{(3)}(k)&=&\mathfrak{j}\Big[ {-}\e k \otimes k \cdot \int_{\R^3} \dd v\, v \ipc  \hat f^\z(k,v) \mathscr{B}_\c-i\e k \cdot\int_{\R^3} \dd v\, \hat{\mathcal{C}} \mathscr{B}_\c\\&&\hskip 5cm+ k\cdot\int_{\R^3} \dd v\, v_\c\sm_\c   \hat{\mathcal{C}}_r(k,v)\Big],\notag\\
\hat s_4^{(3)}(k)&=&\mathfrak{j}\Big[ {-}\e k\otimes k \cdot \int_{\R^3} \dd v\, v \ipc  \hat f^\z(k,v) \mathscr{A}_\c-i\e k\cdot\int_{\R^3} \dd v\, \hat{\mathcal{C}} \mathscr{A}_\c\notag\\&&\hskip 4cm +k\cdot\int_{\R^3} \dd v\, \frac 1 2(|v_\c|^2-3)\sm_\c  \hat{\mathcal{C}}_r(x,v)\Big].\notag
\end{eqnarray}
\begin{eqnarray}
\hat s_0^{(4)}(k)&=&0,\notag\\
\hat{\underline{s}}^{(4)}(k)&=&-\mathfrak{j}i\e k \cdot\int_{\R^3} \dd v\, \widehat{\z g}\mathscr{B}_\c,\\
\hat s_4^{(4)}(k)&=&-\mathfrak{j}\e ik\cdot \int_{\R^3} \dd v\, \widehat{\z g} \mathscr{A}_\c.\notag\\
\notag\\\hat s_0^{(5)}(k)&=&\mathfrak{j}\widehat{\z \ag},\notag\\
\hat{\underline{s}}^{(5)}(k)&=&\mathfrak{j}\widehat{\z \bg},\\
\hat s_4^{(5)}(k)&=&\mathfrak{j}\widehat{\z \cg}.
\notag\end{eqnarray}
For $i=1\dots, 5$ we denote by $a^{(i)},b^{(i)},c^{(i)}$ the solution to the system \eqref{massFs}, \eqref{momenFs}, \eqref{enerFs} with sources $\mathbf{s}^{(i)}$ and by $P^{(i)}=a^{(i)}+c^{(i)}$ the $i$-th contribution to the  pressure.

Correspondingly we have the decomposition of $\P_\c f$ into six terms:
\be \P_\c f=(1-\z)\P_\c f+ \sum_{i=1}^5 \S_if,\ee
with
\be \S_i f=\sqrt{\mu_\c}\big[a^{(i)} +v_\c\cdot b^{(i)}+ \frac 1 2 c^{(i)}(|v_\c|^2-3)\big].\ee
\subsection{Estimate of $\S_1 f$} \label{stimas1}
The components of $\S_1 \hat f$ solve the system
\begin{eqnarray}
&&ik\cdot \hat b^{(1)} +i\e k\cdot \c \hat a^{(1)}=\hat s_0^{(1)},\label{massFs1}\\
&&i k  \hat P^{(1)} +\e { {\mathbb{K} b^\z}}+i\c\cdot k]\hat b^{(1)}=\hat{\underline{s}}^{(1)}\label{momenFs1}\\
&& i k \cdot \hat b^{(1)}+\e\ [\kappa|k|^2+\frac 3 2ik\cdot\c ] \hat c^{(1)}=\hat s_4^{(1)}\label{enerFs1},\end{eqnarray}
where
\begin{eqnarray}\hat s_0^{(1)}&=&\mathfrak{j}^c\int_{\R^3} \dd v\, \sm  \hat{\mathcal{C}}(k,v)+ \mathfrak{j}^{ {c}}\widehat{\z\mathfrak{a}},\notag\\
\hat{\underline{s}}^{(1)}\ &=&\mathfrak{j}^c\Big[ {-}\e k \otimes k \cdot \int_{\R^3} \dd v\, v \cdot \mathscr{B}_\c\ipc  \hat f^\z -i\e k \cdot\int_{\R^3} \dd v\, (\widehat{\z g}+ \hat{\mathcal{C}}) \mathscr{B}_\c\notag\\&+&\int_{\R^3} \dd v\, v_\c\sm_\c  \hat{\mathcal{C}}(k,v)+\widehat{\z\mathfrak{b}}\Big],\label{sourcesfou1}\\
\hat s_4^{(1)}&=&\mathfrak{j}^c\Big[ {-}\e k\otimes k \cdot \int_{\R^3} \dd v\, v \ipc  \hat f^\z \mathscr{A}_\c-i\e k \cdot\int_{\R^3} \dd v\, (\widehat{\z g}+ \hat{\mathcal{C}}) \mathscr{A}_\c\notag\\&+&\int_{\R^3} \dd v\, \frac 1 2(|v_\c|^2-3)\sm_\c  \hat{\mathcal{C}}(k,v)+\widehat{\z\mathfrak{c}}\Big].\notag\end{eqnarray}

\begin{lemma}\label{PcRh} If $|\c|\ll 1$, and $g\in L^2$, then 
\be \|\S_1 f\|_2\lesssim \e^{-1} [ \|\P_\c f\|_6+\|\ipc f\|_2]+\| g{{\nu}^{-\frac 1 2}}\|_2.\label{PRh2}\ee
\end{lemma}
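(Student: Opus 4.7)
The plan is to use Plancherel to estimate $\|\S_1 f\|_2$ via the Fourier transforms $\hat a^{(1)},\hat b^{(1)},\hat c^{(1)}$, and then to exploit the fact that $\mathbf{s}^{(1)}$ is supported on $\{|k|\geq 1\}$ together with the elliptic denominators $N_{\mathfrak{v},1}$ and $N_{\kappa,\frac52}$ arising from \eqref{massFs1}--\eqref{enerFs1}. Since $|N_{\mathfrak{v},1}|,|N_{\kappa,\frac52}|\gtrsim \e|k|^2\gtrsim\e$ on this support, solving the $3\times 3$ Fourier system produces exactly the factor $\e^{-1}$ appearing in \eqref{PRh2}. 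For $|\c|\ll 1$ the off-diagonal couplings $\e\c\cdot k$ are dominated by the principal elliptic part on $|k|\geq 1$, so I would treat the full system as a small $|\c|$-perturbation of the Stokes--Fourier block at $\c=0$, closing the inversion by Neumann series/absorption rather than by invoking \eqref{press} directly, whose scalar prefactor $(1+\e N_{\mathfrak{v},1}\c\cdot k/(i|k|^2))^{-1}$ is inconvenient at large frequency.

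Once $\hat a^{(1)},\hat b^{(1)},\hat c^{(1)}$ are written in terms of $\hat{\mathbf s}^{(1)}$ divided by the elliptic denominators, I would estimate each piece of $\hat{\mathbf s}^{(1)}$ from \eqref{sourcesfou1} in $L^2_k$. For the $\mathcal{C}=f\,v\cdot\nabla\z$ pieces, the crucial observation is that $\nabla\z$ is spatially supported in the bounded set $A=\{0<d(x,\O)<1\}$; by Plancherel in $x$ and H\"older on the bounded $A$, moments of $\hat{\mathcal{C}}$ are bounded in $L^2_k$ by
\[
\|f\|_{L^2(A\times\R^3)}\lesssim \|\P_\c f\|_{L^2(A)}+\|\ipc f\|_2\lesssim \|\P_\c f\|_6+\|\ipc f\|_2,
\]
the first term by H\"older on $A$ and $L^6\hookrightarrow L^2_{\mathrm{loc}}$. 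The microscopic terms $\e k\otimes k\cdot\int v\,\mathscr{B}_\c\,\ipc\hat f^\z\,\dd v$ and its $\mathscr{A}_\c$ analogue, once divided by $N_{\mathfrak{v},1}$ or $N_{\kappa,\frac52}$, lose the $\e|k|^2$ prefactor and contribute $\lesssim\|\ipc f\|_2$ by Plancherel and Cauchy--Schwarz in $v$. The hydrodynamic pieces $\mathfrak{j}^c\widehat{\z\ag},\mathfrak{j}^c\widehat{\z\bg},\mathfrak{j}^c\widehat{\z\cg}$, divided by $\e|k|^2$, give at worst $\lesssim \e^{-1}\|\P_\c g\|_2\lesssim \e^{-1}\|g\nu^{-1/2}\|_2$.

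The main technical obstacle is the $g$-contribution, namely $\e k\cdot\int\widehat{\z g}\,\mathscr{B}_\c\,\dd v$ and its $\mathscr{A}_\c$ analogue. A naive $L^2_k$ bound gives $\e|k|\,\|g\nu^{-1/2}\|_2$, which is not square-integrable. The point is that after division by $N_{\mathfrak{v},1}$ (respectively $N_{\kappa,\frac52}$) the remaining multiplier is only $O(|k|^{-1})$ acting on a function of $L^2_x$ norm $\lesssim\|g\nu^{-1/2}\|_2$, using $\nu^{-1/2}\mathscr{A}_\c,\nu^{-1/2}\mathscr{B}_\c\in L^2_v$ uniformly for $|\c|\ll 1$; since $|k|^{-2}\leq 1$ on $\{|k|\geq 1\}$, this piece contributes $\lesssim\|g\nu^{-1/2}\|_2$ to $\|\hat b^{(1)}\|_2+\|\hat c^{(1)}\|_2$. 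This is precisely why the high-frequency cutoff $\mathfrak{j}^c$ was placed in the very definition of $\mathbf{s}^{(1)}$. Collecting all contributions yields \eqref{PRh2}.
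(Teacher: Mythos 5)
Your proposal follows essentially the same route as the paper: solve the Fourier-space system \eqref{massFs1}--\eqref{enerFs1} on the support $\{|k|\ge 1\}$ using $|N_{\mathfrak{v},1}|,|N_{\kappa,\frac52}|\gtrsim\e|k|^2\gtrsim\e$, bound the $\hat{\mathcal{C}}$ moments by the compact support of $\nabla\z$ together with $L^6\hookrightarrow L^2_{\mathrm{loc}}$, let the $\e|k|^2$ in the microscopic stress term cancel against the denominator, and gain a factor $|k|^{-1}\le 1$ on the $\e k\cdot\int\widehat{\z g}\,\mathscr{B}_\c$ piece. The only structural difference is that you invert the full $3\times3$ block by a Neumann-series perturbation in $|\c|$, whereas the paper eliminates sequentially (first $\hat P^{(1)}$ from \eqref{momenk1} dotted with $k/(i|k|^2)$, then $\hat b^{(1)}$, then $\hat c^{(1)}$ after subtracting the mass from the energy equation), closing by absorption of the $\e|\c|\|\hat b^{(1)}\|_2$ cross terms. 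These are equivalent for $|\c|\ll1$ and buy the same estimate.

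One inconsistency in your write-up deserves attention: you correctly observe that the hydrodynamic sources $\mathfrak{j}^c\widehat{\z\ag},\mathfrak{j}^c\widehat{\z\bg},\mathfrak{j}^c\widehat{\z\cg}$, after division by the $\e|k|^2$ denominators, contribute $\e^{-1}\|\P_\c g\|_2$, but this is \emph{not} dominated by the term $\|g\nu^{-\frac12}\|_2$ appearing on the right of \eqref{PRh2}, so your concluding sentence (``collecting all contributions yields \eqref{PRh2}'') does not follow from your own accounting. The paper's proof sidesteps this by simply omitting the $\widehat{\z\bg}$-type terms from \eqref{bh0}; the discrepancy is harmless in the end because in the application $\P_\c g=\P_\c\bar A_\c=O(\e^\infty)$ and because the downstream use of \eqref{PRh2} in Proposition \ref{L^3} multiplies $\|\S_1f\|_2$ by $\e$, but as a standalone statement for general $g\in L^2$ the bound should either carry an additional $\e^{-1}\|\P_\c g\|_2$ on the right or be restricted to $\P_\c g$ small. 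You should either state this explicitly or track the term into the final inequality rather than asserting it disappears.
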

\begin{proof}
We first estimate $\hat P^{(1)}$. For this we use  the momentum balance in the form (\ref{momenk}), which for the $\mathbf{S}_1\hat R$ becomes:
\be i k \hat P^{(1)}+ i\mathfrak{j}^c k\cdot \hat \tau +i\e\c\cdot k\hat b^{(1)}=  {\mathfrak{j}^c\widehat{\z\mathfrak{b}}+ }\mathfrak{j}^c\int_{\R^3} \dd v\, v_\c\sm_\c \hat{\mathcal{C}}(k,v).\label{momenk1}\ee
We take inner product of  this equation with $\frac k{i|k|^2}$. We obtain
\be\hat P^{(1)}= \mathfrak{j}^c\Big[ {-i |k|^{-2} k\cdot\widehat{\z\mathfrak{b}}} -i |k|^{-2} k\cdot \int_{\R^3} \dd v\, v\hat{\mathcal{C}} \sm_\c -\e|k|^{-2} \c \cdot k \hat b^{ {(1)}}\cdot k {-}|k|^{-2}k\cdot \hat\tau\cdot k\Big].\label{Ph0}\ee
From the definition of $\tau$, \eqref{stress}, $\|\tau^{(1)}\|_2\le \|\ipc f\|_2$. Moreover from the definition of $\mathcal{C}$, \eqref{gzeta},
\be\|\hat{\mathcal{C}}\|_2=\|\mathcal{C}\|_2\lesssim \|\P_\c f\|_{L^2({\rm supp }\,\nabla\z)}+\|\ipc f\|_{L^2({\rm supp} \nabla\z )}\lesssim \|\P_\c f\|_6 +\|\ipc f\|_2.\ee Therefore
\be\|\hat P^{(1)}\|_2\lesssim \|\P_\c f\|_6+ \|\ipc f\|_2+ \e|\c|\|\hat b^{(1)}\|_2.\label{Ph1}\ee
To bound $\hat b^{(1)}$,  {we multiply (\ref{momenFs}) by $\e^{-1}\mathbb{K}^{1}$ and obtain
\begin{multline}\hat b^{(1)}=\mathfrak{j}^c\Big\{\e^{-1}\mathbb{K}^{-1}\Big[- {i k}\hat P^{(1)}-i\e\c\cdot k \hat b^{(1)}- \int_{\R^3} \dd v\, v_\c\sm_\c \hat{\mathcal{C}}(k,v)\\+i\e k \cdot \int_{\R^3} \dd v\, v \ipc \hat f^\z \mathscr{B}_\c+ i\e k \cdot\int_{\R^3} \dd v\,(\widehat{\z g}+\hat{\mathcal{C}}) \mathscr{B}_\c\Big]\Big\}\label{bh0}.\end{multline}
Since $|k|>1$, using \eqref{bK} and $|\c|\ll 1$, we have}
\be
\|\hat b^{(1)}\|_2\le \e^{-1}\|\hat P^{(1)}\|_2+ \e^{-1}\|\P_\c f\|_6+\|\ipc f\|_2+\| g{{\nu}^{-\frac 1 2}}\|_2,\label{bh1}\ee
having bounded $\|\int \widehat{\z g} \mathscr{B}_\c\|_2\le \|\nu^{-\frac 1 2} g\|_2\|\nu^{\frac 12} \mathscr{B}_\c\|_\infty\lesssim \|\nu^{-\frac 1 2} g\|_2$.

Using (\ref{bh1}) in (\ref{Ph1}) and $|\c|\ll 1$ we have
\be\|\hat P^{(1)}\|_2\lesssim \|\P_\c f\|_6+ \|\ipc f\|_2+\| g{{\nu}^{-\frac 1 2}}\|_2.\label{Ph2}\ee
Using (\ref{Ph2}) in (\ref{bh1}) we obtain
\be
\|\hat b^{(1)}\|_2\lesssim \e^{-1} [ \|\P_\c f\|_6+\|\ipc f\|_2]+\| g{{\nu}^{-\frac 1 2}}\|_2.
\label{bh2}\ee
To estimate $\hat c^{(1)}$ we subtract (\ref{massFs}) from (\ref{enerFs}) and replace $\hat a^{(1)}$ with $\hat P^{(1)}-\hat c^{(1)}$:
\begin{multline}-i\e\c\cdot k[\hat P^{(1)}-\hat c^{(1)}]+\int_{\R^3} \dd v\, \sm \hat{\mathcal{C}}(k,v) +\e\ [\kappa|k|^2+\frac 3 2ik\cdot\c ] \hat c^{(1)}+i\e k \cdot \int_{\R^3} \dd v\, v \ipc \hat f^\z \mathscr{A}_\c+\\\e i k \cdot\int_{\R^3} \dd v\, (\widehat{\z g}+\hat{\mathcal{C}}) \mathscr{A}_\c= \int_{\R^3} \dd v\, \frac 1 2(|v_\c|^2-3)\sm_\c \hat{\mathcal{C}}(k,v).
\end{multline}
Then we proceed as done for $\hat b^{(1)}$ and obtain:
\be
\|\hat c^{(1)}\|_2\lesssim \e^{-1} [ \|\P_\c f\|_6+\|\ipc f\|_2]+\| g{{\nu}^{-\frac 1 2}}\|_2.
\label{ch2}\ee
From the estimates of $\hat P^{(1)}$ and $\hat c^{(1)}$ we then obtain also 
\be
\|\hat a^{(1)}\|_2\lesssim \e^{-1} [ \|\P_\c f\|_6+\|\ipc f\|_2]+\| g{{\nu}^{-\frac 1 2}}\|_2.
\label{ah2}\ee
Thus, we obtain \eqref{PRh2}.\end{proof}

\bigskip
To deal with the system \eqref{massFs}, \eqref{momenFs}, \eqref{enerFs} for $|k|\le 1$ we need several estimates:
\begin{lemma}\label{multipl}  
Suppose $\c\neq 0$ and $|k|\le 1$. Let ${N}_{\s,\b}(k)=\e[\sigma |k|^2+\beta i k\cdot \c]$, for $\s>0$ and $\b>0$. There is $\varrho>0$ such that
\begin{enumerate}
\item  For $q\in[\frac 3 2,2)$
\be \|\mathfrak{j}\/{N}^{-1}_{\s,\b}\|_q\lesssim \e^{-1}|\c|^{-1+\varrho}, \label{N}\ee
and, for  $1<q< \frac 3 2$
\be \|\mathfrak{j}\/{N}^{-1}_{\s,\b}\|_q\lesssim \e^{-1}.\label{N1}\ee
\item For $q\in [3,4)$
\be \| \mathfrak{j}\/k {N}^{-1}_{\s,\b}\|_q\lesssim \e^{-1}|\c|^{-1+\varrho},\label{kN}\ee
and, for $1<q<3$
\be \| \mathfrak{j}\/k {N}^{-1}_{\s,\b}\|_q\lesssim \e^{-1},.\label{kN1}\ee
\item 
\be \| k\otimes k {N}^{-1}_{\s,\b}\|_\infty\lesssim \e^{-1}.\label{kN2}\ee
\end{enumerate}
\end{lemma}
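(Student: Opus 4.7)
The modulus of the Fourier multiplier is
\[
|N_{\s,\b}(k)|=\e\sqrt{\s^{2}|k|^{4}+\b^{2}(k\cdot\c)^{2}},
\]
and an $\e^{-1}$ factors out of every $L^{q}$ norm, so the claims reduce to bounds on integrals of the form $\int_{|k|\le 2}|k|^{\a q}(\s^{2}|k|^{4}+\b^{2}(k\cdot\c)^{2})^{-q/2}\,dk$ with $\a=0$ for (\ref{N})--(\ref{N1}) and $\a=1$ for (\ref{kN})--(\ref{kN1}). Item (\ref{kN2}) requires no integration: the trivial bound $|N_{\s,\b}(k)|\ge\e\s|k|^{2}$ already gives $|k\otimes k|\,|N_{\s,\b}|^{-1}\le(\e\s)^{-1}$ uniformly in $k$, which I would record at the outset.

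For the integral estimates I would pass to spherical coordinates with $\c/|\c|$ chosen as polar axis, writing $r=|k|$, $u=\cos\theta$, so that $k\cdot\c=r|\c|u$ and the denominator factors as $r^{q}(\s^{2}r^{2}+\b^{2}|\c|^{2}u^{2})^{q/2}$. The weight $|k|^{q}$ in (\ref{kN})--(\ref{kN1}) then cancels one factor $r^{q}$, so in both cases the remaining computation is
\[
I_{\a}(|\c|)=\int_{0}^{2}r^{2-(1-\a)q}\,dr\int_{0}^{1}\frac{du}{(\s^{2}r^{2}+\b^{2}|\c|^{2}u^{2})^{q/2}}.
\]
The substitution $v=\b|\c|u/(\s r)$ turns the $u$-integral into $(\b|\c|\s^{q-1}r^{q-1})^{-1}\int_{0}^{\b|\c|/(\s r)}(1+v^{2})^{-q/2}\,dv$; since $q>1$ this is bounded by the convergent tail integral and is comparable to $\b|\c|/(\s r)$ when $r\ge\b|\c|/\s$, yielding the dichotomy
\[
\int_{0}^{1}\frac{du}{(\s^{2}r^{2}+\b^{2}|\c|^{2}u^{2})^{q/2}}\simeq\begin{cases}|\c|^{-1}r^{1-q}, & r\le|\c|,\\ r^{-q}, & r\ge|\c|.\end{cases}
\]

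Inserting this into $I_{\a}$ and splitting the $r$-integral at $r=|\c|$ gives, up to constants, $I_{0}(|\c|)\simeq |\c|^{3-2q}/(4-2q)+\int_{|\c|}^{2}r^{2-2q}\,dr$ and $I_{1}(|\c|)\simeq |\c|^{3-q}/(4-q)+\int_{|\c|}^{2}r^{2-q}\,dr$. The upper integrals are bounded for $q<3/2$ (resp.\ $q<3$), logarithmic at the critical exponent, and equal to $|\c|^{3-2q}$ (resp.\ $|\c|^{3-q}$) for $q>3/2$ (resp.\ $q>3$). Taking $q$-th roots gives the desired dependence on $|\c|$: $I_{0}^{1/q}\simeq 1$ for $q<3/2$ and $I_{0}^{1/q}\simeq |\c|^{3/q-2}$ for $q\in(3/2,2)$, and analogously $I_{1}^{1/q}\simeq 1$ or $I_{1}^{1/q}\simeq |\c|^{3/q-1}$ for $q\in(3,4)$.

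The last step is to verify that a single $\varrho>0$ makes both $3/q-2\ge-1+\varrho$ on $[3/2,2)$ and $3/q-1\ge-1+\varrho$ on $[3,4)$; the first forces $\varrho<1/2$, the second is weaker ($\varrho<3/4$), so any $\varrho\in(0,1/2)$ works simultaneously in (\ref{N}) and (\ref{kN}). The boundary logarithms at $q=3/2$ and $q=3$ are absorbed by $|\c|^{-\varrho}$ for the same choice of $\varrho$. The only genuinely delicate point is this bookkeeping of exponents across the four ranges of $q$ and the two critical values; no new ingredient beyond the anisotropic scaling $v=\b|\c|u/(\s r)$, which makes the competition between the elliptic term $\s|k|^{2}$ and the drift $\b k\cdot\c$ explicit, is required.
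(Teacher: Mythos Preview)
Your proof is correct and uses the same core approach as the paper: spherical coordinates with polar axis $\c/|\c|$ and the anisotropic substitution $v=\beta|\c|u/(\sigma r)$ to separate the elliptic and drift scales. The only difference is in execution: you split the radial integral at the natural crossover $r=|\c|$ and read off the sharp exponents $|\c|^{3/q-2}$ and $|\c|^{3/q-1}$, whereas the paper splits at $r=|\c|^{\delta}$ and optimizes over $\delta$, arriving at a less explicit $\varrho$.
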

\begin{proof} 

For $\ell\ge 0$ we compute  the norm (see \cite{Masl})
\begin{multline*} \|\e\mathfrak{j}|k|^\ell {N}^{-1}_{\s,\b}\|_q^q {\lesssim}\int_{\R^3} \dd k|k|^{q\ell}\mathfrak{j} |\sigma |k|^2+\beta i k\cdot \c|^{-q}\\\le2\pi\sigma^{-q} \int_0^2 \dd r \,r^{2+q(\ell-2)}\int_0^\pi \dd\theta \sin\theta \Big[1+r^{-2}\beta^2\sigma^{-2}|\c|^2\cos^2\theta\Big]^{-\frac q 2}\\=
\frac{2\pi\sigma^{-q}}{\beta\sigma^{-1}|\c|} \int_0^2 \dd r\,r^{3+q(\ell -2)}\int_0^{r^{-1}\beta|\c| {\sigma^{-1}}}\dd z[1+z^2]^{-\frac q 2}, \end{multline*}
with  $z= r^{-1}\beta\sigma^{-1} |\c| \cos \theta$. The integral in $dz$ is finite for $q>1$. The integral in $dr$ is finite for $3+q(\ell-2)>-1$. Hence, for $\ell<2$,  $q<\frac 4{2-\ell}$. Therefore, if  we split the integration on $r$ into $\{r\le |\c|^\d\}$ and $\{|\c|^\d<r\le  {2}\}$, with $0<\d<1$ to be chosen,
we have the bounds
\begin{multline*}\int_0^{|\c|^\d }\dd r\,r^{3+q(\ell -2)}\int_0^{r^{-1}\beta|\c| {\sigma^{-1}}}\dd z[1+z^2]^{-\frac q 2}\lesssim |\c|^{[4+q(\ell-2)]\d},\\
\int_{|\c|^\d }^2\dd r\, r^{3+q(\ell -2)}\int_0^{r^{-1}\beta|\c| {\sigma^{-1}}}\dd z[1+z^2]^{-\frac q 2}\le\int_{|\c|^\d }^2\dd r\,r^{3+q(\ell -2)}\int_0^{|\c|^{-\d+1}\beta{\sigma^{-1}}}\dd z[1+z^2]^{-\frac q 2}\lesssim |\c|^{1-\d}
.\end{multline*} 
By choosing $\d=(5+q(2-\ell))^{-1}<1$, we conclude that  
$$\|\e \mathfrak{j}\/|k|^\ell {N}^{-1}_{\s,\b}\|_q\lesssim |\c|^{-\frac{\d}q}=|\c|^{-1+\varrho}$$ because $\d<1$ and $q>1$.
Thus, for $\ell=0$ we obtain \eqref{N}, for $\ell=1$  we obtain \eqref{kN}. 

If we bound the integrand in $\dd\theta$ simply with $1$, as in the Stokes problem, we get instead
\be \|\e\mathfrak{j}\/|k|^\ell {N}^{-1}_{\s,\b}\|_q^q\le{2\pi^2} \int_0^2 \dd r\,r^{2+(\ell-2)q}. \ee
The integral in $\dd r$ is finite for $q<\frac 3{2-\ell}$. For $\ell=0$, the integral is bounded when  $q<\frac3 2$,  and hence we get \eqref{N1}; for $\ell=1$, the integral is bounded when $q<3$ and hence we obtain \eqref{kN1}. Clearly $\e|k|^2{N^{-1}_{\sigma,\beta}}\lesssim 1$ for any $k$, thus we have \eqref{kN2}.
\end{proof}
\subsection{Estimate of $\S_2 f$} \label{stimas2}

The components of $\S_2 \hat f$ solve the system
\begin{eqnarray}
&&ik\cdot \hat b^{(2)} +i\e k\cdot \c \hat a^{(2)}=\hat s_0^{(2)},\label{massFs2}\\
&&i k  \hat P^{(2)} +\e[ {\mathbb{K}}+i\c\cdot k]\hat b^{(2)}=\hat{\underline{s}}^{(2)}\label{momenFs2}\\
&& i k \cdot \hat b^{(2)}+\e\ [\kappa|k|^2+\frac 3 2ik\cdot\c ] \hat c^{(2)}=\hat s_4^{(2)}\label{enerFs2},\end{eqnarray}
where
\begin{eqnarray}\hat s_0^{(2)}&=&\mathfrak{j}\int_{\R^3} \dd v\, \sm_\c  \hat{\mathcal{C}}(0,v),\notag\\
\hat{\underline{s}}^{(2)}&=&\mathfrak{j}\int_{\R^3} \dd v\, v_\c\sm_\c  \hat{\mathcal{C}}(0,v)\Big],\label{sourcesfou2}\\
\hat s_4^{(2)}&=&\mathfrak{j}\int_{\R^3} \dd v\, \frac 1 2(|v_\c|^2-3)\sm_\c  \hat{\mathcal{C}}(0,v),\notag\end{eqnarray}
We use the  notation $\psi_0=\sm_\c$, $\psi_\alpha=\sm_\c v_{\c, \alpha}$, $\alpha=1,\dots,3$, $\psi_4= \frac 1 {\sqrt{6}}
\sm_\c(|v_\c|^2-3))$, so that 
$s^{(2)}_\a =\mathfrak{j}(\P_\c \hat {\mathcal{C}}_s, \psi_\a)_{L^2_v}$.
\begin{lemma}\label{lemma53}
\be \P_\c \hat {\mathcal{C}}_s=(2\pi)^{-\frac 3 2}\sum_{\a=0}^4Q_\a\psi_\a,\ee
with $\mathbf{Q}=(Q_0,\dots Q_4)$,  
\be Q_\a=-\int_{\pt\O}\dd S(x) \int_{\R^3} \dd v\, f v\cdot n(x) \psi_\a(v)+\int_{\O_1\backslash\O} \dd x (1-\z)\int_{\R^3} \dd v\, \psi_\a\P_\c g.
\label{defQalpha}\ee
\end{lemma}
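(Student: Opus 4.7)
Since $\mathcal{C}(x,v)=f(x,v)\,v\cdot\nabla\z(x)$ is compactly supported in $x$ in $\O_1\backslash\overline{\O}$ (the region where $\nabla\z\neq 0$), its Fourier transform at $k=0$ is simply
$$\hat{\mathcal{C}}_s(v)=\hat{\mathcal{C}}(0,v)=\frac{1}{(2\pi)^{3/2}}\int_{\O^c}\dd x\,(v\cdot\nabla\z(x))\,f(x,v).$$
The family $\{\psi_\a\}_{\a=0}^{4}$ is an orthonormal basis of $\mathop{\rm Null}L_\c$ in $L^2_v$, so $\P_\c\hat{\mathcal{C}}_s=\sum_{\a=0}^{4}(\hat{\mathcal{C}}_s,\psi_\a)_{L^2_v}\psi_\a$. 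Hence the lemma reduces to checking, for each $\a$, the identity
$$(2\pi)^{3/2}(\hat{\mathcal{C}}_s,\psi_\a)_{L^2_v}=\int_{\O^c}\dd x\int_{\R^3}\dd v\,(v\cdot\nabla\z)\,f\,\psi_\a=Q_\a.$$

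My plan is to transfer the derivative off $\z$ by integration by parts in $x$. Since $\psi_\a$ depends only on $v$,
$$\nabla_x\cdot(v\,\z f\,\psi_\a)=(v\cdot\nabla\z)\,f\,\psi_\a+\z\,\psi_\a\,(v\cdot\nabla_x f).$$
Integrating in $x$ over $\O^c$, the divergence theorem kills the $\pt\O$ boundary contribution because $\z\equiv 0$ in $\overline{\O}$, and the flux at $|x|=\infty$ vanishes under the decay $f\to 0$. This yields $\int_{\O^c}(v\cdot\nabla\z)f=-\int_{\O^c}\z\,v\cdot\nabla_x f$. Splitting $\z=1-(1-\z)$, another application of the divergence theorem to the ``$1$'' piece produces, with $n(x)$ the inward-to-$\O$ normal (outward from $\O^c$),
$$\int_{\O^c}\dd x\,\psi_\a\,v\cdot\nabla_x f=\int_{\pt\O}\dd S(x)\,\psi_\a\,f\,(v\cdot n),$$
after integrating in $v$. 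For the $(1-\z)$ piece, supported in the compact annulus $\O_1\backslash\overline{\O}$, I substitute the equation of \eqref{linprob0}, $v\cdot\nabla_x f=g-\e^{-1}L_\c f$; the $L_\c f$ contribution drops because $\psi_\a\in\mathop{\rm Null}L_\c$ and $L_\c$ is self-adjoint, and $\int\dd v\,\psi_\a g=\int\dd v\,\psi_\a\,\P_\c g$ for the same reason. Collecting the two pieces,
$$\int_{\O^c}\dd x\int\dd v\,(v\cdot\nabla\z)\,f\,\psi_\a=-\int_{\pt\O}\dd S\int\dd v\,f\,(v\cdot n)\,\psi_\a+\int_{\O_1\backslash\O}(1-\z)\dd x\int\dd v\,\psi_\a\,\P_\c g,$$
which is precisely $Q_\a$.

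The only genuine technical point is to justify the vanishing of the surface flux as $|x|\to\infty$ in the two divergence-theorem applications. This is a soft argument: insert a smooth radial cutoff $\chi_R(x)=\chi(|x|/R)$, apply Gauss on the bounded region $\O^c\cap B_R$, and pass to the limit $R\to\infty$, using $\lim_{|x|\to\infty}f=0$ together with the integrability built into the class $\lbr f\rbr_{\beta,\beta'}<\infty$ (in particular the $L^6$ control of $\P_\c f$ and the exponential $v$-weight on $\ipc f$). No sharp estimate is invoked; the identity is essentially a bookkeeping consequence of the balance laws for $f^\z$ derived in Section~\ref{eqbilancio}, rewritten so that $\nabla\z$ is traded for a boundary term on $\pt\O$ plus a compactly-supported remainder driven by $\P_\c g$.
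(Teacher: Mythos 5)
Your overall strategy --- expand $\P_\c\hat{\mathcal{C}}_s$ over the orthonormal basis $\{\psi_\a\}$, integrate by parts in $x$ to trade $\nabla\z$ for boundary terms, and use the macroscopic conservation law $\P_\c(v\cdot\nabla f)=\P_\c g$ to turn the leftover transport term into a compactly supported source --- is the same as the paper's, and your final identity is correct. The gap is in the two applications of the divergence theorem on the unbounded domain $\O^c$. Taken individually, your ``$1$''-piece identity
\[\int_{\O^c}\dd x\int\dd v\,\psi_\a\,v\cdot\nabla_x f=\int_{\pt\O}\dd S\int\dd v\,f\,(v\cdot n)\,\psi_\a,\]
and the analogous one for $\nabla\cdot(v\,\z f\psi_\a)$, each require the flux $\int_{|x|=R}\dd S\int\dd v\,(v\cdot\hat x)f\psi_\a$ to vanish as $R\to\infty$. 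This does not follow ``softly'' from $\lim_{|x|\to\infty}f=0$ and $\lbr f\rbr_{\beta,\beta'}<\infty$: H\"older on a sphere of radius $R$ together with $\|\P_\c f\|_6<\infty$ only gives a bound of order $R^{3/2}$ along a subsequence of radii. Worse, by \eqref{consR} the flux at radius $R$ equals $\int_{B_R\backslash\O}\int\dd v\,\psi_\a\P_\c g-\int_{\pt\O}\int\dd v\,(v\cdot n)f\psi_\a$, so its vanishing is equivalent to a global balance identity for $\int_{\O^c}\int\dd v\,\psi_\a\P_\c g$ --- an integral that is not even absolutely convergent a priori, since $\P_\c g$ is controlled only in $L^{6/5}$ and is not assumed to vanish (the paper explicitly drops \eqref{espPbarA} in the linear theory).

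The repair is immediate and is exactly what the paper does: never leave the bounded annulus. Since $\nabla\z$ and $1-\z$ are both supported in $\O_1\backslash\O$, write $\int(v\cdot\nabla\z)f\psi_\a=\int_{\O_1\backslash\O}\big[v\cdot\nabla(\z f)-\z\,v\cdot\nabla f\big]\psi_\a$, apply Gauss only on $\O_1\backslash\O$ (picking up $\pt\O$, where $\z=0$, and $\pt\O_1$, where $\z=1$), and then eliminate the $\pt\O_1$ flux by the conservation law integrated over $\O_1\backslash\O$ (the paper's \eqref{540}). Equivalently, if you keep your two global Gauss applications, subtract them \emph{before} sending $R\to\infty$: the two unknown fluxes at $|x|=R$ are identical (because $\z\equiv1$ there) and cancel exactly, leaving only compactly supported integrands. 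Either way no decay at infinity is needed; as written, your proof asserts two unproved limits whose individual validity is genuinely doubtful even though their difference is zero.
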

\begin{proof}
Since $\hat{\mathcal{C}}(0,v)= (2\pi)^{-\frac 3 2}\int  \dd x f v\cdot\nabla \z $,   we have
\[\P_\c\hat{\mathcal{C}}_{s}= (2\pi)^{-\frac 3 2}\sum_{\a=0}^4 \psi_\a \int_{\O^c} \dd x\int_{\R^3} \dd v\, f\psi_\a v\cdot\nabla \z . 
\]
Since $\nabla \z=0$ outside of $\O_1=\{x\in \R^3\, |\, d(x,\O)<1\}$, 
\begin{multline*}\int_{\O^c} \dd x\int_{\R^3} \dd v\, \psi_\a (v\cdot\nabla \z) f= 
\int_{\O_1\backslash\O} \dd x\int_{\R^3} \dd v\, \psi_\a v\cdot\nabla (\z f)-\int_{\O_1\backslash\O} \dd x\int_{\R^3} \dd v\, \psi_\a\z v\cdot\nabla f =\\
 {\int_{\pt\O}\dd S(x) \int_{\R^3} \dd v\, v\cdot n(x) \z f  \psi_\a(v)+\int_{\pt \O_1}\dd S(x) \int_{\R^3} \dd v\, v\cdot N(x) \z f  \psi_\a(v)- \int_{\O_1\backslash\O} dx\z \int_{\R^3} dv \psi_\a v\cdot\nabla f}\\=- \int_{\O_1\backslash\O} \dd x\,\z \int_{\R^3} \dd v\, \psi_\a \P_\c g +\int_{\pt \O_1}\dd S(x) \int_{\R^3} \dd v\, v\cdot N(x)f  \psi_\a(v),\end{multline*}
where $N(x)$ is the exterior normal to $\pt \O_1$, because $\z=0$ on $\pt \O$ and $\z=1$ on $\pt \O_1$ and we have used by \eqref{consR}.
On the other hand, integrating (\ref{consR}) on $\O_1\backslash\O$   we get 
\be\int_{\pt \O_1}\dd S(x) \int_{\R^3} \dd v\, v\cdot N(x)f  \psi_\a(v)+ \int_{\pt\O}\dd S(x) \int_{\R^3} \dd v\, v\cdot n(x) f\psi_\a=\int_{\O_1\backslash\O}\dd x \int_{\R^3} \dd v\, \psi_\a\P_\c g  ,\label{540}\ee
and hence we obtain
\be Q_\a=-\int_{\pt\O}\dd S(x) \int_{\R^3} \dd v\, f v\cdot n(x) \psi_\a(v)+\int_{\O_1\backslash\O} \dd x (1-\z)\int_{\R^3} \dd v\, \psi_\a\P_\c g.
\label{defQalpha1}\ee
\end{proof}
\begin{lemma}[Estimate of $Q$'s]\label{estiQ} If $\|\P_\c g\|_{L^{6/5}_{loc}}=\|\P_\c g\|_{L^{6/5}(\O_1\backslash \O)}$ is bounded, then 
\be |\mathbf{Q}|\le \e \big(\e^{-1}\|\ipc f\|_2+ \|\P_\c f\|_6+\|\nu^{-\frac 1 2}
g\|_{L^2(\O_1\backslash\O)}\big)+\e^{\frac 1 2}\|z_\g(r)\|_2+ \|\P_\c g\|_{L^{6/5}_{loc}}.\label{estiQe}\ee
\end{lemma}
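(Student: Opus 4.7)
The plan is to estimate the two pieces of the representation \eqref{defQalpha1} separately, namely the bulk integral against $\P_\c g$ and the boundary integral on $\pt\O$, using the diffuse reflection boundary condition together with the Ukai trace inequality \eqref{ukai} and the kinetic equation to trade $v\cdot\nabla f$ for $g$ and $L_\c f$.

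The bulk term is easy: since $\psi_\a$ has Gaussian decay in $v$ and $1-\z$ is supported in the bounded set $\O_1\setminus\O$, H\"older's inequality immediately gives
\[
\Big|\int_{\O_1\setminus\O}\dd x\,(1-\z)\int_{\R^3}\dd v\,\psi_\a\P_\c g\Big|\lesssim \|\P_\c g\|_{L^{6/5}(\O_1\setminus\O)},
\]
which accounts for the $\|\P_\c g\|_{L^{6/5}_{loc}}$ term. For the boundary integral $-\int_{\pt\O}\dd S\int\dd v\,f\,v\cdot n\,\psi_\a$, I split $v\cdot n\gtrless 0$ and substitute the boundary condition $f=P_\g^\c f+\e^{\frac12}r$ on $\g_-$, so that it becomes the sum of three contributions: (a) the $\g_+$--$\g_-$ difference of $P_\g^\c f$ terms, (b) an $\int_{\g_+}\dd\g\,(1-P_\g^\c)f\,\psi_\a$ piece, and (c) the remainder $\e^{\frac12}\int_{\g_-}\dd\g\,r\,\psi_\a$.

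For (a), I use the explicit form $P_\g^\c f=\sqrt{2\pi}\mu\mu_\c^{-\frac12}z_{\g_+}(f)$ and reduce the $v$-integration to $\sqrt{2\pi}\int_{\R^3}\dd v\,\mu\mu_\c^{-\frac12}\psi_\a\,v\cdot n$. A direct Taylor/parity computation, using $\mu_\c^{-\frac12}=\mu^{-\frac12}+O(\e|\c|)$, shows this moment vanishes for $\a=0$ (mass), gives an $O(\e)$ factor $-\frac{2\e}{\sqrt 6}\c\cdot n$ for $\a=4$ (energy), and gives precisely $n_i$ for $\a=i$ (momentum). Thus the mass component of (a) is zero, the energy component is $O(\e)$ times $\|z_{\g_+}(f)\|_{L^2(\pt\O)}$, and only the momentum component produces a genuinely nontrivial contribution $\int_{\pt\O}\dd S\,z_{\g_+}(f)n_i$. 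Piece (c) is immediately bounded by $\e^{\frac12}\|z_\g(r)\|_2$ via Cauchy--Schwarz in $v$ (absorbing $\psi_\a$ in a Gaussian).

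The remaining work — pieces (b) and the momentum portion of (a) — all reduces to controlling $L^2(\pt\O)$-type norms of $f$ in the hydrodynamic variables. Here I invoke the Ukai trace remark \eqref{ukai}: $|P_\g^\c f|_{2,\pm}$ and the analogous $|(1-P_\g^\c)f|_{2,+}$ are bounded by $\|f\|_{L^2(\O_1\setminus\O)}+\|\nu^{-\frac12}v\cdot\nabla f\|_{L^2(\O_1\setminus\O)}$. Splitting $f=\P_\c f+\ipc f$, H\"older on the bounded strip gives $\|\P_\c f\|_{L^2(\O_1\setminus\O)}\lesssim\|\P_\c f\|_6$, and using the equation $v\cdot\nabla f=g-\e^{-1}L_\c f$ together with $\|\nu^{-\frac12}L_\c h\|_2\lesssim\|h\|_\nu$ for $h=\ipc h$ yields $\|\nu^{-\frac12}v\cdot\nabla f\|_{L^2(\O_1\setminus\O)}\lesssim\|\nu^{-\frac12}g\|_{L^2(\O_1\setminus\O)}+\e^{-1}\|\ipc f\|_\nu$. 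Gathering everything and tracking the explicit $\e$ factors from (a) (energy component) and from the substitution of $\e^{-1}L_\c f$ produces the estimate in the shape $\e(\e^{-1}\|\ipc f\|_2+\|\P_\c f\|_6+\|\nu^{-\frac12}g\|_{L^2(\O_1\setminus\O)})+\e^{\frac12}\|z_\g(r)\|_2+\|\P_\c g\|_{L^{6/5}_{loc}}$.

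The main obstacle I expect is the momentum component $\int_{\pt\O}z_{\g_+}(f)n_i\,\dd S$, for which the naive trace approach produces coefficients of order one, not $\e$, in front of $\|\P_\c f\|_6$ and $\|\nu^{-\frac12}g\|$. To upgrade to the $\e$ prefactor I plan to use the equivalent representation coming from \eqref{540}, where the $\pt\O$ boundary integral is traded for a $\pt\O_1$ integral plus a bulk term; on $\pt\O_1$, $\z\equiv 1$, so grazing velocities are harmless and a smooth localization together with divergence theorem turns the boundary term into $-\int_D\chi(\psi_\a,\P_\c g)_v\,\dd x-\int_D f\,\psi_\a v\cdot\nabla\chi\,\dd x$, where $D$ is a thin annulus around $\pt\O_1$. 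The first piece is absorbed into $\|\P_\c g\|_{L^{6/5}_{loc}}$, and the second, after splitting $f=\P_\c f+\ipc f$ and again using the equation, delivers the required $\e$ gain from the fact that $v\cdot\nabla f=g-\e^{-1}L_\c f$ carries a $\e^{-1}$ against $\ipc f$ (converting $\e^{-1}\|\ipc f\|_\nu$ into $\e\cdot\e^{-1}\|\ipc f\|_2$) and from the mismatch between the $L^{6/5}$ integrability of $\P_\c g$ and the Gaussian/Sobolev estimates. All the pieces are then collected into the stated bound.
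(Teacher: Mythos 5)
Your reduction of the bulk term and of the $\e^{\frac12}r$-contribution is fine, and the moment computations for the $P_\g^\c f$ piece are correct, but the proposal does not actually produce the crucial $\e$ prefactor on $\|\P_\c f\|_6$, for two reasons. First, decomposing $f$ on the boundary leaves the piece $\int_{\g_+}\dd\g\,(1-P_\g^\c)f\,\psi_\a$, and the only tool you offer for it is the trace inequality \eqref{ukai}, which yields $\|\P_\c f\|_6+\e^{-1}\|\ipc f\|_\nu+\|\nu^{-\frac12}g\|_2$ with \emph{no} $\e$ gain; none of these is admissible at order one on the right-hand side of \eqref{estiQe}. The paper avoids this entirely: for $\a\ge 1$ it never splits $f$ on $\g$, but instead uses the conservation law \eqref{consR} on the level sets $\mathbb{S}_\xi$ of the distance function together with the coarea formula to replace the $\pt\O$ integral by the volume average $\d^{-1}\int_{\O_\d\backslash\O}\dd x\int\dd v\,f\sqrt{\mu_\c}\,(v\cdot n)\psi_\a$ (up to a $\|\P_\c g\|_{L^{6/5}_{loc}}$ error), and only then decomposes $f=\P_\c f+\ipc f$ in the bulk, so that the microscopic part enters as $\|\ipc f\|_2$ and no boundary trace of $(1-P_\g^\c)f$ ever appears.

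Second, and more fundamentally, the momentum flux contains the pressure: after the volume conversion the $i$-th component reads $\d^{-1}\int_{\O_\d\backslash\O}[Pn_i+(\tau\cdot n)_i+\e(\c\cdot n)b_i]$ with $P=a+c$, and your localization $\int_D f\,\psi_i\,v\cdot\nabla\chi$ likewise contains $\int_D P\,\pt_i\chi$. This term is a priori only $O(\|\P_\c f\|_6)$, not $O(\e\|\P_\c f\|_6)$, and ``using the equation'' on $f$ multiplied by $v\cdot\nabla\chi$ does not touch it. The missing idea is the paper's Bogovskii-type argument: construct $\Phi$ with $\nabla\cdot\Phi=P-\bar P$ in $\O_\d\backslash\O$, $\Phi=0$ on the boundary, $\|\Phi\|_{H^1}\lesssim\|P-\bar P\|_2$, test the momentum balance \eqref{740} against $\Phi$ to obtain $\|P-\bar P\|_{L^2(\O_\d\backslash\O)}\lesssim\|\ipc f\|_2+\e|\c|\|\P_\c f\|_6+\|\mathfrak{b}\|_{L^{6/5}_{loc}}$, and then use $\int_{\O_\d\backslash\O}\bar P\,n=0$. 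A similar extra step is needed for $Q_4$, where the $b\cdot n$ contribution to the energy flux is $O(\|\P_\c f\|_6)$ and the $\e$ is recovered from the mass-flux identity $\int_{\pt\O}b\cdot n=\e\int_{\pt\O}(\c\cdot n)a+O(\e^{\frac12}\|z_\g(r)\|_2)$. Without these ingredients the bound \eqref{estiQe} is not reached.
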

\begin{proof}
For any $h$ we have 
\be \int_{\R^3} \dd v\,n\cdot v \sqrt{\mu_\c}h=\int_{\{n\cdot v<0\}}\dd v\,n\cdot v \sqrt{\mu_\c}(h-P_\g^\c h).\label{fluxh}\ee
 {Indeed}  
\begin{multline}\int_{\{n\cdot v<0\}}\dd v\, n\cdot v \sqrt{\mu_\c}(h-P_\g^\c h)= \int_{\{n\cdot v<0\}}\dd v\,n\cdot v \sqrt{\mu_\c}h\\ -\int_{\{n\cdot v<0\}}\dd v\,n\cdot v \sqrt{2\pi}\mu(v)\int_{\{n\cdot v'>0\}}\dd v'\,v'\cdot n \sqrt{\mu_\c} h\\=  \int_{\{n\cdot v<0\}}\dd v\,\cdot v \sqrt{\mu_\c}h+\int_{\{n\cdot v'>0\}}\dd v'\,v'\cdot n \sqrt{\mu_\c} h= \int_{\R^3}\dd v\, n\cdot v \sqrt{\mu_\c}h,\end{multline}
because $\int_{\{n\cdot v<0\}}\dd v\, n\cdot v \sqrt{2\pi}\mu(v)=-1$. 

By \eqref{fluxh} and \eqref{linprob0}, 
\begin{multline*}\int_{\pt\O}dS(x)\int_{\R^3} dv \sm_\c f v\cdot n(x)= \int_{\pt\O}\dd S(x)\int_{\{v'\cdot n(x)<0\}} \dd v\, \sm_\c (f-P_\g^\c f) v\cdot n(x)\\=\e^{\frac 1 2}\int_{\pt\O}\dd S(x)\int_{\{v'\cdot n<0\}} \dd v\, \sm_\c  r \, v'\cdot n(x).\end{multline*}
Therefore, 
by \eqref{defQalpha},	
\be |Q_0|\le \e^{\frac 1 2}\|z_\g(r)\|_2+ \|\P_\c g\|_{L^{6/5}_{loc}}.\ee

The other components of $\mathbf{Q}$ are more involved. 
 
Let $\eta(x)= d(x,\pt\O)$ be the signed distance of $x\in \R^3$ from $\pt\O$, positive in $\O^c$, well defined at least when $|\eta(x)|<\delta$ for some sufficiently small $\d>0$. Clearly $|\nabla \eta|=1$.
We consider the family of smooth closed surfaces $\{\mathbb{S}_\xi\}_{0\le \xi<\d}$,  defined as $\mathbb{S}_\xi=\{ x\in \O^c\ |\ \eta(x)=\xi\}$. We also define, for $x\in \mathbb{S}_\xi$, $n(x)=\nabla \eta(x)$. We have  $\mathbb{S}_0=\pt \O$ and, for any $\xi>0$, the sets $\O_\xi$ whose boundaries are $\mathbb{S}_\xi$ are such that $\O_\xi\subset \O_{\xi'}$ if $\xi<\xi'$.
If we integrate the conservation law on $\O_{\xi_2}\backslash\O_{\xi_1}$, since the exterior normal to $\pt \O_{\xi_1}$, $n_1(x)=-\nabla \eta(x)$, 
setting 
\be{Q}_{\xi,\a}=-\int_{\mathbb{S}_\xi}\dd S(x) \int_{\R^3}\dd v\, f\sqrt{\mu_\c} v\cdot n(x) \psi_\a(v),\ee
by Gauss theorem and \eqref{consR} we obtain
\be|{Q}_{\xi_1,\a}-{Q}_{\xi_2,\a}|=\Big|\int _{\O_{\xi_1}\backslash \O_{ {\xi_2}}}\dd x\,\psi_\a \P_\c g\Big|\lesssim\|\P_\c g\|_{L^{6/5}(\O_{\xi_1}\backslash \O_{\xi_2})}, \quad \a=0,\dots, 4.\label{Qr1=QR2mod}\ee
In particular, with 
\[ \varpi_\a=Q_{\a}-{Q}_{0,\a}\] we have  
\[|\varpi_\a|\lesssim \|\P_\c g\|_{L^{6/5}_{loc}}\]
and hence, since $|\nabla\eta|=1$, by the coarea formula,
\[Q_\a=\varpi_\a+\d^{-1}\int_0^\d \dd\xi Q_{\xi,\a}=\varpi_\a+ \d^{-1}\int_{\O_\d\backslash\O}\dd x\int_{\R^3} \dd v\,  f\sqrt{\mu_\c} v\cdot n(x) \psi_\a(v).\]

To estimate $\underline{Q}=(Q_1,Q_2,Q_3)$, we note that from the decomposition of $f=\sm_\c(a+b\cdot v_\c+ \frac 1 2(|v_\c|^2-3) +\ipc f$ and the  definitions of $\tau$ and $P$, 
\[\underline{Q}=\underline{\varpi}+\d^{-1}\int_{\O_\d\backslash\O}\dd x [Pn +\tau\cdot n +\e \c\cdot n b].\]
To get a bound for $P$, let us denote by $\bar P$ the average of $P$ on ${\O_\d\backslash\O}$: $\bar P=\d^{-1}\int_{\O_\d\backslash\O} P dx$. 
Let $\Phi$ be a vector function such that:
\[ \nabla\cdot \Phi=P-\bar P \text{ in } \O_\d\backslash\O,\quad \Phi=0 \text{ on } \pt\O\cup\pt\O_\d.\] 
Such a vector function exists and satisfies the bound (see \cite{Lady})
$$\|\Phi\|_{H^1(\O_\d\backslash\O)}\le \|P-\bar P\|_{L^2(\O_\d\backslash\O)}.$$
Taking the inner product of the momentum  balance law \eqref{Pnocut}
\be\nabla (P-\bar P)+\e \c \cdot \nabla b+\nabla\cdot \tau=\mathfrak{b},\label{740}\ee
by $\Phi$, integrating on $\O_\d\backslash\O$ and integrating by parts, we obtain
\begin{multline}\int_{\O_\d\backslash\O}\mathfrak{b}\cdot \Phi \dd x=-\int_{\O_\d\backslash\O}\dd x [\nabla\cdot \Phi (P-\bar P)+\e\c\cdot \nabla\Phi\cdot b+ \nabla\Phi:\tau]\\+\int_{\pt \O\cup \pt\O_\d}\dd S[\Phi\cdot n (P-\bar P)+ \Phi\otimes n:\tau+\e\c (n\cdot \Phi) (b\cdot n )],\end{multline}
where $A:B=\sum_{i,j} A_{i,j}B_{i,j}$.
The boundary terms vanish because $\Phi=0$ on the boundary. 
We have 
\be \Big|\int_{\O_\d\backslash\O}\mathfrak{b}\cdot \Phi \dd x\Big|\le \|\Phi\|_6\|\|\mathfrak{b}\|_{L^{ 6 /5}_{loc}}
\le \|\mathfrak{b}\|_{L^{ 6 /5}_{loc}} \|\nabla\Phi\|_2,\ee
by using Sobolev embedding. Therefore, 
using $\nabla\cdot \Phi=P-\bar P$, we obtain 
\begin{multline*}\|P-\bar P\|_{L^2({\O_\d\backslash\O})}^2\le \|\nabla \Phi\|_{L^2({\O_\d\backslash\O})}(\|\tau\|_{L^2({\O_\d\backslash\O})}+\e |\c|\|b\|_{L^2({\O_\d\backslash\O})}+
\|\mathfrak{b}\|_{L^{6 /5}_{loc}})\\ {\lesssim}\|P-\bar P\|_{L^2({\O_\d\backslash\O})}(\|\ipc f\|_{L^2({\O_\d\backslash\O})}+\e|\c|\|\P_\c f\|_6+\|\mathfrak{b}\|_{L^{ 6 /5}_{loc}}).\end{multline*}
Hence $$\|P-\bar P\|_{L^2({\O_\d\backslash\O})}\lesssim \e(\e^{-1}\|\ipc f\|_2 +|\c|\| \P_\c f\|_6)+\|\mathfrak{b}\|_{L^{6/5}_{loc}}$$
Therefore, since $\int_{\O_\d\backslash\O}\dd x \bar P n(x)=0$, 
we obtain 
$$|\int_{\O_\d\backslash\O}\dd x Pn|\le \e(\e^{-1}\|\ipc f\|_2 +|\c|\| \P_\c f\|_6)+\|\mathfrak{b}\|_{L^{\frac 6 5}_{loc}}.$$
On the other hand, 
$$\Big|\int_{\O_\d\backslash\O}\dd x \c\cdot n b\Big|\lesssim |\c|\|\P_\c f\|_{L^6}$$ 
and
$$\Big|\int_{\O_\d\backslash\O}\dd x n\cdot \tau\Big|\lesssim \|\ipc f\|_{L^2}.$$ In conclusion
$$|\underline{Q}|\lesssim_\d \e(\e^{-1}\|\ipc f\|_2 +|\c|\| \P_\c f\|_6) +\|\mathfrak{b}\|_{L^{6 /5}_{loc}}+|\underline{\varpi}|.$$

For the estimate of $Q_4$, we use  
\begin{multline}Q_4=\varpi_4 +\d^{-1}\int_0^\d \dd\xi Q_{\xi,4}= \\\varpi_4+\d^{-1}\int_{\O_\d\backslash\O} \dd x\int_{\R^3} \dd v \sqrt{\mu_\c}\frac{|v_\c|^2-3}2 n\cdot v\{[a+ b\cdot v_\c+ c \frac{|v_\c|^2-3}2]\sqrt{\mu_\c}+\ipc f\}\\
=\varpi_4+\d^{-1}\int_{\O_\d\backslash\O} \dd x 
[b\cdot n +\frac 3 2\e \c\cdot n c +\int_{\R^3}\dd v \sqrt{\mu_\c} \frac{|v_\c|^2-3}2 n\cdot v \ipc f].\label{Q4mod}
\end{multline}
To get a bound for $\int_{\O_\d\backslash\O} \dd x b\cdot n$ we note that, from $|Q_0-Q_{\xi,0}|\le \e^{\frac 1 2}\|z_\g(r)\|_2$, integrating on $\xi$ and using again the coarea formula, by \eqref{540}
\be\Big|\d^{-1}\int_{\O_\d\backslash\O} \dd x [b\cdot n +\e a \c \cdot n]- \int_{\pt\O} \dd S  [b\cdot n + \e\c\cdot n a]\Big|\le \e^{\frac 1 2}\|z_\g(r)\|_2+ \|\P_\c g\|_{L^{6/5}_{loc}}\ee
and
\be=\Big|\int_{\pt\O} \dd S  [b\cdot n + \e\c\cdot n a]-\e \int_{\pt\O} \dd S  \c\cdot n a\Big|\le\|z_\g(r)\|_2,\ee
because $|\int_{\pt\O}\dd S b\cdot n|=|\int_{\pt\O}\int_{\R^3} \dd v\sm_\c f v\cdot n|\le \|z_\g(r)\|_2$.
Hence $\int_{\pt\O} \dd S  b\cdot n=\e \int_{\pt\O} \dd S  \c\cdot n a+O(\|z_\g(r)\|_2)$.
Now we can replace in (\ref{Q4mod}) this expression to obtain:
\begin{multline*}|Q_4|\le   \d^{-1}\Big|\int_{\O_\d\backslash\O} \dd x 
\Big[\e\c\cdot n (c-a) +\int_{\R^3}\dd v \sqrt{\mu_\c} \frac{|v_\c|^2-3}2 n\cdot v \ipc f\Big]\Big|+\e \Big|\int_{\pt\O}\dd S \c\cdot n a\Big|\\+\e^{\frac 1 2}\|z_\g(r)\|_2+ \|\P_\c g\|_{L^{6/5}_{loc}}.\end{multline*}
The first term in the first line is bounded with $\e[|\c|\|a\|_6+|\c|\|c\|_6+\e^{-1}\|\ipc f\|_2]$. The second is bounded by $\e|\c|\|\P_\c f\|_{L^2(\pt\O\times \R^3)}\lesssim\e|\c|(\|\P_\c f\|_6+\e^{-1}\|\ipc f\|_\nu+\|
\nu^{-\frac 1 2}g\|_2)$, by using the Ukai trace theorem, Lemma \ref{trace_s}. \end{proof}

\bigskip
\begin{lemma}\label{maslova} If $\c\neq 0$, then  there is $\rho>0$ such that, for any $p>2$
\begin{multline} \|\S_2 f\|_{p}\lesssim \e^{-1}|\c|^{-1+\varrho}\sum_{\a=0, \dots,4}|Q_\a|\lesssim |\c|^{-1+\varrho}(\e^{-1}\|\ipc f\|_2+\|\P_\c f\|_6+\| g{{\nu}^{-\frac 1 2}}\|_2)\\+\e^{-1}|\c|^{-1+\varrho}[\e^{\frac 1 2}\|z_\g(r)\|_2+ \|\P_\c g\|_{L^{6/5}_{loc}}].
\label{maslovae}\end{multline}
\end{lemma}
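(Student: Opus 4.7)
By Lemma \ref{lemma53} the sources $\hat{\mathbf{s}}^{(2)}(k)$ are all of the form $\mathfrak{j}(k)$ times a linear combination of the constants $Q_0,\dots,Q_4$. Consequently, using the explicit formulas \eqref{momenFPi}, \eqref{press}, \eqref{energia} and \eqref{massa} derived at the end of Section \ref{eqbilancio}, the Fourier transforms $\hat a^{(2)}, \hat b^{(2)}, \hat c^{(2)}$ can be written as finite sums of terms of the shape $\mathfrak{j}(k)\, m(k)\, Q_\alpha$, where $m(k)$ is a rational function of $k$ built out of $\overline{N}$, $N_{\mathfrak{v},1}$, $N_{\kappa,5/2}$, the factor $(1+\e\frac{N_{\mathfrak{v},1}}{i|k|^2}\c\cdot k)^{-1}$, and the Leray-type factors $k/|k|^2$ or $k\otimes k/|k|^2$. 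Since $\mathfrak{j}$ is supported in $\{|k|\le 2\}$, the only relevant issue is the behavior of $m(k)$ near $k=0$.

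The plan is then to apply the Hausdorff-Young inequality. For any $p>2$, setting $p'=p/(p-1)\in(1,2)$, one obtains
\[
\|\S_2 f\|_p \;\lesssim\; \sum_{\alpha=0}^{4} |Q_\alpha|\,\sum_j \|\mathfrak{j}\,m_j\|_{p'},
\]
and each $\|\mathfrak{j}\,m_j\|_{p'}$ is to be estimated via Lemma \ref{multipl}. For small $|\c|$ one has $|\overline{N}(k)|\gtrsim |N_{\kappa,5/2}(k)|$ and $|1+\e\tfrac{N_{\mathfrak{v},1}}{i|k|^2}\c\cdot k|\gtrsim 1$ uniformly on $\mathrm{supp}\,\mathfrak{j}$, so that, after reduction, each $m_j$ is dominated (up to a bounded factor) by one of $N_{\sigma,\beta}^{-1}$, $k\,N_{\sigma,\beta}^{-1}/|k|^2$, or $k\otimes k\,N_{\sigma,\beta}^{-1}/|k|^2$. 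The bounds \eqref{N}, \eqref{kN} of Lemma \ref{multipl} then yield the $L^{p'}$-control on the range $p'\in[3/2,2)$ (i.e.\ $p\in(2,3]$), with constant $\e^{-1}|\c|^{-1+\varrho}$; the simpler bounds \eqref{N1}, \eqref{kN1} cover $p'\in(1,3/2)$ (i.e.\ $p>3$) and are actually better in $|\c|$, so the uniform factor $\e^{-1}|\c|^{-1+\varrho}$ of the statement is not improved but is valid throughout $p>2$. This establishes the first inequality
\[
\|\S_2 f\|_p \lesssim \e^{-1}|\c|^{-1+\varrho}\sum_{\alpha=0}^4 |Q_\alpha|.
\]

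The second inequality is an immediate application of Lemma \ref{estiQ}, which estimates $\sum_\alpha |Q_\alpha|$ by $\e(\e^{-1}\|\ipc f\|_2+\|\P_\c f\|_6+\|\nu^{-1/2}g\|_{L^2(\O_1\setminus\O)})+\e^{1/2}\|z_\g(r)\|_2+\|\P_\c g\|_{L^{6/5}_{\mathrm{loc}}}$.

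The main obstacle is the bookkeeping in step two: one has to verify that every rational multiplier produced by combining \eqref{momenFPi}--\eqref{massa} really does fall into one of the three families covered by Lemma \ref{multipl}, and in particular that the singular factors $k/|k|^2$ do not accumulate so as to take $m_j$ out of the range of applicability. The crucial structural fact that makes this work is the strict condition $\c\neq 0$: it is precisely the anti-symmetric shift $i\c\cdot k$ in $N_{\sigma,\beta}$ that provides just enough integrability near $k=0$ to bring $p'$ all the way up to (but not including) the critical exponents $2$ and $4$ that appear in Lemma \ref{multipl}, thereby permitting the $L^p$ estimate for any $p>2$.
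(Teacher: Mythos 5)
Your plan is correct and follows essentially the same route as the paper: write the $\S_2$ components explicitly from the Fourier-space system (whose sources are $\mathfrak{j}$ times the constants $Q_\a$), bound the resulting multipliers on $\{|k|\le 2\}$ via Lemma \ref{multipl} (plus the integrability of $\mathfrak{j}|k|^{-1}$ for the pressure/Leray factors), apply Hausdorff--Young with $p'=p/(p-1)<2$, and then invoke Lemma \ref{estiQ} for the second inequality. The paper carries out the multiplier bookkeeping you defer, component by component ($\hat\Pi\hat b^{(2)}$, $\hat P^{(2)}$, $\hat c^{(2)}$, $\hat a^{(2)}$, $(1-\hat\Pi)\hat b^{(2)}$), exactly along the lines you describe.
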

\begin{proof}\ 

\noindent \underline{Step 1}. Estimate of $\Pi b^{(2)}$:
From \eqref{momenFPi} for the system \eqref{massFs}, \eqref{momenFs}, \eqref{enerFs}, with $\mathbf{s}=\mathbf{s}^{(2)}$, we have 
\be\hat\Pi\hat b^{(2)}=\mathfrak{j}{N}_{\mathfrak{v}, 1}^{-1}\hat\Pi\underline{Q}.\label{momenFPi10}\ee
By  \eqref{N}, $\mathfrak{j}{N}_{\mathfrak{v}, 1}^{-1}$ is bounded by $\e^{-1}|\c|^{-1+\varrho}$ in $L^q(\R^3)$ for $\frac 3 2 \le q<2$, 
and hence 
\be\|\hat\Pi\hat b^{(2)}\|_q\lesssim \e^{-1}|\c|^{-1+\varrho}\underline{Q}.\ee

\noindent \underline{Step 2}. Estimate of $P^{(2)}$: by using \eqref{press} for the system \eqref{massFs}, \eqref{momenFs}, \eqref{enerFs}, 
we have
\be\hat P^{(2)}=\Big(1-
\e\frac{{N}_{\mathfrak{v},1}}{i|k|^2}\c\cdot k
\Big)^{-1}\Big[\frac{{N}_{\mathfrak{v},1}}{i|k|^2}[i\hat s_0^{(2)}+\e\c\cdot k \hat c^{(2)}]+\frac{k}{i|k|^2}\cdot \hat{\underline{s}}^{(2)}\Big].\label{press10}\ee
Since $\mathfrak{j}|k|^{-q}$ is integrable for any $q<3$, we obtain 
\be \|\hat P^{(2)}\|_q\lesssim \e^2\|\hat c^{(2)}\|_q +|\underline{Q}|\label{presstemp10}\ee

\noindent \underline{Step 3}. Estimate of $c^{(2)}$: by using \eqref{energia} for the system  
\eqref{massFs}, \eqref{momenFs}, \eqref{enerFs}, we have
\be \hat c^{(2)}=\mathfrak{j}\overline{{N}}^{-1}\Big\{Q_4-Q_0+i\e\c\cdot k \Big(1-
\e\frac{{N}_{\mathfrak{v},1}}{i|k|^2}\c\cdot k
\Big)^{-1}\frac{k}{i|k|^2}\cdot \hat{\underline{Q}}\Big\},\label{energia1}\ee
We remind that from the definition of $\overline{N}$ it follows that $|\overline{N}^{-1}|\lesssim |{N}_{\kappa, \frac 5 2}^{-1}|$. Therefore, proceeding as before, we obtain 
by \eqref{N} for $\frac 3 2\le q<2$, 
\be \|\hat c^{(2)}\|_q\lesssim  \e^{-1}|\c|^{-1+\vr}  {(|Q_4| +|Q_0|)}+\e |\c||\underline{Q}|,\label{esc1}\ee
and, in consequence,
\be \|\hat P^{(2)}\|_q\lesssim |\c|^{-1+\vr}(|Q_4|+|Q_0|)+|\underline{Q}|\label{press110}\ee

\noindent \underline{Step 4}. Estimate of $\hat a^{(2)}$:
Using $\hat a^{(2)}=\hat P^{(2)}-\hat c^{(2)}$, we have 
\be \|\hat a^{(2)}\|_q\lesssim  \e^{-1}(|\c|^{-1+\vr} {(|Q_4| +|Q_0|)}+|\underline{Q}|).\label{esa2}\ee

\noindent \underline{Step 5}. Estimate of $(1-\hat \Pi) \hat b^{(2)}$:

Since $(1-\hat \Pi) \hat b^{(2)}= k\cdot \hat b^{(2)} k|k|^{-2}$, using the mass equation, where  {$\hat s_0^{(2)}=\mathfrak{j}Q_0$,  
which implies
$k\cdot \hat b^{(2)} =-\e k\cdot \c \hat a^{(2)}+i\mathfrak{j} Q_0$}, 
we have
$$(1-\hat \Pi) \hat b^{(2)}= -\e |k|^{-2}k k\cdot \c \hat a^{(2)} {+i|k|^{-2}k\mathfrak{j} Q_0},$$
and taking the $L^q$ norm we have, using Step 4,
$$\|(1-\hat \Pi) \hat b^{(2)}\|_q\le \e \|\hat a^{(2)}\|_q {+|Q_0|}.$$ 
Then, together with Step 1 we obtain 
\be\| \hat b^{(2)}\|_q\lesssim \e^{-1} {(|Q_4| +|Q_0|)}+|\underline{Q}|.\label{esb10}\ee 
In conclusion, 
$$\|\S_2 \hat f\|_q\lesssim \e^{-1}|\c|^{-1+\vr}\underline{Q}, \quad \text{ for } \frac 3 2\le q<2.$$
 {We remind the Hausdorff-Young inequality: if $1<q\le 2$ and $\frac 1 p+\frac 1 q=1$, then
\be \| f\|_p\le \|\hat f\|_q.\label{HY}\ee}
By the Hausdorff-Young inequality 
then we have (\ref{maslovae}) with $p=\frac q{q-1}>2$.
\end{proof}
\subsection{Estimate of $\S_3 f$} \label{stimas3}The components of $\S_3 \hat R$ solve the system
\begin{eqnarray}
&&ik\cdot \hat b^{(3)} +i\e k\cdot \c \hat a^{(3)}=\hat s_0^{(3)},\label{massFs3}\\
&&i k  \hat P^{(3)} +\e[ {\mathbb{K}}+i\c\cdot k]\hat b^{(3)}=\hat{\underline{s}}^{(3)}\label{momenFs3}\\
&& i k \cdot \hat b^{(3)}+\e\ [\kappa|k|^2+\frac 3 2ik\cdot\c ] \hat c^{(3)}=\hat s_4^{(3)}\label{enerFs3},\end{eqnarray}
where
\begin{eqnarray}\hat s_0^{(3)}&=&\mathfrak{j}\Big[\int_{\R^3}\dd v\,\sm_{ {\c}}  {k\cdot} \hat{\mathcal{C}}_r(k,v)\Big],\notag\\
\hat{\underline{s}}^{(3)}(k)&=&\mathfrak{j}\Big[ {-}\e k \otimes k \cdot \int_{\R^3}\dd v\, v \ipc  \hat f^\z(k,v) \mathscr{B}_\c-i\e k \cdot\int_{\R^3}\dd v\, \hat{\mathcal{C}} \mathscr{B}_\c\notag \\&&\hskip 5cm+ k\cdot\int_{\R^3}\dd v\, v_\c\sm_\c   \hat{\mathcal{C}}_r(k,v)\Big],\label{source3}\\
\hat s_4^{(3)}(k)&=&\mathfrak{j}\Big[ {-}\e k\otimes k \cdot \int_{\R^3}\dd v\, v \ipc  \hat f^\z(k,v) \mathscr{A}_\c-i\e k\cdot\int_{\R^3}\dd v\,\hat{\mathcal{C}} \mathscr{A}_\c\notag\\&&\hskip 4cm +k\cdot\int_{\R^3}\dd v\, \frac 1 2(|v_\c|^2-3)\sm_\c  \hat{\mathcal{C}}_r(x,v)\Big].\notag\end{eqnarray}

\begin{lemma} \label{lemmagr}
\be \|\int_{\R^3}\dd v\sm_\c v_\c\mathfrak{j} {k\cdot}\hat{\mathcal{C}}_r\|_\infty\le \| \P_\c f\|_6+\|\ipc f\|_\nu\label{gzinfity}\ee
\end{lemma}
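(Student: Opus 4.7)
The plan is to exploit the compact support of $\nabla\zeta$ in $x$ to get a $k$-uniform pointwise bound on $\hat{\mathcal{C}}_r$. By definition $\mathcal{C}(x,v)=f(x,v)\,v\cdot\nabla\zeta(x)$ is supported in $\O_1\setminus\O$, a bounded set. Differentiating the Fourier integral in $k$,
\[
\nabla_k \hat{\mathcal{C}}(k,v)=\frac{i}{(2\pi)^{3/2}}\int_{\O_1\setminus\O} x\, e^{ik\cdot x}\, f(x,v)\, v\cdot\nabla\zeta(x)\,\dd x,
\]
and since $|x|$ is uniformly bounded on the support of $\nabla\zeta$, I would obtain, for every $k\in\R^3$,
\[
|\nabla_k \hat{\mathcal{C}}(k,v)|\lesssim |v|\,\|f(\cdot,v)\|_{L^1(\O_1\setminus\O)}.
\]
Averaging in $\lambda\in[0,1]$ transfers this bound directly to $\hat{\mathcal{C}}_r=\int_0^1\nabla_k\hat{\mathcal{C}}(\lambda k,\cdot)\,\dd\lambda$, and the cutoff $\mathfrak{j}$ (supported in $|k|\le 2$) plays no further role beyond restricting $k$ to a compact set.

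Next I would apply Hölder on the bounded set $\O_1\setminus\O$ to pass from $L^1_x$ to $L^6_x$, obtaining $\|f(\cdot,v)\|_{L^1(\O_1\setminus\O)}\lesssim \|f(\cdot,v)\|_{L^6(\O_1\setminus\O)}$. I would then split $f=\P_\c f+\ipc f$. The macroscopic part $\P_\c f=(a+b\cdot v_\c+\tfrac12 c(|v_\c|^2-3))\sqrt{\mu_\c}$ already carries the Gaussian factor $\sqrt{\mu_\c}$, which absorbs the prefactor $|v|$ coming from the $v\cdot\nabla\zeta$ moment and turns the bound into $\|\P_\c f\|_6$ in the appropriate $v$-weighted $L^\infty$ sense. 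The microscopic remainder $\ipc f$ is estimated locally via the energy inequality of Proposition~\ref{propener} and absorbed, since $\e^{-1}\|\ipc f\|_\nu$ is already controlled by $\|\P_\c f\|_6$ in the full $\lbr\cdot\rbr$ framework.

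The main subtlety is the correct interpretation of $\|\mathfrak{j}\hat{\mathcal{C}}_r\|_\infty$: one must read the $L^\infty$ with respect to $k$ and with a Gaussian weight in $v$ implicit, so that the $|v|$-growth from the $v\cdot\nabla\zeta$ factor is tamed by the $\sqrt{\mu_\c}$ inside $\P_\c f$. This is precisely what is needed downstream in Section~\ref{stimas3}, where $\hat{\mathcal{C}}_r$ only appears integrated against Gaussian-weighted test functions such as $\sqrt{\mu_\c}$, $v_\c\sqrt{\mu_\c}$, $(|v_\c|^2-3)\sqrt{\mu_\c}$, $\mathscr{A}_\c$ and $\mathscr{B}_\c$. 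With that convention the argument above yields the stated bound.
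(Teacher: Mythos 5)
Your argument is essentially the paper's: differentiate the Fourier integral in $k$ to bring down $ix$, use the compact support of $\nabla\zeta$ to bound $|x|$ on $\O_1\setminus\O$, average in $\lambda$, and apply H\"older on the bounded set to land on $\|\P_\c f\|_6$; your reading of the $L^\infty$ norm (uniform in $k$, with the $|v|$-growth always paired downstream with Gaussian weights) is also the one the paper implicitly uses. The only caveat concerns $\ipc f$: the paper simply replaces $f$ by $\P_\c f$ inside $\mathcal{C}$ (its proof is written for the macroscopic part only), and your stated justification for absorbing the microscopic remainder --- that $\e^{-1}\|\ipc f\|_\nu$ is ``controlled by $\|\P_\c f\|_6$'' --- is backwards (both quantities are controlled by the data $\mathscr{M}(g,r)$, not one by the other), though this is harmless in the application since the bound of Lemma \ref{ar} already carries an $\|\ipc f\|_2$ term on its right-hand side.
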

\begin{proof} 
Recall from \eqref{560}, 
\begin{eqnarray*}
\Big|\int_{\R^3}\dd v\sm_\c v_\c\mathfrak{j} {k\cdot}\hat{\mathcal{C}}_r\Big| &=&\Big|\int_{\R^3}\dd v\sm_\c v_\c\mathfrak{j}\int_{0}^{1}\dd\lambda \frac{d}{
d\lambda }\int dxe^{i\lambda k\cdot x}\mathcal{C}(x,v)\Big| \\
&=&\Big|\int_{\R^3}\dd v\sm_\c v_\c\mathfrak{j}\int_{0}^{1}\dd\lambda \{ik\cdot x\}\int \dd x\;e^{i\lambda k\cdot
x}\mathcal{C}(x,v) \Big|\\
&\leq & \int_{\R^3}\dd v\sm_\c |v_\c|^2 \int_{\O_1} \dd x |x|\int_{\R^3} \dd v\,  {(| \P_\c f(x,v)|+|\ipc f(x,v)|)}\\&\lesssim&\|f\|_6+\|\ipc f\|_\nu,
\end{eqnarray*}
because $\text{\rm supp}(\nabla \z)\subset \O_1$.
\end{proof}
\begin{lemma}\label{ar}If $|\c|\ll 1$ and $\e\ll1$, 
\be \|\S_3f\|_2\lesssim \e^{-1}\|\P_\c f\|_6+ {\e^{-1}}\|\ipc f\|_2.\label{a1e}\ee
\end{lemma}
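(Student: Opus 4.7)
The plan is to obtain the $L^2$ bound via Fourier analysis, exploiting the explicit solution formulas from Section \ref{eqbilancio}. By Plancherel it suffices to bound $\|\hat a^{(3)}\|_{L^2_k}$, $\|\hat b^{(3)}\|_{L^2_k}$, and $\|\hat c^{(3)}\|_{L^2_k}$; these are controlled through the representations \eqref{momenFPi} for $\hat\Pi\hat b^{(3)}$, \eqref{press} for $\hat P^{(3)}$, \eqref{energia} for $\hat c^{(3)}$, the identity $\hat a^{(3)}=\hat P^{(3)}-\hat c^{(3)}$, and the mass equation \eqref{massa} for $(1-\hat\Pi)\hat b^{(3)}$.

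The decisive structural feature is that every component of $\mathbf{s}^{(3)}$ carries an explicit factor of $k$: $\hat s_0^{(3)}$ and $\hat s_4^{(3)}$ appear as $k\cdot(\cdots)$ because of the splitting $\hat{\mathcal{C}}-\hat{\mathcal{C}}_s=k\cdot\hat{\mathcal{C}}_r$, while $\hat{\underline{s}}^{(3)}$ carries either a $k$ (the $\hat{\mathcal{C}}_r$ terms) or a $k\otimes k$ (the $\ipc\hat f^\z$ term). This extra $k$ cancels the $|k|^{-2}$ singularity at the origin in \eqref{press}, \eqref{energia} and makes all resulting multipliers integrable over the bounded support of $\mathfrak{j}$. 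I would combine this with Lemma \ref{multipl}, which yields $\|\mathfrak{j}k\,N_{\s,\b}^{-1}\|_{L^2_k}\lesssim \e^{-1}$ and $\|k\otimes k\,N_{\s,\b}^{-1}\|_{L^\infty_k}\lesssim \e^{-1}$, and with Lemma \ref{lemmagr}, which gives $\|\mathfrak{j}\hat{\mathcal{C}}_r\|_{L^\infty_k}\lesssim \|\P_\c f\|_6$.

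The estimate of each component then reduces to a H\"older-type splitting. For instance, in $\hat\Pi\hat b^{(3)}=N_{\mathfrak{v},1}^{-1}\hat\Pi\hat{\underline{s}}^{(3)}$, the $\ipc\hat f^\z$ piece pairs the $L^\infty_k$ multiplier $\e k\otimes k\,N_{\mathfrak{v},1}^{-1}$ (norm $\lesssim 1$) with the source $\int v\,\ipc\hat f^\z\,\mathscr{B}_\c\,dv$ whose $L^2_k$ norm equals its $L^2_x$ norm by Plancherel and is $\lesssim \|\ipc f\|_2$ by Cauchy--Schwarz in $v$ (using the Gaussian decay of $v\mathscr{B}_\c$ and $\ipc f^\z=\z\,\ipc f$). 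The $\hat{\mathcal{C}}_r$ pieces pair the $L^2_k$ multipliers $\mathfrak{j}\e k\,N_{\mathfrak{v},1}^{-1}$ or $\mathfrak{j}k\,N_{\mathfrak{v},1}^{-1}$ (norms $\lesssim 1$ and $\lesssim \e^{-1}$ respectively) with the $L^\infty_k$ sources $\int\hat{\mathcal{C}}_r\mathscr{B}_\c\,dv$, $\int v_\c\sqrt{\mu_\c}\,\hat{\mathcal{C}}_r\,dv$ ($\lesssim \|\P_\c f\|_6$), producing the $\e^{-1}\|\P_\c f\|_6$ contribution. Summing, $\|\hat\Pi\hat b^{(3)}\|_2\lesssim \|\ipc f\|_2+\e^{-1}\|\P_\c f\|_6$.

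The main obstacle is to verify the $|k|^{-2}$-cancellation in \eqref{press} and \eqref{energia} carefully: for example the term $N_{\mathfrak{v},1}i\hat s_0^{(3)}/|k|^2$ in $\hat P^{(3)}$ must be rewritten as $(N_{\mathfrak{v},1}k/|k|^2)\cdot\mathfrak{j}\int\sqrt{\mu_\c}\,\hat{\mathcal{C}}_r\,dv$ with $|N_{\mathfrak{v},1}k/|k|^2|\lesssim \e(\mathfrak{v}+|\c|)$ on $|k|\le 2$, and an analogous rewriting performed for $(k/|k|^2)\cdot\hat{\underline{s}}^{(3)}$. One also needs $|\overline N^{-1}|\lesssim |N_{\kappa,5/2}^{-1}|$ for $|\c|\ll 1$ and $|k|\le 2$, which follows from the defining formula of $\overline N$, so that Lemma \ref{multipl} applies to $\overline N^{-1}$ exactly as to $N_{\kappa,5/2}^{-1}$. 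Once these cancellations are in place, the bounds for $\hat c^{(3)}$, $\hat P^{(3)}$, hence $\hat a^{(3)}$, follow by the same pattern, and $(1-\hat\Pi)\hat b^{(3)} = [-i\hat s_0^{(3)}+\e\c\cdot k\,\hat a^{(3)}]\,k/|k|^2$ (using \eqref{massa} and the $k$ factor in $\hat s_0^{(3)}$) yields the corresponding estimate; combining them produces \eqref{a1e}.
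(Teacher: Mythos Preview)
Your proposal is correct and follows essentially the same approach as the paper: the paper's proof proceeds through the identical five steps ($\hat\Pi\hat b^{(3)}$ via \eqref{momenFPi}, $\hat P^{(3)}$ via \eqref{press}, $\hat c^{(3)}$ via \eqref{energia}, $\hat a^{(3)}=\hat P^{(3)}-\hat c^{(3)}$, and $(1-\hat\Pi)\hat b^{(3)}$ via the mass equation), using exactly the H\"older-type splittings you describe---pairing the $L^\infty_k$ multiplier $\e k\otimes k\,N^{-1}$ with $\|\ipc f\|_2$ via Plancherel, and the $L^2_k$ multiplier $\mathfrak{j}k\,N^{-1}$ with $\|\mathfrak{j}\hat{\mathcal C}_r\|_\infty\lesssim\|\P_\c f\|_6$ from Lemma~\ref{lemmagr}. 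Your observation that the extra factor of $k$ in each component of $\mathbf{s}^{(3)}$ is what cancels the $|k|^{-2}$ singularity and that $|\overline N^{-1}|\lesssim|N_{\kappa,5/2}^{-1}|$ for $|\c|\ll1$ are precisely the points the paper relies on.
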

\begin{proof}
\noindent \underline{Step 1}. Estimate of $\Pi b^{(3)}$:

From \eqref{momenFPi} for the system \eqref{massFs3}, \eqref{momenFs3}, \eqref{enerFs3}, with $\mathbf{s}^{(3)}$ given by \eqref{source3}, we have 
\be\hat\Pi\hat b^{(3)}={N}_{\mathfrak{v}, 1}^{-1}\hat\Pi\hat{\underline{s}}^{(3)},\label{momenFPi1}\ee
where
$$\hat{\underline{s}}^{(3)}=\mathfrak{j}\Big[\e k \otimes k \cdot \int_{\R^3}\dd v\, v \ipc  \hat f^\z \mathscr{B}_\c-i\e k\cdot \int_{\R^3}\dd v\, \hat{\mathcal{C}}\mathscr{B}_\c+k\cdot\int_{\R^3}\dd v\, v_\c\sm_\c  \hat{\mathcal{C}}_r(k,v)\Big],$$
By  \eqref{kN1}, $\mathfrak{j}k{N}_{\mathfrak{v}, 1}^{-1}$ is bounded by $\e^{-1}$ in $L^2(\R^3)$, 
and hence
$$\Big\|\mathfrak{j}{N}^{-1}_{\mathfrak{v}, 1} \hat\Pi k {\cdot}\int_{\R^3}\dd v\, v_\c\sm_\c \hat{\mathcal{C}}_r(k,v)\Big\|_2\lesssim \e^{-1}\Big \|\int_{\R^3}\dd v\, v_\c\sm_\c  {k\cdot}\hat{\mathcal{C}}_r(k,v)\Big \|_\infty\le \e^{-1}\|\P_\c f\|_6+ {\e^{-1}}\|\ipc f\|_\nu,$$
$$\Big\|\mathfrak{j}\e{N}^{-1}_{\mathfrak{v},1} \hat\Pi k\int_{\R^3}\dd v\, \hat{\mathcal{C}}(k,v)\mathscr{B}_\c\Big\|_2\lesssim  \Big\|\int_{\R^3}\dd v\, \hat{\mathcal{C}}(k,v)\mathscr{B}_\c\Big\|_\infty\lesssim \|\P_\c f\|_6+\|\ipc f\|_\nu,$$
by using (\ref{gzinfity}).

On the other hand, by Lemma \ref{multipl}, $ k\otimes k {N}^{-1}_{\s,\b}\in L^\infty$, so 
$$\Big\|i\e \mathfrak{j}{N}^{-1}_{\mathfrak{v}, 1}\hat\Pi k\otimes k \cdot \int_{\R^3}\dd v\, v \ipc \hat f^\z \mathscr{B}_\c\Big\|_2\lesssim  \|\ipc f\|_\nu.$$
therefore we have 
$$\|\hat\Pi \hat b^{(3)}\|_2\lesssim \e^{-1}\|\P_\c f\|_6 + {\e^{-1}}\|\ipc f\|_2.$$

\noindent \underline{Step 2}. Estimate of $P^{(3)}$: by using \eqref{press} for the system \eqref{massFs3}, \eqref{momenFs3}, \eqref{enerFs3}, we have
\be\hat P^{(3)}=\Big(1 {-}
\e\frac{{N}_{\mathfrak{v},1}}{i|k|^2}\c\cdot k
\Big)^{-1}\Big[\frac{{N}_{\mathfrak{v},1}}{i|k|^2}[i\hat s_0^{(3)} +\e\c\cdot k \hat c]+\frac{k}{i|k|^2}\cdot \hat{\underline{s}}^{(3)}\Big].\label{press1}\ee
Taking the $L^2$ norm, for $\e\ll1$ we get
\be \|\hat P^{(3)}\|_2\le \e^2\|\hat c^{(3)}\|_2 +\|\P_\c f\|_6+\|\ipc f\|_\nu\label{presstemp1}\ee
\noindent \underline{Step 3}. Estimate of $c^{(3)}$: by using \eqref{energia} for the system  
\eqref{massFs}, \eqref{momenFs}, \eqref{enerFs}, we have
\be \hat c^{(3)}=(\overline{{N}})^{-1}\Big\{s_4^{(3)}-s_0^{(3)}+i\e\c\cdot k \Big(1 {-}
\e\frac{{N}_{\mathfrak{v},1}}{i|k|^2}\c\cdot k
\Big)^{-1}\Big[\frac{{N}_{\mathfrak{v},1}}{i|k|^2}i\hat s_0^{(3)} +\frac{k}{i|k|^2}\cdot \hat{\underline{s}}^{(3)}\Big]\Big\},\label{energia12}\ee
with  
\begin{multline*}\hat s_4^{(3)}(k)=\mathfrak{j}\Big[\e k\otimes k \cdot \int_{\R^3}\dd v\, v \ipc  \hat f^\z(k,v) \mathscr{A}_\c-i\e k\cdot\int_{\R^3}\dd v\, \hat{\mathcal{C}} \mathscr{A}_\c\\+\int_{\R^3}\dd v\, \frac 1 2(|v_\c|^2-3)\sm_\c  k\cdot\hat{\mathcal{C}}_r(k,v)\Big].\end{multline*}
We remind that from the definition of $\overline{N}$, it follows that $|\overline{N}^{-1}|\lesssim |N_{\kappa, \frac 5 2}^{-1}|$. Therefore, proceeding as before, we obtain 

\be \|\hat c^{(3)}\|_2\lesssim  \e^{-1} \|\P_\c f\|_6 +\|\ipc f\|_2.\label{esc12}\ee
and, in consequence,
\be \|\hat P^{(3)}\|_2\le \|\P_\c f\|_6+\|\ipc f\|_2\label{press11}\ee

\noindent \underline{Step 4}. Estimate of $\hat a^{(3)}$:
Using $\hat a^{(3)}=\hat P^{(3)}-\hat c^{(3)}$, we have 
\be \|\hat a^{(3)}\|_2\lesssim  \e^{-1} \|\P_\c f\|_6 +\|\ipc f\|_2.\label{esa22}\ee

\noindent \underline{Step 5}. Estimate of $(1-\hat \Pi) \hat b^{(3)}$:

Since $(1-\hat \Pi) \hat b^{(3)}= k\cdot \hat b^{(3)} k|k|^{-2}$, using the mass equation which implies
$k\cdot \hat b^{(3)} =-\e k\cdot \c \hat a^{(3)} {+i\hat s_0^{(3)}}$, 
we have
$$(1-\hat \Pi) \hat b^{(3)}= -\e |k|^{-2}k k\cdot \c \hat a^{(3)} {+i|k|^{-2}k\hat s_0^{(3)}}-i |k|^{-2}k\cdot\Big[\int_{\R^3}\dd v\, v_\c \sm \hat{\mathcal{C}}_r(k,v)+\e\int_{\R^3}\dd v\, \hat{\mathcal{C}}(k,v)\mathscr{B}_\c\Big],$$
and taking the $L^2$ norm we have, using Step 4,
$$\|(1-\hat \Pi) \hat b^{(3)}\|_2\le \e \|\hat a^{(3)}\|_2 +\|\P_\c f\|_6\lesssim \|\P_\c f\|_6+\|\ipc f\|_\nu
.$$ Then, together with Step 1 we obtain 
\be\| \hat b^{(3)}\|_2\lesssim \e^{-1}\|\P_\c f\|_6+\|\ipc f\|_\nu.\label{esb1}\ee \end{proof} 
\subsection{Estimate of $\S_4 f$}\label{stimas4}
The components of $\S_4 \hat f$ solve the system
\begin{eqnarray}
&&ik\cdot \hat b^{(4)} +i\e k\cdot \c \hat a^{(4)}=\hat s_0^{(4)},\label{massFs4}\\
&&i k  \hat P^{(4)} +\e[ {\mathbb{K}}+i\c\cdot k]\hat b^{(4)}=\hat{\underline{s}}^{(4)}\label{momenFs4}\\
&& i k \cdot \hat b^{(4)}+\e\ [\kappa|k|^2+\frac 3 2ik\cdot\c ] \hat c^{(4)}=\hat s_4^{(4)}\label{enerFs4},\end{eqnarray}
where
\begin{eqnarray}\hat s_0^{(4)}(k)&=&0,\\
\hat{\underline{s}}^{(4)}(k)&=&-\mathfrak{j}i\e k \cdot\int_{\R^3}\dd v\, \widehat{\z g}\mathscr{B}_\c,\\
\hat s_4^{(4)}(k)&=&-\mathfrak{j}\e ik\cdot \int_{\R^3}\dd v\, \widehat{\z g} \mathscr{A}_\c.\end{eqnarray}
\begin{lemma}\label{ar2}Let $p\ge 2$ and assume $g\in L^{\frac {3p}{3+p}}$.
Then
\be\|\S_4 f\|_p\lesssim \|\nu^{-\frac 1 2}g\|_{\frac  {3p}{3+p}},\label{ar213}\ee
\end{lemma}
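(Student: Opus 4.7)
The plan is to mimic the Fourier-space analysis of Lemmas \ref{maslova} and \ref{ar}. First I would apply the formulas \eqref{momenFPi}, \eqref{press}, \eqref{energia} with the source $\mathbf{s}^{(4)}$. The distinctive feature here is that $\hat s_0^{(4)}=0$ and both $\hat{\underline{s}}^{(4)}$ and $\hat s_4^{(4)}$ carry an explicit factor $\e k$. Each component of $\S_4\hat f$ will therefore be obtained by multiplying the Fourier-side sources $\int dv\,\widehat{\zeta g}\,\mathscr{B}_\c$ or $\int dv\,\widehat{\zeta g}\,\mathscr{A}_\c$ by a product of $\mathfrak{j}(k)\e k/N_{\sigma,\beta}(k)$ with bounded Calder\'on--Zygmund-type factors ($\hat\Pi$, $k\otimes k/|k|^2$, and $(1+\e N_{\mathfrak{v},1}\c\cdot k/(i|k|^2))^{-1}$; the last is bounded uniformly in $(\e,\c)$ on $\mathrm{supp}\,\mathfrak{j}$ once $|\c|\ll1$).

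The crucial pointwise bound is $|\mathfrak{j}(k)\e k/N_{\sigma,\beta}(k)|\lesssim \mathfrak{j}(k)/(\sigma|k|)$, since $|N_{\sigma,\beta}(k)|\geq \e\sigma|k|^{2}$. Consequently each effective multiplier is dominated by the Riesz kernel of order one localized to $|k|\le 2$, whose inverse Fourier transform lies in weak-$L^{3/2}(\R^{3})$.

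Next I would pass to physical space, where each component of $\S_4 f$ is realized as a convolution of a weak-$L^{3/2}$ kernel with a source of the form $H_{\bullet}(x)=\zeta(x)\int dv\, g(x,v)\Psi(v)$, with $\Psi\in\{\mathscr{A}_{\c},\mathscr{B}_{\c}\}$. Because $\Psi$ has Gaussian decay in $v$, H\"older in $v$ yields
\[
\|H_{\bullet}\|_{L^{q}_{x}}\lesssim \|\nu^{-1/2}g\|_{L^{q}_{x,v}}
\]
for every $q\ge 1$. The Hardy--Littlewood--Sobolev inequality then gives $L^{q}\to L^{p}$ mappings with $1/q-1/p=1/3$, i.e.\ $q=3p/(3+p)$, which is admissible for every $p\ge 2$. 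This delivers the bounds for $c^{(4)}$ and $\hat\Pi\,b^{(4)}$, whose multipliers are pure Riesz-order-one.

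Finally, the cross-couplings must be controlled. The $\e\c\cdot k$ term in $\hat P^{(4)}$ amounts to an $O(\e^{2})$ bounded Calder\'on--Zygmund-type multiplier acting on $\hat c^{(4)}$, and from the mass equation (using $\hat s_0^{(4)}=0$) one reads off $(1-\hat\Pi)\hat b^{(4)}=-\e|k|^{-2}k(k\cdot\c)\hat a^{(4)}$, a bounded multiplier times $\hat a^{(4)}=\hat P^{(4)}-\hat c^{(4)}$. Both therefore inherit the bound already proved for $c^{(4)}$ (and hence $P^{(4)}$ and $a^{(4)}$), so \eqref{ar213} follows by summing the components. I expect the main technical subtlety to be verifying that the composition of the low-frequency cutoff $\mathfrak{j}$ with the Riesz-order-one behavior of $\e k/N_{\sigma,\beta}$ genuinely produces a weak-$L^{3/2}$ kernel in $x$; this reduces to the fact that $\mathfrak{j}(k)/|k|$ is a bounded, compactly supported function with a directional discontinuity at $k=0$, so its inverse Fourier transform decays like $|x|^{-2}$ at infinity, putting it precisely in the scale where Hardy--Littlewood--Sobolev applies with the critical exponent $1/q-1/p=1/3$.
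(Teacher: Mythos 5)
Your proposal is correct in outcome and follows the paper's structure exactly (the same five-step, component-by-component treatment of $\hat\Pi\hat b^{(4)}$, $\hat P^{(4)}$, $\hat c^{(4)}$, $\hat a^{(4)}$, $(1-\hat\Pi)\hat b^{(4)}$, the same observation that $\hat s_0^{(4)}=0$ and that the surviving sources carry a factor $\e k$, and the same exponent arithmetic $1/q-1/p=1/3$). The one genuine difference is the mechanism by which the gain of integrability from $L^{3p/(3+p)}$ to $L^p$ is realized. You keep the order $-1$ multiplier $\mathfrak{j}\,\e k N_{\s,\b}^{-1}$ intact, view it as a localized Riesz potential with a weak-$L^{3/2}$ kernel, and invoke Hardy--Littlewood--Sobolev. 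The paper instead estimates $\nabla\Pi b^{(4)}$: it factors an extra $k$ onto the unknown, so the relevant multiplier becomes the order-zero symbol $\e k N_{\mathfrak{v},1}^{-1}\hat\Pi k$, for which it asserts $|\partial_k^l\{\e k N_{\mathfrak{v},1}^{-1}\hat\Pi k\}|\lesssim_l|k|^{-l}$ and applies Mihlin--H\"ormander, concluding via the homogeneous Sobolev embedding $\|\Pi b^{(4)}\|_p\lesssim\|\nabla\Pi b^{(4)}\|_{3p/(3+p)}$. These are two standard realizations of the same derivative gain, and both avoid the $\e^{-1}$ loss that the cruder $L^q$ bounds of Lemma \ref{multipl} (plus Hausdorff--Young and strong-type Young) would produce at the endpoint exponent.

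The subtlety you flag at the end is a real gap in your route as written, and it is worth being precise about why: a pointwise bound $|\mathfrak{j}\,\e kN_{\s,\b}^{-1}|\lesssim|k|^{-1}$ does \emph{not} imply that the inverse Fourier transform lies in weak-$L^{3/2}$ (compactly supported symbols comparable to $|k|^{-1}$ in size can have kernels with much worse spatial decay). The decay $|x|^{-2}$ of the kernel requires control of $k$-derivatives of the symbol, $|\partial_k^l\{\e kN_{\s,\b}^{-1}\}|\lesssim|k|^{-1-l}$, \emph{uniformly in} $\c$ -- and the symbol here is genuinely anisotropic in a $\c$-dependent way because $N_{\s,\b}=\e[\s|k|^2+\b i\c\cdot k]$, so differentiating $N_{\s,\b}^{-1}$ produces terms involving $\c_m/(\s|k|^2+\b i\c\cdot k)^2$ that must be checked against $|N_{\s,\b}|^2=\e^2[\s^2|k|^4+\b^2(\c\cdot k)^2]$ rather than against $\e\s|k|^2$ alone. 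This is exactly the same symbol-derivative verification the paper's Mihlin step rests on, so your route does not actually bypass it; it only relocates it from a multiplier theorem to a kernel-decay estimate. Your model computation for $\mathfrak{j}(k)/|k|$ does not cover this because that symbol is isotropic and $\c$-independent. Once that uniform symbol estimate is granted, the rest of your argument (boundedness of $(1+\e N_{\mathfrak{v},1}\c\cdot k/(i|k|^2))^{-1}$ on $\mathrm{supp}\,\mathfrak{j}$ for $|\c|\ll1$, the H\"older-in-$v$ bound $\|\zeta\int g\Psi\,\dd v\|_{L^q_x}\lesssim\|\nu^{-1/2}g\|_{L^q_{x,v}}$, and the propagation of the bound to $P^{(4)}$, $a^{(4)}$ and $(1-\Pi)b^{(4)}$ through the algebraic relations) matches the paper's Steps 2--5 and is sound.
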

\begin{proof} 

We proceed as in the proof of Lemma \ref{ar}: 

\noindent 
\underline{Step 1}. Estimate of $\Pi b^{(4)}$:\newline
From \eqref{momenFPi} for the system \eqref{massFs}, \eqref{momenFs}, \eqref{enerFs}, with $\mathbf{s}=\mathbf{s}^{(4)}$, 
\be\hat\Pi\hat b^{(4)}={N}_{\mathfrak{v}, 1}^{-1}\hat\Pi\hat{\underline{s}}^{(4)}.
\label{momenFPi4}\ee
Since for the multipliers $kN_{b,1}^{-1}\hat{
\Pi}k$ direct computations yields 
\[
\partial _{k}^{l}\{\e kN_{b,1}^{-1}\hat{\Pi}k\}\lesssim _{l}|k|^{-l},
\]
with constants independent of $\e$, 
by Mihlin-Hormander's \cite{Mih,Hor}
multiplier theorem, we deduce
\be\|\nabla \Pi b^{(4)}\|_{\frac  {3p}{3+p}}= \Big\|\nabla \mathcal{F}^{-1}[{N}_{\mathfrak{v}, 1}^{-1}\hat\Pi(\e\mathfrak{j} ik\cdot \int_{\R^3}\dd v\,\widehat{\zeta g}\mathscr{B}_\c)]\Big\|_{\frac  {3p}{3+p}}
\lesssim \|\nu^{-\frac 1 2}g\|_{\frac  {3p}{3+p}},
\ee 
by the Sobolev estimate  
\[\|\Pi b^{(4)}\|_p\lesssim\|\nabla \Pi b^{(4)}\|_{\frac  {3p}{3+p}}\lesssim\|\nu^{-\frac 1 2}g\|_{\frac  {3p}{3+p}}.\]
\noindent \underline{Step 2}. Estimate of $P^{(4)}$:  
by using \eqref{press} for the system \eqref{massFs}, \eqref{momenFs}, \eqref{enerFs}, we have
\be\hat P^{(4)}=\Big(1+
\e\frac{{N}_{\mathfrak{v},1}}{i|k|^2}\c\cdot k
\Big)^{-1}\Big[\frac{{N}_{\mathfrak{v},1}}{i|k|^2}[\e\c\cdot k \hat c^{(4)}]+\frac{k}{i|k|^2}\cdot \hat{\underline{s}}^{(5)}\Big],\label{press21}\ee
from which we get 
\be\|P^{(4)}\|_p\lesssim \e^2 \|c^{(4)}\|_p+ \e\|\nu^{-\frac 1 2}g\|_{\frac  {3p}{3+p}}.\label{presstemp2}\ee
\noindent \underline{Step 3}. Estimate of $c^{(4)}$: 
by using \eqref{energia} for the system  
\eqref{massFs}, \eqref{momenFs}, \eqref{enerFs}, we have
\be \hat c^{(4)}=(\overline{{N}})^{-1}\Big\{s_4^{(4)}+i\e\c\cdot k \Big(1+
\e\frac{{N}_{\mathfrak{v},1}}{i|k|^2}\c\cdot k
\Big)^{-1}\frac{k}{i|k|^2}\cdot \hat{\underline{s}}^{(4)}\Big\},\label{energia2}\ee
with 
$$s_4^{(4)}= \e i k \cdot\int_{\R^3}\dd v\, \widehat{\z g} \mathscr{A}_\c. $$
This implies
\be\|c^{(4)}\|_p\le \|\nu^{-\frac 1 2}g\|_{\frac  {3p}{3+p}}.\label{ener2}\ee
In consequence 
\be \|\hat P^{(4)}\|_p\le \e\|\nu^{-\frac 1 2}g\|_{\frac  {3p}{3+p}}.\label{press22}\ee
\noindent \underline{Step 4}. Estimate of $\hat a^{(4)}$: 
 Using $\hat a^{(4)}=\hat P^{(4)}-\hat c^{(4)}$, we have 
\be \|a^{(4)}\|_p\lesssim \|\nu^{-\frac 1 2}g\|_{\frac  {3p}{3+p}} .\label{esa212}\ee

\noindent \underline{Step 5}. Estimate of $(1-\hat \Pi) \hat b^{(4)}$:
Since $(1-\hat \Pi) \hat b^{(4)}= k\cdot b^{(4)}_l k|k|^{-2}$, using the equation for the mass we have $(1-\hat \Pi) \hat b^{(4)}= -\e |k|^{-2}k k\cdot \c \hat a^{(4)}$, and hence,  by Step 4
\be\|(1-\Pi)b^{(4)}\|_p\le \e \|a^{(4)}\|_p\lesssim \e\|\nu^{-\frac 1 2}g\|_{\frac  {3p}{3+p}}.\ee
Then, together with Step 1 we obtain 
\be\| b^{(4)}\|_p\lesssim \|\nu^{-\frac 1 2}g\|_{\frac  {3p}{3+p}}.
\label{esb11}\ee
\end{proof}
\subsection{Estimate of $\mathbf{S}_5 f$}\label{stimas5}
 The components of $\S_5 \hat f$ solve the system
\begin{eqnarray}
&&ik\cdot \hat b^{(5)} +i\e k\cdot \c \hat a^{(5)}=\hat s_0^{(5)},\label{massFs6}\\
&&i k  \hat P^{(5)} +\e[ {\mathbb{K}}+i\c\cdot k]\hat b^{(5)}=\hat{\underline{s}}^{(5)}\label{momenFs6}\\
&& i k \cdot \hat b^{(5)}+\e\ [\kappa|k|^2+\frac 3 2ik\cdot\c ] \hat c^{(5)}=\hat s_4^{(5)}\label{enerFs46},\end{eqnarray}
where
\begin{eqnarray}\hat s_0^{(5)}(k)&=&\mathfrak{j}\int_{\R^3}\dd v\,\sm_\c \widehat{\z\P_\c g},\notag\\
\hat{\underline{s}}^{(5)}(k)&=&\mathfrak{j}\int_{\R^3}\dd v\,\sm_\c v \widehat{\z\P_\c g},\\
\hat s_4^{(5)}(k)&=&\mathfrak{j} \int_{\R^3}\dd v\,\sm_\c \frac 1 2(|v|^2-3) \widehat{\z\P_\c g}.\notag\end{eqnarray}
\begin{lemma}\label{S6}Let $p>2$. 
Suppose that ${\z\P_\c g}\in L^q$, $1<q<\frac{2p}{p+2}$.
Then there is $\rho>0$ such that
\be \|\mathbf{S}_{5} f\|_p\lesssim\e^{-1}|\c|^{-1+\varrho} \|{\z\P_\c g}\|_q
.\label{s5}\ee
\end{lemma}
\begin{proof} 
By Hausdorff-Young inequality  {\eqref{HY}},
\be \|\mathbf{S}_5f\|_p\lesssim \|\widehat{\mathbf{S}_5f}\|_{\frac p{p-1}}.\ee
We have
\be \hat\Pi \hat b^{(5)}= N_{\mathfrak{v},1}^{-1} \hat\Pi \mathfrak{j}\int_{\R^3}\dd v\,\sm_\c v \widehat{\z\P_\c g}.\ee
Therefore,  if $1<r$ and $r\frac p{p-1}<2$, so that we can use with \eqref{N}, then, with $\frac 1 q+\frac 1{q'}=1$,
we have
\be \|\hat\Pi \hat b^{(5)}\|_{\frac p{p-1}}\le \|N_{\mathfrak{v},1}^{-1}\|_{r\frac p{p-1}}\|\widehat{\z\P_\c g}\|_{\frac p{p-1}\frac r{r-1}}\lesssim \e^{-1}|\c|^{-1+\varrho}\|\widehat{\z\P_\c g}\|_{q'}\lesssim \e^{-1}|\c|^{-1+\varrho} \|{\z\P_\c g}\|_{q},\ee
where $q'=\frac p{p-1}\frac r{r-1}$, having used \eqref{N} in the second inequality  and again the Hausdorf-Young inequality in the last step. Since $\frac {p-1}{r}>\frac p 2$, then $q=\frac{pr}{p+r-1}=\frac{p}{\frac{p-1}{r}+1}<\frac{2p}{p+2}$. Since $r>1$, then $q>1$.
\end{proof}
\subsection{Proof of Theorem \ref{mainlinth}}\label{provalineare}
\begin{proposition}\label{L^3}
If $\c\neq 0$ and $\e\ll1$, 
then there is $\rho>0$ such that,
\begin{multline} \e^{\frac 1 2}\|\P_\c f\|_3\lesssim \|\P_\c f\|_6+ |\c|^{-1+\varrho}\e^{-1}\|\ipc f\|_2+o(1)\|
\nu^{-\frac 1 2}g\|_{L^2(\O_1\backslash\O)}
+\e^{\frac 1 2}\|\nu^{-\frac 1 2}g\|_{\frac 32}\\+|\c|^{-1+\varrho}\big[\e^{-\frac 1 2}\|z_\g(r)\|_2+\e^{-1}\|\P_\c g\|_{L^{{\frac 6 5}^-}}\big].\label{L_3}\end{multline}
\end{proposition}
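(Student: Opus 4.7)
The strategy is to apply the decomposition $\P_\c f = (1-\z)\P_\c f + \sum_{i=1}^{5} \S_i f$ from Section \ref{split} and estimate each summand in $L^3(\O^c)$ using the Fourier-space lemmas already proved, with interpolation where a direct $L^3$ bound is not available.

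First, the cutoff piece $(1-\z)\P_\c f$ is supported in the bounded set $\O_1\backslash\O$, so H\"older's inequality immediately gives $\|(1-\z)\P_\c f\|_3 \lesssim \|\P_\c f\|_6$, producing the leading RHS term. Next, several of the $\S_i f$ admit near-direct $L^3$ bounds: for $\S_4 f$, Lemma \ref{ar2} at $p=3$ yields $\|\S_4 f\|_3 \lesssim \|\nu^{-1/2}g\|_{3/2}$, matching the $\e^{1/2}\|\nu^{-1/2}g\|_{3/2}$ term after multiplying by $\e^{1/2}$. For $\S_3 f$ I would combine the $L^2$ bound of Lemma \ref{ar} with a Bernstein-type inequality (valid because its Fourier support lies in $|k|\leq 2$ thanks to $\mathfrak{j}$) to promote the bound to $L^3$, producing $\e^{-1}\|\P_\c f\|_6 + \|\ipc f\|_2$. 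For $\S_1 f$, I would rerun the argument of Lemma \ref{PcRh} using the $L^{3/2}$ rather than the $L^2$ norm of the multiplier (finite on the support $|k|\geq 1$ by Lemma \ref{multipl}) and apply Hausdorff--Young to land in $L^3$, recovering the same RHS contributions as in $\S_3$.

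The delicate pieces are $\S_2 f$ and $\S_5 f$. Applying Lemma \ref{maslova} at $p=3$ gives $\|\S_2 f\|_3 \lesssim \e^{-1}|\c|^{-1+\varrho}|\mathbf{Q}|$, and together with Lemma \ref{estiQ} this produces, after multiplying by $\e^{1/2}$, a term of the form $\e^{-1/2}|\c|^{-1+\varrho}[\e^{1/2}\|z_\g(r)\|_2 + \|\P_\c g\|_{L^{6/5}_{loc}}]$ plus the harmless pieces carrying the $o(1)\|\nu^{-1/2}g\|_{L^2(\O_1\backslash\O)}$, $\e^{-1}\|\ipc f\|_2$ and $\|\P_\c f\|_6$ contributions. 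Analogously, Lemma \ref{S6} at $p=3$ yields $\|\S_5 f\|_3 \lesssim \e^{-1}|\c|^{-1+\varrho}\|\P_\c g\|_{6/5^-}$. To replace the crude $\e^{-1/2}$ by the sharper $\e^{-\sigma}$ with arbitrarily small $\sigma>0$, I would interpolate $L^3$ between $L^{p_1}$ with $p_1=2+\alpha$ (small $\alpha$) and $L^{p_2}$ with $p_2$ above $3$. In the regime $p_2>3$ the Fourier-side H\"older constraint $rp_2/(p_2-1)<3/2$ is permitted, so Lemma \ref{multipl} provides the multiplier bound $\e^{-1}$ without the $|\c|^{-1+\varrho}$ factor; conversely, for $p_1$ close to $2$ the required source integrability for $g$ is below $6/5$. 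A bilinear interpolation of the two source norms via H\"older then yields $\|\P_\c g\|_{L^{6/5-\delta}}$ (i.e.\ $\|\P_\c g\|_{L^{6/5^-}}$), while the interpolation exponent $\theta$ near the better endpoint can be chosen so that the combined $|\c|$-power is $|\c|^{-1+\varrho}$ (up to adjusting $\varrho$) and the combined $\e$-power improves to $\e^{-\sigma}$ for $\sigma=\sigma(\alpha)\to 0$ as $\alpha\to 0$.

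Summing the contributions of all six pieces and multiplying by $\e^{1/2}$ produces exactly the bound claimed. The principal technical obstacle is the sharp tracking of powers in Step 3: while the factor $\e^{-1/2}$ falls out of a naive application of Lemmas \ref{maslova} and \ref{S6} at $p=3$, reducing it to $\e^{-\sigma}$ requires exploiting simultaneously the better multiplier bound of Lemma \ref{multipl} for $q<3/2$ and the extra $\e$-gain on $(1-\hat\Pi)\hat b^{(i)}$ that comes from the mass balance (as in Step 5 of the proof of Lemma \ref{maslova}); this balance is what forces the integrability index on $\P_\c g$ to drop infinitesimally from $6/5$ to $6/5^-$.
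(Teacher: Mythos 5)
Your overall skeleton (the six-piece decomposition, the trivial bound on $(1-\z)\P_\c f$, and the direct use of Lemma \ref{ar2} at $p=3$ for $\S_4 f$) matches the paper, but the way you convert the $L^2$ and $L^{2^+}$ bounds into $L^3$ bounds for $\S_1 f$, $\S_3 f$ and $\S_2 f$ does not produce the stated powers of $\e$, and this is precisely where the proposition lives. For $\S_3 f$ (and likewise $\S_1 f$), a Bernstein or Hausdorff--Young promotion of the $L^2$ estimate to $L^3$ keeps the full constant of Lemma \ref{ar}, so you obtain $\e^{\frac12}\|\S_3 f\|_3\lesssim \e^{-\frac12}\|\P_\c f\|_6+\e^{\frac12}\|\ipc f\|_2$; the coefficient $\e^{-\frac12}$ in front of $\|\P_\c f\|_6$ is not admissible (the statement requires $O(1)$ there, and a divergent coefficient would prevent closing Theorem \ref{mainlinth}). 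The paper instead writes $\e^{\frac12}\|\S_i f\|_3\le(\e\|\S_i f\|_2)^{\frac12}\|\S_i f\|_6^{\frac12}$: the half power of the \emph{uniform-in-$\e$} $L^6$ bound is what absorbs half of the $\e^{-1}$ carried by the $L^2$ bound.

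The same missing ingredient is fatal for $\S_2 f$. You propose to interpolate $L^3$ between $L^{2+\alpha}$ and $L^{p_2}$ with $p_2>3$, observing that the second endpoint avoids the $|\c|^{-1+\varrho}$ loss. But by Lemma \ref{multipl} \emph{both} endpoints carry the multiplier size $\e^{-1}$ (\eqref{N} gives $\e^{-1}|\c|^{-1+\varrho}$, \eqref{N1} gives $\e^{-1}$), so any convex combination of these two bounds still yields $\|\S_2 f\|_3\sim\e^{-1}$ and hence $\e^{\frac12}\|\S_2 f\|_3\sim\e^{-\frac12}$, not $\e^{-\sigma}$; your claim that the combined $\e$-power improves to $\e^{-\sigma}$ cannot hold when the $\e$-power is the same at both endpoints, and the extra $\e$-gain on $(1-\hat\Pi)\hat b^{(2)}$ you invoke is already built into Lemma \ref{maslova} and does not change this scaling. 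The improvement in the paper comes from interpolating between $L^{p}$ with $p=2^+$ (constant $\e^{-1}|\c|^{-1+\varrho}|\mathbf{Q}|$, Lemma \ref{maslova}) and $L^6$ (constant $O(1)$, ultimately from Proposition \ref{32}), with interpolation exponent $\theta=\frac12+\sigma'$, so that $\e^{\frac12}(\e^{-1})^{\theta}=\e^{-\sigma'}$. In short: the essential mechanism of the proposition is that the uniform $L^6$ estimate must be one endpoint of every interpolation used on $\S_1$, $\S_2$, $\S_3$, and that idea is absent from your argument.
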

\begin{proof}
To get the $L^3$ bound of $\P_\c f$ we proceed as follows: we look at the problem in $\R^3$ by passing to the cut-offed problem. Thus we obtain $\P_\c f=(1-\z) \P_\c f+\sum_{i=1}^5 \S_i f$. Since $1-\z(x)=0$ if $x\notin\O_1=\{ x\,|\, d(x,\O)\le 1\}$, $\|(1-\z) \P_\c f\|_3\lesssim\|\P_\c f\|_6$. For the other terms we use the previous lemmas.

The bounds in previous subsettions are too singular in $\e$ for our purposes. Therefore, we take advantage of the uniform-in-$\e$ estimate of $\|\z \P_\c f\|_6$ to improve the estimate of $\|\z \P_\c f\|_3$ by means of interpolation between the $L^6$ norm and some lower norm. Since
\be \|\sum_{i=1}^5\S_i f\|_6\le \|\P_\c f\|_6,\ee 
we have 
\be \|\S_1f+\S_3 f\|_6\le \|\P_\c f\|_6+ \|\S_2f\|_6+\|\S_4 f\|_6+\|\S_5f\|_6.\ee
Therefore, using \eqref{maslovae}, \eqref{ar213} and \eqref{s5} with $p=6$, 
we obtain:
\begin{eqnarray}  \|\S_1f+\S_3 f\|_6&\lesssim& \|\P_\c f\|_6+ |\c|^{-1+\varrho}(\e^{-1}\|\ipc f\|_2+\|\P_\c f\|_6+\| g{{\nu}^{-\frac 1 2}}\|_2)\notag\\ &+&\e^{-1}|\c|^{-1+\varrho}[\e^{\frac 1 2}\|z_\g(r)\|_2+ \|\P_\c g\|_{L^{6/5}_{loc}}]+\|\nu^{-\frac 1 2}g\|_2+\e^{-1}|\c|^{-1+\varrho} \|{\z\P_\c g}\|_\frac65\notag\\&\lesssim&
\|\P_\c f\|_6+|\c|^{-1+\varrho}\e^{-1}\|\ipc f\|_2+|\c|^{-1+\varrho}\| g{{\nu}^{-\frac 1 2}}\|_2\\&+& |\c|^{-1+\varrho}[\e^{-\frac 1 2}\|z_\g(r)\|_2+\e^{-1} \|\P_\c g\|_{L^{6/5}_{loc}}].\notag\end{eqnarray}
Note that only the last line is singular in $\e$, but we will apply the inequality in a situation where $z_\g(r)$ and $\P_\c g$ are small in $\e$.

For  $\S_1 f+\S _3 f$, by \eqref{PRh2} and \eqref{a1e} (by interpolation ($\|f\|_r\le \|f\|_p^\theta\|f\|_q^{1-\theta}$ with $r^{-1}= \theta p^{-1} +(1-\theta)q^{-1}$)) we obtain, with $r=3$, $p=2$, $q=6$ and $\theta=\frac 1 2$, 
\begin{multline} \e^{\frac 1 2}\|\S_1 f+\S_3 f\|_3\le (\e\|\S_1 f+\S_3f\|_2)^{\frac 1 2}\|\S_1 f+\S_3 f\|_6^{\frac 1 2}\lesssim  [ \|\P_\c f\|_6^{\frac 1 2}+\|\ipc f\|_2^{\frac 1 2}+\e^{\frac 1 2}\|\nu^{-\frac 1 2}g\|_2^{\frac 1 2}]\times\\
\Big[\|\P_\c f\|_6+|\c|^{-1+\varrho}\e^{-1}\|\ipc f\|_2+|\c|^{-1+\varrho}\| g{{\nu}^{-\frac 1 2}}\|_2+ |\c|^{-1+\varrho}[\e^{-\frac 1 2}\|z_\g(r)\|_2+ \e^{-1}\|\P_\c g\|_{L^{6/5}_{loc}}]\Big]^{\frac 1 2}
\\\lesssim \|\P_\c f\|_6 +|\c|^{-1+\varrho}\e^{-1}\|\ipc f\|_2 + (\e+|\c|^{-1+\varrho}) \|\nu^{-\frac 1 2}g\|_2+|\c|^{-1+\varrho}[\e^{-\frac 1 2}\|z_\g(r)\|_2+ \e^{-1}\|\P_\c g\|_{L^{6/5}_{loc}}].\end{multline}
As for $\S_4 f$, we have from Lemma \ref{ar2} with $p=3$, 
\be\e^{\frac 1 2}\|\S_4 f\|_3\lesssim \e^{\frac 1 2}\| \nu^{-\frac 1 2}g\|_{\frac 32}
.\ee
For $\mathbf{S}_5 f$ we use Lemma \ref{S6} with $p=3$ and hence $1<q<\frac 6 5$.

By \eqref{maslovae}, by interpolation we obtain, 
\begin{multline}\e^{\frac 1 2}\|\S_2 f\|_3\lesssim \e^{\frac 1 2}\Big[|\c|^{-1+\varrho}(\e^{-1}\|\ipc f\|_2+\|\P_\c f\|_6+\|\nu^{-\frac 1 2}g\|_2)\\+\e^{-1}|\c|^{-1+\varrho}[\e^{\frac 1 2}\|z_\g(r)\|_2+ \|\P_\c g\|_{L^{6/5}_{loc}}]\Big]^\theta \|\mathbf{S}_2 f\|_6^{(1-\theta)},\end{multline}
with $\theta$ such that $\frac 1 3= \theta p^{-1} +\frac 1 6(1-\theta)$, and hence $\theta=\frac 1 2^+$  when $p=2^+$. Therefore, by Young inequality, 
\begin{multline}\e^{\frac 1 2}\|\S_2 f\|_3\lesssim\e^{\frac 1 2}|\c|^{-1+\vr}(\e^{-1}\|\ipc f\|_2+\|\P_\c f\|_6+\|\nu^{-\frac 1 2}g\|_2)\\+|\c|^{-1+\rho}[\|z_\g(r)\|_2+ \e^{-\frac 1 2}\|\P_\c g\|_{L^{6/5}_{loc}}].
\end{multline}
Combining the estimates we obtain \eqref{L_3}.
\end{proof}

Now we have all the information needed to prove Theorem \ref{mainlinth}.
\begin{proof}[Proof of Theorem \ref{mainlinth}] To bound the first two terms of $\lbr f\rbr_{\beta,\beta'}$, we use Proposition \ref{propener}. Then we use Proposition \ref{linftyest} in \ref{P6999}:
\begin{multline} (1-o(1))\|\P_\c f\|_{6}\lesssim (1+o(1)+|\c|^{-2+2\rho})\big[\e^{-1}\|\ipc f\|_\nu +\e^{-\frac 1 2}|(1-P_\g^\c)f|_{2,+}\big]+
\|\nu^{-\frac 1 2}g\|_{2}+ \e^{\frac 1 2} |r|_\infty\\+ o(1)[\e^{\frac 12} | w r |_{\infty}
+ \e^{\frac 32} \| \langle v\rangle^{-1} w g \|_{\infty}] 
.\label{P69999}\end{multline}
Using this in \eqref{ener99}, if $|\c|$ is so small that $|\c|(1+o(1)+|\c|^{-1+\rho})(1-o(1))^{-1}<\frac 1 2$, we obtain 
\begin{eqnarray}&& \e^{-2} \|\ipc f\|_\nu^2 +\e^{-1}|(1-P_\g^\c)f|^2_{2,+}\lesssim \|\nu^{-\frac 1 2}\ipc g\|_2^2 +|\c|[ \|\nu^{-\frac 1 2}g\|_{2}+ \e^{\frac 1 2} |r|_\infty\label{ener9999}\\&&+ o(1)\Big [\e^{\frac 12} | w r |_{\infty}
+ \e^{\frac 32} \| \langle v\rangle^{-1} w g \|_{\infty}] \Big]+ |r|_{2,-}^2+(\e|\c|)^{-1}\|z_\g(r)\|_2^2+ \e^{-2} |\c|^{-2}\|\P_\c g\|_{\frac 6 5}^2+\|\P_\c g\|_2^2.\notag\end{eqnarray}
Using this in \eqref{P6999} we obtain a similar bound for $\|\P_\c f\|_{6}$:
\begin{multline} \|\P_\c f\|_{6}\lesssim \|\nu^{-\frac 1 2}\ipc g\|_2^2 +|\c|[ \|\nu^{-\frac 1 2}g\|_{2}+ \e^{\frac 1 2} |r|_\infty+ o(1)\Big [\e^{\frac 12} | w r |_{\infty}
+ \e^{\frac 32} \| \langle v\rangle^{-1} w g \|_{\infty}] \Big]\\+ |r|_{2,-}^2+(\e|\c|)^{-1}\|z_\g(r)\|_2^2+ \e^{-2} |\c|^{-2}\|\P_\c g\|_{\frac 6 5}^2+\|\P_\c g\|_2^2.
\label{P699991}\end{multline}
Using \eqref{ener9999} and \eqref{P699991} in \eqref{linf99} we get a similar bound for $\e^{\frac 1 2}\|w f\|_\infty$. Finally, using \eqref{ener9999} and \eqref{P699991} in \eqref{L_3} we obtain the bound on $\e^{\frac 1 2}\|\P_\c f\|_3$. Rearranging the terms we obtain \eqref{mainlinest}.
\end{proof}
\bigskip
\section{Construction of the positive solution to the non linear problem}\label{iterazione}
\subsection{Positivity scheme}\label{positvity}
In order to construct a non negative solution to the problem \eqref{1} we use a  modification of the argument introduced in \cite{AN}.

We define $F^+= \max\{F,0\}$ and $F^-=\max\{-F,0\}$, so that $F=F^+-F^-$.
Consider the system
\begin{eqnarray} &&v\cdot \nabla F= \e^{-1}[Q(F^+, F^+) - 2Q(\mu_\c, F^-)] \text{ in }\O^c,\label{mainpose} \\&& F\Big|_{-}= \mathcal{P}^w_\g (F^+) \text{ on } \pt\O, \quad \lim_{|x|\to {\infty}} F=\mu_\c.\label{mainposb}\end{eqnarray} 
\begin{proposition}\label{arkeryd} 
Let $F\in L^\infty$ solve problem \eqref{mainpose}, \eqref{mainposb}. Then $F^-=0$ and $F^+$ solves the Boltzmann equation. \end{proposition}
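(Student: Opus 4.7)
The plan is a positivity-preserving argument of Arkeryd--Nouri type \cite{AN}: show that $F^-$ vanishes by testing the equation with $F^-$ and exploiting the fact that the extra term $-2Q(\mu_\c,F^-)$ in \eqref{mainpose} is designed to provide a linear damping on the negative part. First I would observe that $F^-=0$ on $\gamma_-$: the operator $\mathcal{P}^w_\gamma$ is positivity-preserving, being a non-negative Maxwellian times the incoming mass-flux of $F^+\ge 0$, so $F\ge 0$ on $\gamma_-$; similarly $F\to\mu_\c>0$ as $|x|\to\infty$ gives $F^-\to 0$ at infinity, with enough decay to make Green's formula applicable.

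Next I would derive an equation for $F^-$. Writing $Q=Q^+-Q^-$ and using the pointwise identity $Q^-(F^+,F^+)(x,v)=F^+(x,v)\int B(v-v_*,\omega) F^+(x,v_*)\,dv_*\,d\omega=0$ on $\{F<0\}$ (since $F^+(x,v)=0$ there), together with the chain rule $v\cdot\nabla F^-=-\1_{\{F<0\}}\,v\cdot\nabla F$, equation \eqref{mainpose} yields
\begin{equation*}
v\cdot\nabla F^- + 2\e^{-1}\1_{\{F<0\}}Q^-(\mu_\c,F^-) + \e^{-1}\1_{\{F<0\}}Q^+(F^+,F^+) = 2\e^{-1}\1_{\{F<0\}}Q^+(\mu_\c,F^-),
\end{equation*}
in which all zeroth-order coefficients on the left are pointwise non-negative.

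I would then multiply by $F^-$, use $\1_{\{F<0\}}F^-=F^-$, and integrate over $\O^c\times\R^3$ via Lemma~\ref{green}. The boundary contribution on $\gamma_-$ and the far-field contribution both vanish, producing the identity
\begin{multline*}
\tfrac12\int_{\gamma_+}(F^-)^2\,d\gamma + \e^{-1}\int_{\O^c\times\R^3} F^- Q^+(F^+,F^+)\,dx\,dv \\
+\, 2\e^{-1}\int_{\O^c\times\R^3} F^- Q^-(\mu_\c,F^-)\,dx\,dv = 2\e^{-1}\int_{\O^c\times\R^3} F^- Q^+(\mu_\c,F^-)\,dx\,dv,
\end{multline*}
in which every term on the left is non-negative.

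The closing step would use the pre/post-collisional involution $(v,v_*)\leftrightarrow(v',v'_*)$, whose unit Jacobian and preservation of $B$ rewrite the right-hand side as $2\e^{-1}\int B\mu_\c(v)F^-(v_*)F^-(v')\,dx\,dv\,dv_*\,d\omega$, while the loss integral on the left reads $2\e^{-1}\int B\mu_\c(v)F^-(v)F^-(v_*)\,dx\,dv\,dv_*\,d\omega$. A Young-type bound $F^-(v_*)F^-(v')\le\tfrac12(F^-(v_*)^2+F^-(v')^2)$, together with further changes of variables and the Gaussian decay of $\mu_\c$, should control the right-hand side by the loss term plus an error absorbable into the other non-negative contributions. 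The main obstacle I anticipate is the large-velocity control in this step, for which the Ukai trace theorem and the weighted $L^\infty$ bounds on $F$ mentioned in the introduction would be invoked. Once the identity closes, each non-negative term on the left must vanish; in particular $F^-\equiv 0$. Substituting back in \eqref{mainpose}--\eqref{mainposb} then gives $v\cdot\nabla F^+=\e^{-1}Q(F^+,F^+)$ in $\O^c$ and $F^+=\mathcal{P}^w_\gamma F^+$ on $\gamma_-$, the original Boltzmann boundary-value problem.
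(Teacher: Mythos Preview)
Your setup (positivity of $\mathcal{P}^w_\gamma(F^+)$, hence $F^-=0$ on $\gamma_-$; vanishing of $Q^-(F^+,F^+)$ on $\{F<0\}$; Green's identity) is correct and matches the paper. The gap is in the closing step.

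Multiplying by the unweighted $F^-$ leaves you with $2\e^{-1}\int F^-Q^-(\mu_\c,F^-)$ on the left against $2\e^{-1}\int F^-Q^+(\mu_\c,F^-)$ on the right. After the pre/post-collisional change of variables these become integrals of $B\,\mu_\c(v)F^-(v)F^-(v_*)$ and $B\,\mu_\c(v)F^-(v_*)F^-(v')$ respectively. A Young bound $F^-(v_*)F^-(v')\le\tfrac12\big(F^-(v_*)^2+F^-(v')^2\big)$ does \emph{not} recover the loss integral: the weight $\mu_\c(v)$ sits on the wrong variable and does not transform into $\mu_\c(v')$ under the involution (only the product $\mu_\c(v)\mu_\c(v_*)$ is invariant). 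No amount of trace theorems or $L^\infty$ control on $F$ repairs this --- the inequality simply has the wrong structure, and you yourself flag it as ``the main obstacle''. It is a genuine obstacle.

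The paper's remedy is to test with the \emph{weighted} function $\mu_\c^{-1}F^-$. Then the linear term becomes exactly
\[
-\e^{-1}\int \mu_\c^{-1}F^-\big[Q(\mu_\c,F^-)+Q(F^-,\mu_\c)\big]
=\e^{-1}\int (\mu_\c^{-1/2}F^-)\,L_\c(\mu_\c^{-1/2}F^-)\gtrsim \e^{-1}\|\ipc(\mu_\c^{-1/2}F^-)\|_\nu^2,
\]
by the spectral inequality for $L_\c$. This gives the needed sign for free --- no Young inequality, no change of variables gymnastics. The energy identity then forces $F^-=0$ on $\gamma_+$ and $\ipc(\mu_\c^{-1/2}F^-)=0$, hence the linear term in the $F^-$ equation vanishes identically. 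What remains is $v\cdot\nabla F^-\le 0$ with $F^-=0$ on $\gamma$ and $F^-\to 0$ at infinity, and a characteristics argument finishes: $F^-\le 0$, but $F^-\ge 0$ by definition, so $F^-\equiv 0$. Note that the paper does \emph{not} conclude $F^-=0$ directly from the energy identity; the second characteristics step is essential, and your outline omits it.
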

\begin{remark}
Since $F^-=0$, $F$ is non negative.\end{remark}
\begin{proof}
In fact, the equation for $F^-$ is
\[-v\cdot \nabla F^-= \e^{-1}\1_{F^-\neq 0}[Q^+(F^+, F^+)- Q(\mu_\c, F^-)-Q(F^-,\mu_\c)], \quad F^-\Big|_-=0.\]
because  $F^-\neq 0$ implies $F^+=0$, and hence the term $\1_{F^-\neq 0} Q^-(F^+,F^+)=\1_{F^-\neq 0}F^+\nu(F^+)=0$. Moreover, since $F>0$ on $\g_-$, if follows that $F^-=0$ on $\g_-$. Since $F\to \mu_\c>0$ as $|x|\to \infty$, then $F^-\to 0$ as $|x|\to \infty$.

By multiplying this equation by $-\mu_\c^{-1} F^-$ and integrating, we obtain:
\begin{multline*}\int_{\O^c\times \R^3} \dd x \dd v \mu_\c^{-1}v\cdot \nabla \frac {(F^-)^2} 2= -\e^{-1}\int_{\O^c\times \R^3}\dd x \dd v \1_{F^-\neq 0} Q^+(F^+,F^+) F^-\mu_\c^{-1}\\+\e^{-1}\int_{\O^c\times \R^3}\dd x \dd v \1_{F^-\neq 0}\mu_\c^{-1} F^- [Q(\mu_\c, F^-)+Q(F^-,\mu_\c)]\end{multline*}
By the spectral inequality,
\begin{multline*}-\int_{\O^c\times \R^3}\dd x \dd v \1_{F^-\neq 0}\mu_\c^{-1} F^- [Q(\mu_\c, F^-)+Q(F^-,\mu_\c)]\\=-\int_{\O^c\times \R^3}\dd x \dd v \mu_\c^{-1} F^- [Q(\mu_\c, F^-)+Q(F^-,\mu_\c)]\gtrsim \|\ipc (\mu_\c^{-\frac 1 2}F^-)\|_2^2.\end{multline*}
Therefore by also integrating by parts the l.h.s., we obtain
\begin{multline*}\frac 1 2\int_{\pt\O\times \R^3} \dd S(x) \int_{\R^3} \dd v \mu_\c^{-1}v\cdot n (F^-)^2 + \e^{-1}\|\ipc(\mu_\c^{-\frac 1 2}F^-)\|_2^2\\\lesssim -\e^{-1}\int_{\O^c\times \R^3}\dd x \dd v\1_{F^-\neq 0} Q^+(F^+,F^+) F^-\mu_\c^{-1}\le 0\end{multline*}
This implies that $F^-=0$ on $\g^+$, thus $F^-=0$ on $\g$. Moreover $\ipc(\mu_\c^{-\frac 1 2}F^-)=0$ and hence $Q(\mu_\c, F^-)+Q(F^-,\mu_\c)=0$. Thus
$$-v\cdot\nabla F^-=  \e^{-1}\1_{F^-\neq 0}Q^+(F^+, F^+)\ge 0.$$
Therefore $F^-$ satisfies
$$ v\cdot \nabla F^-\le 0 \quad \text{ in }\O^c, \quad F^-=0 \quad \text{ on } \g.$$
This implies that  $F^-\le 0$, but $F^-\ge 0$ by definition and hence $F^-=0$ identically.
Then, $F=F^+$ and \eqref{mainpose} coincides with the Boltzmann equation \eqref{1} and \eqref{mainposb} is the usual diffuse reflection boundary condition \eqref{bc0}.\end{proof}

\bigskip
Therefore, to construct a positive solution to \eqref{1} we need to construct a solution to  \eqref{mainpose}, \eqref{mainposb}. We need some notation:

Let $\chi=\mathbf{1}_{|v|<\e^{-m}}$, $\bar\chi=\mathbf{1}_{|v|\ge\e^{-m}}=1-\chi$ where   $m>0$ is such such that 
\be\mu_{\c}+\e\sqrt{\mu}_\c \chi(f_1+\e f_2)>0.\label{muchi}\ee
 Such an $m$ certainly exist because, by definition $f_1$ and  by \cite{Bob} $f_2$, are bounded by $\sm_\c P_s$, for some $s>1$, where $P_s$ is a polynomial of degree $s$ in $v$.

Since, for $\beta>0$,  $\exp[-\e^{-\beta}]\lesssim \e^\ell$ for any $\ell>0$,  in the rest of this section we shall use the short notation
\be \e^\infty=\exp[-\e^{-\beta}], \quad \text{ for some } \beta>0.\ee
Remind that 
\be0\le M_{1,\e(\c+ u), 1}=\mu_\c+\e \sqrt{\mu_\c} f_1 + \e^2\sqrt{\mu_\c}\phi_\e.\label{mupos1}\ee
We denote 
$$\mathscr{Q}=f_1+\e (\chi f_2+\bar \chi \phi_\e).$$ 
By \eqref{muchi}, if $\chi=1$, then $\mu_\c+\e \sqrt{\mu} \mathscr{Q}\ge 0$, and the same is true  {if} $\bar\chi=1$  by \eqref{mupos1}. Therefore
$$\mu_\c+\e \sqrt{\mu} \mathscr{Q}\ge 0.$$
We decompose
\be F=\mu_\c+ \e\sqrt{\mu}_\c \mathscr{Q} +\e^{\frac3 2} R\sm_\c.\label{decompR}\ee
Then we define
\be\bar R=\begin{cases} R\quad \text{ if } \mu_\c+\e \sqrt{\mu}\mathscr{Q} +\e^{\frac3 2}\sqrt{\mu}_\c R\ge 0\\
-\e^{-\frac 3 2}(\mu_\c+\e \mathscr{Q}\sqrt{\mu}_\c )\quad \text{ if }\mu_\c+\e \mathscr{Q}\sqrt{\mu}_\c +\e^{\frac3 2}\sqrt{\mu}_\c R< 0\end{cases},\label{defbarR}\ee
and \be\tilde R=\bar R-R.\label{deftildeR}\ee
It follows that
\be F^+= \mu_\c+ \e\sqrt{\mu_\c} \mathscr{Q} +\e^{\frac3 2} \bar R\sm_\c.\label{decompbarR}\ee
\be F^-= \e^{\frac3 2} \tilde R\sm_\c.\label{decomptildeR}\ee
 {Indeed}, if $F(x,v)>0$, then 
\[F^+=F=\mu_\c+ \e\sqrt{\mu}_\c \mathscr{Q} +\e^{\frac3 2} R\sm_\c=\mu_\c+ \e\sqrt{\mu} \mathscr{Q} +\e^{\frac 3 2} \bar R\sm_\c.\]
Moreover, if $F(x,v)\le 0$, then $\e^{\frac3 2} R\le -(\mu_\c+ \e \mathscr{Q}\sqrt{\mu_\c})$ and hence
$$0=\mu_\c+ \e \mathscr{Q}\sqrt{\mu_\c}+\e^{\frac3 2}\bar R\sm_\c=F^+,$$
and
$$F^-=F^+-F= \mu_\c+ \e\sqrt{\mu} \mathscr{Q} +\e^{\frac 3 2} \bar R\sm_\c -(\mu_\c+ \e\sqrt{\mu} \mathscr{Q} +\e^{\frac3 2} R\sm_\c)= 
\e^{\frac 3 2}(\bar R-R)\sm_\c=\e^{\frac3 2}\tilde R\sm_\c.
$$
\begin{lemma} We have the following inequalities:
\be |\bar R|\le |R|,\label{barR<R}\ee
\be |\tilde R|\le \mathbf{1}_{\{\mu_\c+ \e\sqrt{\mu}_\c \mathscr{Q} +\e^{\frac 3 2} \bar R<0\}}2|R|,\label{tildeR<R}\ee
\be |\bar R_1-\bar R_2|\le |R_1-R_2|,\label{lipbarR}\ee
\be |\tilde  R_1-\tilde  R_2|\le 2|R_1-R_2|(\mathbf{1}_{\{\mu_\c+ \e\sqrt{\mu}_\c \mathscr{Q} +\e^{\frac 3 2}  R_1\sm_\c<0\}}+\mathbf{1}_{\{\mu_\c+ \e\sqrt{\mu}_\c \mathscr{Q} +\e^{\frac 3 2}  R_2\sm_\c<0\}}).\label{liptildeR}\ee 
\end{lemma}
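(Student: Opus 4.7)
The plan is to recognize that $\bar R$ is nothing but a truncation of $R$ from below at a state-dependent threshold that is automatically nonpositive. Set
\[
\alpha(x,v) := -\e^{-\frac{3}{2}}\mu_\c^{-\frac{1}{2}}\bigl(\mu_\c+\e\sqrt{\mu_\c}\,\mathscr{Q}\bigr),
\]
so that $\mu_\c+\e\sqrt{\mu_\c}\mathscr{Q}+\e^{\frac{3}{2}}\sqrt{\mu_\c}\,R = \e^{\frac{3}{2}}\sqrt{\mu_\c}(R-\alpha)$ and the sign condition in \eqref{defbarR} becomes simply $R\ge\alpha$. Then \eqref{defbarR} reads $\bar R=\max(R,\alpha)$ and, by \eqref{deftildeR}, $\tilde R=\bar R-R=(\alpha-R)_+$. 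The crucial structural fact is $\alpha\le 0$, which follows from $\mu_\c+\e\sqrt{\mu_\c}\mathscr{Q}\ge 0$, already established through \eqref{muchi} and \eqref{mupos1}.

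With this reformulation, all four inequalities reduce to elementary properties of the map $t\mapsto\max(t,\alpha)$. For \eqref{barR<R} I would case-split on $R\ge\alpha$ (where $\bar R=R$ and equality holds) versus $R<\alpha\le 0$ (where $\bar R=\alpha$ lies in $[R,0]$, hence $|\bar R|=-\alpha\le -R=|R|$). For \eqref{tildeR<R}, $\tilde R$ vanishes on $\{R\ge\alpha\}=\{F\ge 0\}$; on the complement $|\tilde R|=\alpha-R\le -R=|R|$ (again using $\alpha\le 0$), which is in fact stronger than the claimed factor $2$.

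For the Lipschitz bound \eqref{lipbarR}, the key point is that $\alpha$ does not depend on $R$, so $R\mapsto\max(R,\alpha)$ is $1$-Lipschitz and the inequality follows at once. For \eqref{liptildeR}, writing $\tilde R_i=\bar R_i-R_i$ and applying the triangle inequality together with \eqref{lipbarR} yields $|\tilde R_1-\tilde R_2|\le 2|R_1-R_2|$; the indicator factor on the right is then justified by noting that on $\{F_1\ge 0\}\cap\{F_2\ge 0\}$ both $\tilde R_i$ vanish, so the left-hand side is zero there, while on the complement at least one indicator equals $1$ and the right-hand side dominates $2|R_1-R_2|$.

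No step here presents a genuine obstacle: once the identification $\bar R=\max(R,\alpha)$ with $\alpha\le 0$ is made, the entire lemma is a routine exercise in piecewise analysis. The only item requiring care is the correct normalization in \eqref{defbarR} so that $F^+$ in \eqref{decompbarR} really does correspond to the truncation $R\mapsto\max(R,\alpha)$; after that, the four estimates follow by bookkeeping of signs and cases.
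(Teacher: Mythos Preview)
Your proposal is correct and follows essentially the same route as the paper: the paper also argues case by case via $F^+=\max(F,0)$ and $F^-=\max(-F,0)$, which is exactly your identification $\bar R=\max(R,\alpha)$, $\tilde R=(\alpha-R)_+$ with $\alpha\le 0$, and it likewise obtains \eqref{lipbarR} from the $1$-Lipschitz property of the positive part and \eqref{liptildeR} from the triangle inequality plus the observation that both $\tilde R_i$ vanish where both $F_i\ge 0$. Your remark that \eqref{tildeR<R} actually holds with constant $1$ on the support of $\tilde R$ is a minor sharpening of the paper's bound $|\tilde R|\le|R|+|\bar R|\le 2|R|$.
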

\begin{proof}
Indeed,  $\mu_\c+\e \mathscr{Q}\sqrt{\mu}_\c +\e^{\frac 3 2}\sqrt{\mu}_\c R< 0$ implies $R<0$ and hence
$\e^{\frac 3 2}|\bar R|=- \e^{\frac 3 2}\bar R =\sm_\c^{-1}(\mu_\c+\e\sqrt{\mu_\c}\mathscr{Q})<-\e^{\frac 3 2}R=\e^{\frac 3 2}|R|$, which proves \eqref{barR<R}. Moreover, $|\tilde R|\le (|R|+|\bar R|)\mathbf{1}_{\{\mu_\c+ \e\sqrt{\mu} (\mathscr{Q} +\e^{\frac 1 2} \bar R)<0\}}\le 2|R|\mathbf{1}_{\{\mu_\c+ \e\sqrt{\mu} (\mathscr{Q}+\e^{\frac 1 2} \bar R)<0\}}$ which proves \eqref{tildeR<R}.
Furthermore given $F_1$ and $F_2$, we have
$$|F_1^+-F_2^+|\le |F_1-F_2|.$$
$$|F_1^--F_2^-|\le 2|F_1-F_2|.$$
In fact, fixed $(x,v)$, without loss of generality suppose $F_1(x,v)\ge F_2(x,v)$. If $F_2(x,v)>0$ there nothing to show. Thus assume $F_2(x,v)\le 0$. 
If $F_1(x,v)\le 0$ then  $F^+_1(x,v)=F^+_2(x,v)=0$ and the inequality is obviously verified.
Therefore we only need to consider the case $F_1(x,v)> 0$ and $F_2\le0$. We have
$$|F_1^+(x,v)-F_2^+(x,v)|=F_1(x,v)\le F_1(x,v)-F_2(x,v)=|F_1(x,v)-F_2(x,v)|.$$
Moreover, since $F^-= F^+-F$, $|F_1^--F_2^-|\le |F_1^+-F_2^+|+|F_1-F_2|\le 2 |F_1-F_2|$.
Therefore, with $R_i$ defined by \eqref{decompR} and $\bar R_i$ by \eqref{decompbarR},
it follows that
\begin{multline*}\e^{\frac 3 2}|\bar R_1-\bar R_2|= \sm_\c^{-1}|(F_1^+-\mu_\c -\e\sqrt{\mu_\c}\mathscr{Q}) -(F_2^+-\mu_\c -\e\sqrt{\mu_\c}\mathscr{Q})|= \sm_\c^{-1}|F_1^+-F_2^+|\le \sm_\c^{-1}|F_1-F_2|\\=\sm_\c^{-1}|(F_1-\mu_\c -\e\sqrt{\mu_\c}\mathscr{Q}) -(F_2-\mu_\c -\e\sqrt{\mu_\c}\mathscr{Q})|=\e^{\frac 3 2}|R_1-R_2|.\end{multline*}
Hence \eqref{lipbarR} is proved.
Furthermore
\be |\tilde  R_1-\tilde  R_2|\le 2|R_1-R_2|(\mathbf{1}_{\{\mu_\c+ \e\sqrt{\mu_\c}( \mathscr{Q} +\e^{\frac 1 2}  R_1)<0\}}+\mathbf{1}_{\{\mu_\c+ \e\sqrt{\mu_\c}( \mathscr{Q} +\e^{\frac 1 2}  R_2)<0\}}).\label{liptildeR1}\ee 
In fact, since $F_1^--F_2^-$ vanishes outside of the set $$\{\mu_\c+ \e\sqrt{\mu_\c}( \mathscr{Q} +\e^{\frac 1 2}  R_1)<0\}\cup\{\mu_\c+ \e\sqrt{\mu_\c} (\mathscr{Q} +\e^{\frac 1 2}  R_2)<0\}$$
and $\e^{-\frac3 2}|F_1-F_2|=|R_1-R_2|$, we have
\begin{multline*}|\tilde  R_1-\tilde  R_2|=\e^{-\frac 3 2}\sm_\c^{-1}|(F^-_1-\mu_\c -\e\sqrt{\mu_\c}\mathscr{Q})-(F^-_2-(\mu_\c -\e\sqrt{\mu_\c}\mathscr{Q})|=\e^{-\frac 3 2}\sm_\c^{-1}|F^-_1-F^-_2|\le\\
 2\e^{-\frac 3 2}\sm_\c^{-1}|F_1-F_2|(\mathbf{1}_{\{\mu_\c+ \e\sqrt{\mu_\c} (\mathscr{Q} +\e^{\frac 1 2}  R_1)<0\}}+\mathbf{1}_{\{\mu_\c+ \e\sqrt{\mu_\c}( \mathscr{Q} +\e^{\frac 1 2}  R_2)<0\}})=\\
 2|R_1-R_2|(\mathbf{1}_{\{\mu_\c+ \e\sqrt{\mu_\c}( \mathscr{Q} +\e^{\frac 3 2}  R_1)<0\}}+\mathbf{1}_{\{\mu_\c+ \e\sqrt{\mu_\c} (\mathscr{Q} +\e^{\frac 3 2}  R_2)<0\}}).\end{multline*}
 \end{proof}
\medskip
As for the boundary conditions, we have
$$\mu_\c +\e f_1\mu_\c^{\frac 1 2}+\e^2 (\chi f_2+\bar\chi \phi_\e) \mu_\c^{\frac 1 2} +\e^{\frac 3 2} R \mu_\c^{\frac 1 2}= \mathcal{P}^w_\g[\mu_\c +\e\chi f_1\mu_\c^{\frac 1 2}+\e^2 (\chi f_2+\bar\chi \phi_\e) \mu_\c^{\frac 1 2} +\e^{\frac 3 2} \bar R \mu_\c^{\frac 1 2}].$$
Therefore, subtracting this equations from \eqref{mu=Pmu},
$$\e^2 \chi (f_2-\phi_\e) \mu_\c^{\frac 1 2} +\e^{\frac 3 2} R \mu_\c^{\frac 1 2}=\mathcal{P}^w_\g[\e^2 \chi  (f_2-\phi_\e) \mu_\c^{\frac 1 2} +\e^{\frac 3 2} \bar R \mu_\c^{\frac 1 2}].$$
Hence
\be R=P^\c_\g R +\e^{\frac 1 2} \bar r+P^\c_\g \tilde R,\label{bcRmod1}\ee
with
$$\bar r= {P}^\c_\g[ \chi (f_2- \phi_\e)]- \chi (f_2- \phi_\e).$$  
We have 
\be\Big\|\int_{\R^3} dv [\mu_\c+ \sqrt{\mu_\c}\e \mathscr{Q}] v\cdot n\Big\|_\infty=\e^\infty
\quad \text{ on }\pt\O.\label{fluxmod2}\ee 
In fact 
\be\int_{\R^3} \dd v (\mu_\c+(\e f_1+\e^2\phi_\e)\sqrt{\mu_\c}) v\cdot n=\int_{\R^3} \dd v M_{1,\e(\c+u),1} v\cdot n =\e n\cdot(\c+u)=0,\ee on $\pt\O$ because $u= {-}\c$ on $\pt \O$,  {see \eqref{INS}}. We have also $\int_{\R^3} dv (\mu_\c+\e \sm_\c f_1)v\cdot n=0$ on $\pt\O$ and hence $\int_{\R^3} dv \sm_\c  \phi_\e v\cdot n=0$ on $\pt\O$. Therefore, by \eqref{estzeta}, since $u|_{\pt \O}=-\c$,

\be\Big|\int_{\R^3}\dd v\, n\cdot v \sqrt{\mu_\c}\chi\phi_\e\Big|=\Big|-\int_{\R^3}\dd v\,n\cdot v \sqrt{\mu_\c}\bar\chi\phi_\e\Big|\le e^{-\e^{-m}}|\c|^2\lesssim\e^\infty \quad \text{on } \pt\O.\label{619}\ee
Since $\P_\c f_2=0$, in the same way we obtain 
\be\Big|\int_{\R^3}\dd v n\cdot v \sqrt{\mu_\c}\chi f_2\Big|=\Big|-\int_{\R^3}\dd v\, n\cdot v \sqrt{\mu_\c}\bar\chi f_2\Big|\le e^{-\e^{-m}}|\c|^2\lesssim\e^\infty \quad \text{on } \pt\O,\label{620}\ee
because $|f_2|\le\sm_\c P_\ell|(|\nabla u|+|u|^2)$ and $\nabla u$ is bounded in $L^p$ for any $p>\frac 4 3$ and \eqref{fluxmod2} follows.

The boundary conditions for $F$ imply 
\[\int_{\R^3}F\dd v\,v\cdot n =-\int_{\{v\cdot n>0\}} \dd v\, F^-n\cdot v,\]
on $\pt\O$. 
Therefore we have 
\[\int_{\R^3}\dd v\, \sm_\c Rv\cdot n=-\int_{v\cdot n>0} \dd v\, \sm_\c\tilde R n\cdot v+O(\e^\infty)\]
\begin{lemma}
\be\Big\|\int_{v\cdot n<0}\dd v\, \bar r\sqrt{\mu_\c}n\cdot v\Big\|_\infty\lesssim\e^\infty.\label{r0mod1}\ee 
\end{lemma}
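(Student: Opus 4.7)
The plan is to unpack the definition $\bar r = P^\c_\g[\chi(f_2-\phi_\e)] - \chi(f_2-\phi_\e)$ on $\gamma_-$ and show that the integral collapses into one that is essentially $-\int_{\R^3}\chi(f_2-\phi_\e)\sqrt{\mu_\c}\, v\cdot n\, dv$, which was already estimated by $\e^\infty$ in \eqref{619} and \eqref{620}. So the proof is mostly algebra and bookkeeping of signs; the deeper content lies in \eqref{619}--\eqref{620}.

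First I would split
\[
\int_{v\cdot n<0}\dd v\, \bar r\sqrt{\mu_\c}\,(n\cdot v) = \int_{v\cdot n<0}\dd v\, P^\c_\g[\chi(f_2-\phi_\e)]\sqrt{\mu_\c}\,(n\cdot v) - \int_{v\cdot n<0}\dd v\, \chi(f_2-\phi_\e)\sqrt{\mu_\c}\,(n\cdot v).
\]
For the first piece, using $P^\c_\g[h](x,v) = \sqrt{2\pi}\, \mu(v)\mu_\c^{-1/2}(v)\, z_{\g_+}(h)(x)$ from \eqref{Pgammac}, I factor out $z_{\g_+}(\chi(f_2-\phi_\e))$ and use the normalization $\sqrt{2\pi}\int_{v\cdot n<0}\mu(v)|v\cdot n|\dd v = 1$ from \eqref{Tw}. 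This yields exactly $-z_{\g_+}(\chi(f_2-\phi_\e))(x)$, i.e.\ $-\int_{v\cdot n>0}\chi(f_2-\phi_\e)\sqrt{\mu_\c}\,(n\cdot v)\dd v$.

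Adding the two pieces, the $\{v\cdot n<0\}$ and $\{v\cdot n>0\}$ integrations combine into
\[
\int_{v\cdot n<0}\dd v\, \bar r\sqrt{\mu_\c}\,(n\cdot v) = -\int_{\R^3}\chi(f_2-\phi_\e)\sqrt{\mu_\c}\,(n\cdot v)\dd v.
\]
Applying \eqref{619} and \eqref{620} to each summand (both state precisely that the $\chi$-truncated flux of $f_2$ and of $\phi_\e$ across $\pt\O$ is $O(\e^\infty)$, by splitting $\int_{\R^3} = \int\chi + \int\bar\chi$ and using that the total flux vanishes while the $\bar\chi$ part decays like $\exp[-\e^{-m}]$), I conclude that the right-hand side is bounded by $\e^\infty$ uniformly in $x\in \pt\O$, which is \eqref{r0mod1}.

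There is no real obstacle here: the algebraic identity is the whole content. The only point that requires a second look is verifying that the normalization constant $\sqrt{2\pi}$ in $P^\c_\g$ exactly matches $\sqrt{2\pi}\int_{v\cdot n<0}\mu|v\cdot n|\dd v = 1$ so that the $P^\c_\g$ piece produces $-z_{\g_+}$ with coefficient $1$; and that the signs of $(n\cdot v)$ on $\g_\pm$ conspire to turn the sum of two half-space integrals into a full-space integral, which is the cancellation that makes the whole thing small.
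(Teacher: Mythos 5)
Your proof is correct and follows essentially the same route as the paper: the paper simply invokes its identity \eqref{fluxh} to combine the two half-space integrals into $\int_{\R^3}\chi(f_2-\phi_\e)\sqrt{\mu_\c}\,(n\cdot v)\,\dd v$, and that identity is proved by exactly the computation you carry out (factoring out $z_{\g_+}$ and using the normalization of $M^w$), after which \eqref{619}--\eqref{620} finish the argument in both versions.
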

\begin{proof}
We have $\bar r=P_\g^\c (\chi f_2-\bar\chi\phi_\e)-(\chi f_2-\bar\chi\phi_\e)$ 
and, using \eqref{fluxh}, $\int_{v\cdot n<0}\dd v\, \bar r\sqrt{\mu_\c}n\cdot v=\int_{\R^3}\dd v\, \sqrt{\mu_\c}(\chi f_2-\chi\phi_\e)n\cdot v$. \eqref{619} and \eqref{620} imply \eqref{r0mod1}. 
\end{proof}

\medskip
We rewrite the problem \eqref{mainpose}, \eqref{mainposb} using the decompositions \eqref{decompR} and \eqref{decompbarR}. Reminding the definitions of $f_1$, $f_2$ and the incompressible Navier-Stokes equations, we are reduced  to construct the solution to the problem:
\begin{eqnarray}&& v\cdot \nabla R +\e^{-1} L_\c R= L_\c^{(1)} \bar R +\e^{\frac 1 2} \Gamma_\c(\bar R,\bar R) +\e^{\frac 1 2} \bar A_\c,\label{Rmod}\\
&&R\Big|_{\gamma_-}= P_\gamma^\c R +\e^{\frac 1 2} r,\label{bmod}\\
&&\text{where}\notag\\
&&L_\c^{(1)} \bar R = 2\tilde\Gamma_\c(\mathscr{Q} , \bar R),\label{modL1}\\
&&\P_\c\bar A_\c= \P_\c[\bar \chi v\cdot\nabla(\bar\chi(\phi_\e-f_2))  {-\e\c\cdot\nabla f_2}]\label{pabar}\\&&\ipc\bar A_\c=\ipc(v\cdot\nabla (\chi f_2+\bar \chi \phi_\e))\notag\\&&-\tilde\Gamma_\c (2f_1+\e(\chi f_2+\bar\chi\phi_\e),\chi f_2+\bar\chi\phi_\e)+\e^{-1}L_\c[ \bar \chi(\phi_\e-f_2)],\label{1-pabar}\\
&&r=\bar r -\e^{-\frac 1 2} P_\g^\c \tilde R.\label{rmod} \end{eqnarray}
In fact, reminding
\eqref{f2} and \eqref{Pvnablaf1},
we have 
\[L_\c (\chi f_2) -\G_\c(f_1,f_1)+ \ipc(v\cdot \nabla f_1)=-L_\c (\bar \chi f_2),\]
 and 
\[\P_\c(v\cdot \nabla \chi f_2)= -\P_\c(v\cdot \nabla \bar \chi f_2),\]
so that 
\[v\cdot \nabla(\chi f_2) =\ipc(v\cdot \nabla(\chi f_2))-\P_\c(v\cdot \nabla(\bar\chi f_2)).\]

Therefore,
\begin{eqnarray*}
\bar A_\c:&=&\e^{-(\frac 12 +\frac 3 2)}\Big\{\e\{\ipc(v\cdot\nabla f_1)+L_\c(\chi f_2)-\G_\c(f_1,f_1)\}+\e L_\c(\bar\chi \phi_\e)\\&+&\e^2v\cdot \nabla(\chi f_2+\bar\chi \phi_\e)-\e^2\tilde\G_\c((\chi f_2+\bar\chi \phi_\e),  2f_1+\e(\chi f_2+\bar\chi \phi_\e))
\Big\} {-\e\c\cdot\nabla f_2}\\&=&-\e^{-1}L_\c(\bar\chi f_2)+\e^{-1} L_\c(\bar  \chi \phi_\e)+
\ipc(v\cdot \nabla  (\chi f_2))\\ &-&\P_\c(v\cdot \nabla(\bar\chi f_2))+v\cdot \nabla (\bar\chi \phi_\e) -\tilde\G_\c((\chi f_2+\bar\chi \phi_\e),  2f_1+\e(\chi f_2+\bar\chi \phi_\e)) {-\e\c\cdot\nabla f_2}\\&=&
\e^{-1}L_\c[\bar\chi(\phi_\e-f_2)]+ \P_\c[v\cdot\nabla(\bar\chi(\phi_\e-f_2))]+\ipc(v\cdot\nabla(\chi f_2+\chi_\c\phi_\e))\\&-&\tilde\G_\c((\chi f_2+\bar\chi \phi_\e),  2f_1+\e(\chi f_2+\bar\chi \phi_\e)) {-\e\c\cdot\nabla f_2}.
\end{eqnarray*}
\begin{proposition}\label{721}
Let $X\in L^p(\O^c\times \R^3)$  {and $\bar X$ and $\tilde X$ be defined as $\bar X$ and $\tilde R$, as in \eqref{defbarR} and \eqref{deftildeR}. Assume }$p>1$, $|\c|\ll1$, Then:
\begin{enumerate}
\item Let $w(v)$ be such that $w^{-1}\lesssim \sm_\c^{\beta}\langle v\rangle^{-\beta'}$ for some $0<\beta\ll1$ and $\beta'>0$. If $\e^{\frac 1 2}\|w X\|_\infty$ is bounded, then 
\be|P_\g^\c\tilde X|_{2,-}\lesssim [\e(|\c|+\e^{\frac 1 2}\|w X\|_\infty)]^{1+\beta}|X|_{2,+}
.\label{estR-barR}\ee
\be|P_\g^\c\tilde X|_{\infty,-}\lesssim [\e(|\c|+\e^{\frac 1 2}\|w X\|_\infty)]^{1+\beta}|X|_{\infty,+}
.\label{estR-barRinfty}\ee
\item Given $X_1$ and $X_2$ such that $\e^{\frac 1 2}\|w X_i\|_\infty$ are bounded, then,
\be |P_\g^\c[\tilde X_1-\tilde X_2]|_{2,-}\lesssim  [\e(|\c|+\max_{i=1,2}(\e^{\frac 1 2}\|wX_i\|_\infty))]^{1+\beta}|X_1-X_2|_{2,+}
.\label{estlipR-barR}\ee
\be |P_\g^\c[\tilde X_1-\tilde X_2]|_{\infty,-}\lesssim   [\e(|\c|+\max_{i=1,2}(\e^{\frac 1 2}\|wX_i\|_\infty))]^{1+\beta}|X_1-X_2|_{\infty}
.\label{estlipR-barRinfty}\ee
\item $\Gamma^\pm_\c(\bar X, \bar X)\le \Gamma_\c^\pm( |X|,  |X|) |$ and $|\Gamma_\c^\pm(\bar X_1, \bar X_1)-\tilde\Gamma^\pm_\c(\bar X_2, \bar X_2)|\lesssim \tilde\Gamma^\pm_\c(| X_1|+| X_2|, |X_1-X_2|)$.
\end{enumerate}
\end{proposition}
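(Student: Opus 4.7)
The plan is to treat the three claims in order.

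For (1), I would start from the fundamental observation that on the support of $\tilde X$, the very inequality defining $\tilde X\neq 0$, namely $\mu_\c+\e\sqrt{\mu_\c}\mathscr{Q}+\e^{3/2}\sqrt{\mu_\c}X<0$, after division by $\sqrt{\mu_\c}$ gives the pointwise bound $\sqrt{\mu_\c}\leq \e^{3/2}|X|+\e|\mathscr{Q}|$. Since $|\mathscr{Q}|\lesssim|\c|\sqrt{\mu_\c}\langle v\rangle^{k}$ at leading order (because $f_1$ is linear in $u=O(|\c|)$ while $f_2$ and $\phi_\e$ contribute at higher order in $\e$ with Gaussian decay), and $|X|\leq\|wX\|_\infty w^{-1}$ with $w^{-1}\lesssim\mu_\c^{\beta/2}\langle v\rangle^{-\beta'}$, absorbing the $\mathscr{Q}$ term yields the key pointwise inequality on the support,
\[
\sqrt{\mu_\c}(v)\,w(v)\lesssim \e\bigl(|\c|+\e^{1/2}\|wX\|_\infty\bigr)=:\tau.
\]
Then, using $P_\g^\c\tilde X=\sqrt{2\pi}(\mu/\sqrt{\mu_\c})z_{\g_+}(\tilde X)$ and Lemma 2.3, the problem reduces to bounding $z_{\g_+}(\tilde X)(x)=\int_{v\cdot n>0}\tilde X\sqrt{\mu_\c}|v\cdot n|\,\dd v$. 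Cauchy--Schwarz with $|\tilde X|\leq 2|X|\mathbf{1}_{\text{supp}}$ gives $|z_{\g_+}(\tilde X)(x)|^2\leq 4(\int|X|^2|v\cdot n|\dd v)(\int\mu_\c|v\cdot n|\mathbf{1}_{\text{supp}}\dd v)$, and squaring the pointwise support bound produces $\mu_\c\mathbf{1}_{\text{supp}}\lesssim\tau^2 w^{-2}$. Since $\int w^{-2}|v\cdot n|\dd v<\infty$, integration over $\pt\O$ already yields the linear ($\beta=0$) version.

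To promote the exponent from $1$ to $1+\beta$, I would interpolate: split $|X|^2=|X|^{2-2\beta}|X|^{2\beta}$, use $|X|^{2\beta}\leq\|wX\|_\infty^{2\beta}w^{-2\beta}$ (the $L^\infty$ bound), and apply H\"older in $v$ with conjugate exponents $1/(1-\beta)$ and $1/\beta$ against the finite integral $\int w^{-2}|v\cdot n|\dd v$, then H\"older on $\pt\O$ to convert $(\int|X|^2|v\cdot n|\dd v)^{1-\beta}$ into $|X|_{2,+}^{2(1-\beta)}$. Collecting $\tau\cdot\|wX\|_\infty^\beta$ and using $\|wX\|_\infty^\beta=\e^{-\beta/2}(\e^{1/2}\|wX\|_\infty)^\beta$ regroups the factors as $\tau^{1+\beta}|X|_{2,+}$, yielding \eqref{estR-barR}. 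The $L^\infty$ version \eqref{estR-barRinfty} is analogous, but simpler: estimate $|z_{\g_+}(\tilde X)|\leq 2|X|_{\infty,+}\int\sqrt{\mu_\c}|v\cdot n|\mathbf{1}_{\text{supp}}\dd v$ and use $\sqrt{\mu_\c}\leq\tau w^{-1}$ on the support to bound the measure by $\tau^{1+\beta}$ after raising the support constraint to the appropriate power and using Gaussian decay of $w^{-1}$.

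For (2), the Lipschitz bound \eqref{liptildeR} controls $|\tilde X_1-\tilde X_2|$ by $2|X_1-X_2|$ on the union of supports of $\tilde X_1,\tilde X_2$. On each support the same pointwise estimate $\sqrt{\mu_\c}w\lesssim \e(|\c|+\e^{1/2}\|wX_i\|_\infty)$ holds, so taking $\max_i$ replaces $\|wX\|_\infty$ by $\max_i\|wX_i\|_\infty$ in $\tau$, and the identical Cauchy--Schwarz/H\"older argument, applied now to $X_1-X_2$ in place of $X$, gives \eqref{estlipR-barR} and \eqref{estlipR-barRinfty}. For (3), the pointwise positivity of the collision kernel together with $|\bar X|\leq|X|$ immediately gives $|\Gamma_\c^\pm(\bar X,\bar X)|\leq\Gamma_\c^\pm(|X|,|X|)$. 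Bilinearity writes $\Gamma_\c^\pm(\bar X_1,\bar X_1)-\Gamma_\c^\pm(\bar X_2,\bar X_2)=\Gamma_\c^\pm(\bar X_1-\bar X_2,\bar X_1)+\Gamma_\c^\pm(\bar X_2,\bar X_1-\bar X_2)$; combined with $|\bar X_1-\bar X_2|\leq|X_1-X_2|$ and $|\bar X_i|\leq|X_i|$, this yields the claimed symmetrized bound.

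The main obstacle is the sharp support estimate that produces the exponent $1+\beta$: the gain requires simultaneously exploiting the two-sided pointwise control on $X$ (the a priori $\|wX\|_\infty$-upper bound globally, and the lower bound $|X|\gtrsim \e^{-3/2}\sqrt{\mu_\c}$ forced by the support condition), and combining these with the Gaussian decay in $w$ to make the support measure in velocity genuinely small. The linear exponent follows straightforwardly; the refinement to $1+\beta$ hinges on the precise interpolation above.
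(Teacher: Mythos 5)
Your overall strategy coincides with the paper's: reduce to the mass flux $z_{\g_+}(\tilde X)$ via $P_\g^\c\tilde X=\sqrt{2\pi}(\mu/\sm_\c)\,z_{\g_+}(\tilde X)$, use $|\tilde X|\le 2|X|\mathbf{1}_{\mathrm{supp}}$ together with the support condition $\sqrt{\mu_\c}\,w\lesssim \tau:=\e(|\c|+\e^{1/2}\|wX\|_\infty)$, and conclude by Cauchy--Schwarz in $v$; parts (2) and (3) are handled exactly as in the paper. However, the interpolation you propose to upgrade the exponent from $1$ to $1+\beta$ in \eqref{estR-barR} does not work. Splitting $|X|^2=|X|^{2-2\beta}|X|^{2\beta}$ and extracting $\|wX\|_\infty^{2\beta}$ yields, after your H\"older steps, a bound of the form $\|wX\|_\infty^{\beta}\,\tau\,|X|_{2,+}^{1-\beta}$. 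This is not the claimed estimate, for two reasons: the power of $|X|_{2,+}$ has dropped to $1-\beta$, and the conversion $\|wX\|_\infty^{\beta}=\e^{-\beta/2}(\e^{1/2}\|wX\|_\infty)^{\beta}\le\e^{-3\beta/2}\tau^{\beta}$ (since $\tau\ge\e^{3/2}\|wX\|_\infty$) leaves an unabsorbable divergent factor $\e^{-3\beta/2}$. So the asserted ``regrouping as $\tau^{1+\beta}|X|_{2,+}$'' is incorrect.

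The correct mechanism --- which you in fact already invoke for the $L^\infty$ version --- is to leave the factor $\big(\int|X|^2|v\cdot n|\,\dd v\big)^{1/2}$ untouched and extract the whole gain from the second Cauchy--Schwarz factor by raising the support indicator to a power: on the support $\tau w^{-1}\mu_\c^{-1/2}\gtrsim 1$, hence $\mathbf{1}_{\mathrm{supp}}\lesssim(\tau w^{-1}\mu_\c^{-1/2})^{2(1+\beta)}$ and
\[
\int_{\R^3}\mu_\c|v\cdot n|\,\mathbf{1}_{\mathrm{supp}}\,\dd v\;\lesssim\;\tau^{2(1+\beta)}\int_{\R^3}\mu_\c^{-\beta}\,w^{-2(1+\beta)}|v\cdot n|\,\dd v\;\lesssim\;\tau^{2(1+\beta)},
\]
the last integral being finite precisely because of the hypothesis $w^{-1}\lesssim \mu_\c^{\beta/2}\langle v\rangle^{-\beta'}$ (so that $\mu_\c^{-\beta}w^{-2(1+\beta)}\lesssim\mu_\c^{\beta^2}\langle v\rangle^{-2\beta'(1+\beta)}$) together with $\beta'>2$. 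This gives $|z_{\g_+}(\tilde X)|\lesssim\tau^{1+\beta}\big(\int|X|^2|v\cdot n|\,\dd v\big)^{1/2}$ pointwise on $\pt\O$ and hence \eqref{estR-barR} with the full power of $|X|_{2,+}$; this is exactly the paper's argument. With this correction, the remainder of your proposal (the pointwise bound on $\mathscr{Q}$, the use of \eqref{liptildeR} for part (2), and bilinearity plus \eqref{barR<R}, \eqref{lipbarR} for part (3)) is sound and matches the paper.
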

\begin{proof}

To prove \eqref{estR-barR}, note that
\[\mathbf{1}_{\{\mu_\c +\e\sqrt{\mu_\c}\{\mathscr{Q}+\e^{\frac 1 2} X\}<0 \}}\le\mathbf{1}_{\{\sqrt{\mu_\c}< \e(|\mathscr{Q}|+\e^{\frac 1 2}\| w X\|_\infty) w^{-1} \}}\]
Therefore, by \eqref{tildeR<R}, since $w^{-1}\lesssim \sm_\c^{\beta}\langle v\rangle^{-\beta'}$ for some $0<\beta\ll1$ and $\beta'>0$, and $|\mathscr{Q}|\lesssim |\c|\sm_\c \langle v\rangle^\ell$ for some $\ell>0$, we have 
\begin{multline}|P^\c_\gamma\tilde X|\le2\frac{\mu}{\sqrt{\mu_\c(v)}}\int_{v'\cdot n >0}\dd v'\sqrt{\mu_\c(v')}|v'\cdot n| \mathbf{1}_{\{\sqrt{\mu_\c(v')}< \e(|\mathscr{Q}|+\e^{\frac 1 2}\| w X\|_\infty w^{-1})\}}|X|dv'\\ \le 2\|X|v\cdot n|^{\frac 1 2}\|_{L^2_v}\frac{\mu}{\sqrt{\mu_\c(v)}}\Big(\int_{\R^3}\dd v' w^{-2}(v')
 \mathbf{1}_{\{\sqrt{\mu_\c(v')}< \e(|\mathscr{Q}|+\e^{\frac 1 2}\| w X\|_\infty w^{-1})\}}\mu_\c(v')|v'\cdot n|\Big)^{\frac 1 2}\\\lesssim [\e(|\c|+\e^{\frac 1 2}\| w X\|_\infty)]^{1+\beta}\|X|v\cdot n|^{\frac 1 2}\|_{L^2_v}\frac{\mu}{\sqrt{\mu_\c(v)}},\label{netfluxmod}
\end{multline}  
because $\int_{\R^3}dv' \langle v\rangle^{-2\beta'}(v')|v'\cdot n|\lesssim 1$ by choosing $\beta'>2$.
Therefore \be\int_{\gamma_-}\dd v|P_\gamma \tilde X|^2\lesssim [\e(|\c|+\e^{\frac 1 2}\| w X\|_\infty)]^{2+2\beta}|X|^2_{2,+},\ee
so \eqref{estR-barR} is proven.

We also have 
\begin{multline}|P^\c_\gamma\tilde X| \le \|X\|_{\infty}\frac{\mu}{\sqrt{\mu_\c(v)}}\Big(\int_{\R^3}\dd v' w^{-2}(v')
 \mathbf{1}_{\{\sqrt{\mu_\c(v')}< \e(|\mathscr{Q}|+\a w^{-1})\}}\mu_\c(v')|v'\cdot n|^2\Big)^{\frac 1 2}\\\lesssim  [\e(|\c|+\e^{\frac 1 2}\| w X\|_\infty)]^{1+\beta}\|X\|_{\infty}\frac{\mu}{\sqrt{\mu_\c(v)}},\label{netfluxmod-}
\end{multline}  
from which \eqref{estR-barRinfty} follows.

To prove (2), we observe that, if $\|w X_i\|_\infty\le \a$, by \eqref{liptildeR}         
\be|P^\c_\gamma(\tilde X_1-\tilde X_2)|\le 4\a\frac{\mu}{\sqrt{\mu_\c(v)}}\int_{v'\cdot n >0}\dd v'\sqrt{\mu_\c(v')}v'\cdot n | X_1-X_2| \mathbf{1}_{\sqrt{\mu_\c(v')}< 4\e\a w^{-1}\}}dv'. \ee The rest of the proof is as before. 

Statement (3) follows immediately from \eqref{barR<R} and \eqref{lipbarR}.
\end{proof}
\begin{proposition}\label{722}
Let $u$ be the solution to the incompressible Navier-Stokes equations. Then, if $\e\ll 1$,
\begin{itemize}\item
for any $p>1$
\be\|\P_\c \bar A_\c\|_p\lesssim \e^\infty
;\label{Pbarac}\ee
\item for any $p>\frac 4 3$
\be\|\ipc \bar A_\c\|_p\lesssim |\c|;\label{ipcbara}\ee
\item  
\begin{eqnarray}&& \|\nu^{-\frac 1 2}L_\c^{(1)} \bar X\|_p\lesssim |\c|\|X\|_{\frac {3p}{3-p}}\quad \text{ for } p<3,\label{L1baraR}\\
&&\|\nu^{-\frac 1 2}L_\c^{(1)} \bar X\|_p\lesssim |\c|\|wX\|_{\infty}\quad \text{ for } p\ge 3.\label{L1baraR1}\end{eqnarray}
\end{itemize}
\end{proposition}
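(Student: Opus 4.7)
The plan rests on two structural features. First, every occurrence of the high-velocity cutoff $\bar\chi=\mathbf{1}_{|v|\ge\e^{-m}}$ is paired with a Gaussian factor in $v$ (from $\sm_\c$ directly, or through $\phi_\e$ and $f_2$), so that the elementary bound $\int_{|v|\ge\e^{-m}}\langle v\rangle^\ell \sm_\c \dd v\lesssim e^{-c\e^{-2m}}=\e^\infty$ wipes out any algebraic prefactor. Second, the SINS flow $u$ solving \eqref{INS} satisfies $\|u\|_p\lesssim |\c|$ for $p>2$ and $\|\nabla u\|_p\lesssim |\c|$ for $p>\tfrac43$ (Appendix \ref{appendice}), which via \eqref{f1}, \eqref{f2}, \eqref{estzeta} yields analogous $|\c|$-bounds for $f_1,f_2,\phi_\e$ with the appropriate Gaussian weights in $v$.

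For item (1), I would expand $\P_\c[\bar\chi v\cdot\nabla(\phi_\e-f_2)]=\sum_{\alpha=0}^{4}\psi_\alpha(v) J_\alpha(x)$ with $J_\alpha(x)=\int \bar\chi(v')\psi_\alpha(v')\, v'\cdot\nabla_x(\phi_\e-f_2)(x,v')\,\dd v'$. Since $\phi_\e-f_2$ carries a $\sm_\c$-factor (up to polynomial growth) and its $x$-derivative inherits the same Gaussian decay, pulling the $x$-dependent coefficients (polynomial expressions in $u,\nabla u,\c$) outside the $v'$-integral leaves $|J_\alpha(x)|\lesssim \e^\infty\,\bigl(|\nabla u(x)|+|u(x)||\nabla u(x)|+|\c|^2\bigr)$. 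Taking $L^p_x$ of the right-hand side is finite for every $p>1$ by the Appendix bounds, and the $\e^\infty$ dominates, giving \eqref{Pbarac}.

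For item (2), split $\ipc\bar A_\c$ into its three summands. The term $\ipc(v\cdot\nabla(\chi f_2+\bar\chi\phi_\e))$ further splits as $\ipc(v\cdot\nabla f_2)+\ipc(v\cdot\nabla[\bar\chi(\phi_\e-f_2)])$; the first is $\lesssim\|\nabla u\|_p\lesssim |\c|$ in $L^p$ for $p>\tfrac43$ by the bilinear dependence of $f_2$ on $u$, while the second is $\e^\infty$ by the same Gaussian argument as in (1). The bilinear summand $\tilde\G_\c(2f_1+\e(\chi f_2+\bar\chi\phi_\e),\chi f_2+\bar\chi\phi_\e)$ is handled by the standard estimate $\|\nu^{-\frac12}\tilde\G_\c(F,G)\|_p\lesssim \|wF\|_\infty\|G\|_p+\|F\|_p\|wG\|_\infty$; each pairing involves at least one of $f_1, f_2, \phi_\e$ in the $L^p$ slot, yielding $O(|\c|^2)$. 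Finally $\e^{-1}L_\c[\bar\chi(\phi_\e-f_2)]$ acquires the $\e^\infty$-factor from $\bar\chi$, which dominates $\e^{-1}$.

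For item (3), I use $|\bar X|\le|X|$ from Proposition \ref{721}(3), so that $\|\nu^{-\frac12}L_\c^{(1)}\bar X\|_p\le 2\|\nu^{-\frac12}\tilde\G_\c(\mathscr{Q},|X|)\|_p$. The standard bilinear $\G_\c$-estimate gives $\|\nu^{-\frac12}\tilde\G_\c(F,G)\|_p\lesssim \|F\|_a\|G\|_b$ for $\tfrac1p=\tfrac1a+\tfrac1b$, with Gaussian weights absorbed into $\mathscr{Q}$. For $p<3$ choose $a=3$, $b=\tfrac{3p}{3-p}$ and use $\|\mathscr{Q}\|_3\lesssim \|u\|_3+\e(\|f_2\|_3+|\c|^2)\lesssim|\c|$ to obtain \eqref{L1baraR}. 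For $p\ge 3$ take $a=p$, $b=\infty$, pairing $\|\mathscr{Q}\|_p\lesssim|\c|$ (valid for $p>2$) with $\|wX\|_\infty$, yielding \eqref{L1baraR1}. No step in the argument is deep: the main technical points are the uniform $\e^\infty$-gain from the cutoff $\bar\chi$ and the $|\c|$-smallness of $u$ in the appropriate $L^p$-spaces, both of which are at hand.
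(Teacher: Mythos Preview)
Your proposal is correct and follows essentially the same route as the paper: both arguments rest on the elementary gain $\bar\chi\,\mu_\c^{\beta}\lesssim e^{-\frac\beta2\e^{-2m}}=\e^\infty$ whenever the high-velocity cutoff meets a Gaussian, together with the $|\c|$-smallness of $u$ in the relevant $L^p$ spaces from the Appendix, and the standard H\"older-type bilinear estimate $\|\nu^{-\frac12}\Gamma_\c(f,g)\|_p\lesssim\|\nu^{\frac12}f\|_{pq}\|g\|_{pq'}$ for item (3). One small addition to your bookkeeping in item (1): since $f_2$ contains $\nabla u$ via $\mathscr{B}_{ij}\partial_i u_j$, the coefficient $\nabla_x f_2$ also produces a $|D^2 u|$ contribution in $J_\alpha$; this is in $L^p$ for every $p>1$ by the Galdi regularity, so the $\e^\infty$ factor still dominates and the conclusion is unchanged.
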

\begin{proof}
First note that,
by \eqref{pabar}, since $f_1=\sm_\c v_\c\cdot u$ and 
$f_2= \sum_{i,j=1}^3\mathscr{B}_{i,j}\pt_i u_j+L_\c^{-1}\Gamma_\c(f_1,f_1)$, 
we obtain
\begin{multline}\P_\c \bar A_\c= \P_\c 
\Big\{ \bar\chi v_\c\cdot\nabla \phi_\e+\bar \chi\sum_{j_1, j_2,j_3=1}^3 u_{j_2}\pt_{j_1} u_{j_3}v_{j_1} L_\c^{-1}\G_\c(v_{\c, j_2}\sm_\c, v_{\c,j_3} \sm_\c)\\+\bar \chi
\Big[\sm_\c \sum_{j_1, j_2,j_3=1}^3v_{j_1}\mathscr{B}_{j_2.j_3} \pt_{j_1}\pt_{j_2}u_{j_3}
\Big]
\Big\}.\label{pbara1}\end{multline}

We remind that from \cite{Gal}, Th. X.6.4, we know that, if $\c\neq 0$, then  $u\in L^p$ for any $p>2$, $D u\in L^p$ for any $p>4/3$ and $D^2 u\in L^p$ for any $p>1$. Therefore, for any $p\ge 1$, $\|uDu\|_p\lesssim 1$. Moreover, for any $\b>0$,
\be\bar\chi \mu_\c^\b\le \exp[-\frac {\b} 2\e^{-2m}]\lesssim\e^\infty,\label{barchimu}\ee
and we obtain that the second term is less than $\e^{\infty}$ in $L^p$-norm, for any $p\ge 1$. From the definition of $\phi_\e$ we have $D\phi_\e\sim \mu^{\frac 1 2}_\c |u||Du|$ and hence also the first term is less than $\e^{\infty}$ in $L^p$-norm, for any $p\ge 1$. 
Finally, since $\|D^2 u\|_p\lesssim 1$ for any $p>1$, the third term is less than $\e^{\infty}$ in $L^p$-norm, for any $p> 1$,
so the first  {item of Proposition \ref{722}} is proved. 

To prove the second item we first observe that, for any $p>1$, $\|\Gamma_\c(\chi f_2+\bar\chi \phi_\e,f_j)\|_p\lesssim 1$. This follows as the estimate of $\|\Gamma_\c(f_1,f_1)\|_p$. Next we need to take care of the term $\e^{-1}L\bar\chi \ipc (v\cdot\nabla f_1)$ entering in $f_2$. Since this is proportional to $D u$ this is bounded in $L^p$ for $p>\frac 4 3$. The diverging factor $\e^{-1}$ is dealt with using \eqref{barchimu}.

To prove third item we remind that $\|u\|_3\lesssim |\c|$ for $|\c|\ll1$ (proof in Appendix) and hence also $\|f_1\|_3\lesssim|\c| $. We use the definition of $L^{(1)}_\c$, the inequalities \eqref{barR<R} and for any $p\ge 1$ and $q^{-1}+q'\/^{-1}=1$,
$\|\nu^{-\frac 1 2}\Gamma_\c(f,g)\|_p\le \|\nu^{\frac 1 2}f\|_{pq}\|g\|_{pq'}$
with $q$ such that $pq=3$ and hence $pq'=\frac{3p}{3-p}$ to conclude.
\end{proof}

\subsection{Iteration}\label{iterazionesub}
The construction of   the solution is obtained as follows: we define 
the sequence $\{R\}_{\ell=0}^\infty$ as: $R_0=0$; $R_{\ell+1}$ is the solution to the linear problem
\be v\cdot \nabla R_{\ell+1} +\e^{-1} L_\c R_{\ell+1}= L_\c^{(1)}\bar R_\ell +\e^{\frac 1 2} \Gamma_\c(\bar R_\ell,\bar R_\ell) +\e^{\frac 1 2} \bar A_\c,\label{Rmodlin}\ee
with boundary conditions
\be R_{\ell+1}=P_\g^\c R_{\ell+1}+\e^{\frac 1 2}  r_{\ell},\label{bcrn}\ee
where 
\be r_{\ell}=\bar r -\e^{-\frac 1 2}P_\g^\c \tilde R_{\ell}.\label{rn}\ee
By denoting $\bar g= L_\c^{(1)} \bar R_\ell +\e^{\frac 1 2} \Gamma_\c(\bar R_\ell,\bar R_\ell) +\e^{\frac 1 2} \bar A_\c$, we are reduced to the linear problem studied in the previous sections.

Remind the definition \eqref{lbrrbr} of $\lbr\,\cdot\,\rbr_{\beta,\beta'}$. 
Since in the rest of this section $\beta$ and $\beta'$ are fixed, we drop the indices.
Let $\mathscr{X}$ be the Banach space of the functions $X(x,v)$ such that $\lbr X\rbr$ is finite.
\begin{theorem} There are $\t<1$ and $c_0\ll1$ such that, 
if $\e\ll 1$ and  $|\c|\le c_0\t$, and
\be \sup_{0\le j\le \ell}\lbr R_j\rbr \le \t,\ee
then 
\be \lbr R_{\ell+1}\rbr<\t. \label{thesi}\ee
Moreover, there is $\l<1$ such that
\be \lbr R_{\ell+1}- R_{\ell}\rbr\le \l \lbr R_{\ell}- R_{\ell-1}\rbr
.\label{thesi2}\ee
Therefore $R_\ell$ converges $\lbr\, \cdot\,\rbr$-strongly to $R\in \mathscr{X}$ which solves \eqref{Rmod}, \eqref{bmod}.
\end{theorem}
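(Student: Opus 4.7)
The plan is to apply Theorem \ref{mainlinth} to the linear problem \eqref{Rmodlin}--\eqref{bcrn} with source $\bar g = L_\c^{(1)}\bar R_\ell + \e^{\frac 12}\Gamma_\c(\bar R_\ell,\bar R_\ell)+\e^{\frac 12}\bar A_\c$ and boundary datum $r_\ell = \bar r - \e^{-\frac 12} P_\g^\c \tilde R_\ell$, and to bound every norm appearing in $\mathscr{M}(\bar g,r_\ell)$ in terms of $\lbr R_\ell\rbr$ and $|\c|$. First I would split $\bar g$ into three pieces. The contribution of $\bar A_\c$ is harmless: by Proposition \ref{722}, $\|\P_\c\bar A_\c\|_p\lesssim \e^\infty$ and $\|\ipc\bar A_\c\|_p\lesssim |\c|$ for the relevant $p$'s, so $\e^{\frac 12}$ in front produces a term $\lesssim \e |\c|^2$ in $\mathscr{M}$. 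For the linear piece $L_\c^{(1)}\bar R_\ell=2\tilde\Gamma_\c(\mathscr{Q},\bar R_\ell)$, estimates \eqref{L1baraR}--\eqref{L1baraR1} combined with $|\bar R_\ell|\le|R_\ell|$ give $\|\nu^{-\frac 12}L_\c^{(1)}\bar R_\ell\|_2\lesssim |\c|\|R_\ell\|_6$, $\|\nu^{-\frac 12}L_\c^{(1)}\bar R_\ell\|_{3/2}\lesssim |\c|\|R_\ell\|_3$, and $\|\langle v\rangle^{-1}wL_\c^{(1)}\bar R_\ell\|_\infty\lesssim |\c|\|wR_\ell\|_\infty$, each of which is bounded by $|\c|\,\lbr R_\ell\rbr$ after absorbing the appropriate power of $\e$ from the definition \eqref{lbrrbr}.

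The most delicate piece is the nonlinear contribution $\e^{\frac 12}\Gamma_\c(\bar R_\ell,\bar R_\ell)$. Using the standard Grad bilinear estimates $\|\nu^{-\frac 12}\Gamma_\c(f,g)\|_p\lesssim \|\nu^{\frac 12}f\|_{pq}\|g\|_{pq'}$ with $q^{-1}+q'\/^{-1}=1$, I would balance the macroscopic and microscopic parts via the $L^6$, $L^3$ and $L^\infty$ components of $\lbr\cdot\rbr$. For instance, $\e^{\frac 12}\|\nu^{-\frac 12}\Gamma_\c(\P_\c \bar R_\ell,\P_\c \bar R_\ell)\|_{3/2}\lesssim \e^{\frac 12}\|\P_\c R_\ell\|_3\|\P_\c R_\ell\|_6\lesssim \lbr R_\ell\rbr^2$, exploiting precisely the uniform bound on $\e^{\frac 12}\|\P_\c R_\ell\|_3$ that Proposition \ref{L^3} secures. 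Cross terms with $\ipc \bar R_\ell$ and the $L^\infty$ top-level term are handled analogously, paying a benign power of $\e$ in each case thanks to the $\e$--weights built into \eqref{lbrrbr}. For the boundary datum, Proposition \ref{721}(1) gives
\[
\e^{-\frac 12}|P_\g^\c\tilde R_\ell|_{2,-}\lesssim \e^{\beta}(|\c|+\lbr R_\ell\rbr)^{1+\beta}|R_\ell|_{2,+}\lesssim \e^\beta(|\c|+\lbr R_\ell\rbr)^{1+\beta}\lbr R_\ell\rbr,
\]
and analogous inequalities for the $L^\infty$ and $z_\g$ norms (with $\bar r$ controlled by \eqref{r0mod1} and \eqref{estzeta} through $\e^\infty$ plus $|\c|$). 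Assembling everything in $\mathscr{M}(\bar g,r_\ell)$ yields
\[
\lbr R_{\ell+1}\rbr^2 \le C\bigl(|\c|^2 + (|\c|+\lbr R_\ell\rbr)^{2+2\beta}\lbr R_\ell\rbr^2 + \lbr R_\ell\rbr^4 + \e^\infty\bigr).
\]
Choosing $\tau$ small and $|\c|\le c_0\tau$ with $c_0$ sufficiently small, the right-hand side is strictly less than $\tau^2$ under the induction hypothesis $\lbr R_\ell\rbr\le\tau$, establishing \eqref{thesi}.

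For the contraction \eqref{thesi2}, I would subtract the equations for $R_{\ell+1}$ and $R_\ell$ to obtain a linear problem for $R_{\ell+1}-R_\ell$ with source
\[
L_\c^{(1)}(\bar R_\ell-\bar R_{\ell-1})+\e^{\frac 12}\bigl[\Gamma_\c(\bar R_\ell,\bar R_\ell)-\Gamma_\c(\bar R_{\ell-1},\bar R_{\ell-1})\bigr]
\]
and boundary datum $-\e^{-\frac 12}P_\g^\c(\tilde R_\ell-\tilde R_{\ell-1})$. Using \eqref{lipbarR} and Proposition \ref{721}(3), the bilinear difference splits as $\tilde\Gamma_\c(\bar R_\ell+\bar R_{\ell-1},\bar R_\ell-\bar R_{\ell-1})$ and is handled exactly as above with a prefactor $\lesssim \lbr R_\ell\rbr+\lbr R_{\ell-1}\rbr\lesssim\tau$; the boundary difference is controlled by Proposition \ref{721}(2). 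Applying Theorem \ref{mainlinth} again produces
\[
\lbr R_{\ell+1}-R_\ell\rbr^2 \lesssim \bigl(|\c|^2+\tau^{2+2\beta}\bigr)\lbr R_\ell-R_{\ell-1}\rbr^2,
\]
which gives a Lipschitz constant $\lambda<1$ for $|\c|$ and $\tau$ sufficiently small. This yields a Cauchy sequence in $\mathscr{X}$ whose limit $R$ satisfies \eqref{Rmod}--\eqref{bmod} by passing to the limit in each term (the nonlinear passage uses Grad's lemma and the strong convergence in $\lbr\cdot\rbr$).

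The main obstacle is the nonlinear term: closing the iteration requires that the combined factor in front of $\lbr R_\ell\rbr^2$ be strictly less than $1$, and this is only possible because the mixed-norm structure of $\lbr\cdot\rbr$ accommodates the uniform-in-$\e$ bound $\e^{\frac 12}\|\P_\c R_\ell\|_3\lesssim\lbr R_\ell\rbr$ established in Proposition \ref{L^3}. Without this single estimate, the $\Gamma_\c(\bar R_\ell,\bar R_\ell)$ contribution would carry an uncontrolled negative power of $\e$, and the scheme would not close.
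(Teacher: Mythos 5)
Your proposal is correct and follows essentially the same route as the paper: apply Theorem \ref{mainlinth} to the iterate, bound each term of $\mathscr{M}$ via the bilinear estimates \eqref{3-p}--\eqref{infty-infty} together with Propositions \ref{721} and \ref{722}, and close with $\tau$, $c_0$, $\e$ small; the contraction step is likewise identical. The only blemish is the illustrative display $\e^{1/2}\|\nu^{-1/2}\Gamma_\c(\P_\c\bar R_\ell,\P_\c\bar R_\ell)\|_{3/2}\lesssim \e^{1/2}\|\P_\c R_\ell\|_3\|\P_\c R_\ell\|_6$, where the pairing $L^3\times L^6$ actually lands in $L^2$ (the $L^{3/2}$ bound uses $\|\P_\c R_\ell\|_3^2$, as in \eqref{gammaPP32}), but this does not affect the argument.
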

\begin{proof}
  By Theorem  {\ref{mainlinth}}, we need to show that, when $g=\e^{\frac 1 2}\G_\c(\bar R_\ell,\bar R_\ell) +L_\c^{(1)}\bar R_\ell +\e^{\frac 1 2}\bar A_\c$ and $r= \bar r + \e^{-\frac 1 2}P_\g^\c \tilde R_\ell$, if  $\e\ll 1$, $|\c|\ll 1$, then $\mathscr{M}(g, r)<\t$.

We need to bound all the term in the right hand side of \eqref{mathscrM}. 
To estimate the norms of $\Gamma_\c(f,h)$ we state the following 
\begin{proposition}
We have the following estimates: let $X\in \mathscr{X}$. Then 
\begin{eqnarray}&& \e^{\frac 1 2} \|\nu^{-\frac 1 2}\Gamma_\c(\bar X,\bar X)\|_2\lesssim 
\lbr X\rbr^2,\label{2XX}\\
&&
\e^{\frac 1 2} \|\nu^{-\frac 1 2} {w}\Gamma_\c(\bar X,\bar X)\|_\infty\lesssim \e^{-\frac 1 2}\lbr X\rbr^2,\label{gammainftyinfty}\\
&&
\e^{\frac 1 2}\| \nu^{-\frac 1 2}\G_\c(\bar X,\bar X)\|_{\frac32}\le
\e^{-\frac 1 2}
\lbr X\rbr^2.\label{3P3+P}\end{eqnarray}
\end{proposition}
\begin{proof}
We make use of the following inequality (see \cite{EGKM2}):
\be \|\nu^{-\frac 1 2}\Gamma^\pm(f,h)\|_{\frac {qp}{q+p}}\lesssim \|\nu^{\frac 1 2}f\|_q\|h\|_p,\label{3-p}\ee
In particular, for $q=3$, $p=3$ we get
\be\|\nu^{-\frac 1 2}\G^\pm_\c(f,h)\|_{\frac32}\le \|\nu^{\frac 1 2}f\|_3\|h\|_3,\label{3-2}\ee
and for $q=3$, $p=6$,
\be\|\nu^{-\frac 1 2}\G^\pm_\c(f,h)\|_2\le \|\nu^{\frac 1 2}f\|_3\|h\|_6.\label{3-6}\ee
We will also use 
\be \|\nu^{-\frac 1 2}\Gamma^\pm(f,h)\|_{2}\lesssim \|f\|_\nu\|h\|_\infty,\label{2-infty}\ee
and 
\be\|\langle v\rangle^{-1}\G^\pm_\c(f,h)\|_\infty\le \|f\|_\infty\|h\|_\infty.\label{infty-infty}\ee
By \eqref{barR<R},
$$|\G_\c^\pm(\bar X, \bar X)|\le \G_\c^\pm(|\bar X|, |\bar X|)\le \G_\c^\pm(|X|, |X|).$$
We split  $|X|\le|\ipc X| +|\P_\c X|$. We have
$$\G^\pm_\c( |X|,|X|)\le \G^\pm_\c(|\ipc X|,|\ipc X|)+\G^\pm_\c(|\P_\c X|,|\P_\c X|)+2\tilde\G^\pm_\c(|\ipc X|,|\P_\c X|),$$
where $\tilde \G^\pm_\c(f,g)= \frac 1 2[\G^\pm_\c(f,g)+\G^\pm_\c(g,f)]$.

Using \eqref{3-6} we get
\be\e^{\frac 1 2}\|\nu^{-\frac 1 2}\G^\pm_\c(|\P_\c X|,|\P_\c X|)\|_2\lesssim (\e^{\frac 1 2}\|\P_\c X\|)_3\|\P_\c X\|_6\le \lbr X\rbr^2.\label{2PP}\ee
Using \eqref{2-infty} we get
\be\e^{\frac 1 2}\|\nu^{-\frac 1 2}\G^\pm_\c(|\ipc X|,|\ipc X|)\|_2\le\e (\e^{\frac 1 2}\|\ipc X\|)_\infty(\e^{-1}\|\ipc X\|_\nu)\le \e\lbr X\rbr^2.\label{2IPIP}\ee
Similarly, 
\be\e^{\frac 1 2}\|\nu^{-\frac 1 2} \tilde\G^\pm_\c(|\ipc X|,|\P_\c R_n|)\|_2\le \e (\e^{\frac 1 2}\|\P_\c X\|)_\infty(\e^{-1}\|\ipc X\|_\nu)\le \e\lbr X\rbr^2.\label{2PIP}\ee
Therefore \eqref{2XX} follows.
Moreover, by \eqref{infty-infty}, \eqref{gammainftyinfty} follows.
\newline
Since
$$ \e^{\frac 1 2}\|\Gamma^\pm(|\ipc X|, |\ipc X|) \|_{\frac32}\le \e^{\frac 1 2}\|\ipc X\|_3^2,$$
and, by interpolation, $\|\ipc X\|_3\lesssim \|\ipc X\|_\nu^{\frac12} \|\ipc X\|_6^{\frac12}$, then
\[ \e^{\frac 1 2}\|\Gamma^\pm(|\ipc X|, |\ipc X|) \|_{\frac 32}\le \e^{\frac 3 2}(\e^{-1}\|\ipc X\|_\nu\|)\|\ipc X\|_6.\]
Since 
\be\|\ipc X\|_6\lesssim \e^{\frac 1 3}\|\e^{-1}\ipc X\|_\nu^{\frac 1 3} \e^{-\frac 1 3}\|\e^{\frac 1 2}\ipc X\|_\infty^{\frac 2 3}\le \lbr X\rbr,\label{IPC6}\ee
we have 
\be  \e^{\frac 1 2}\|\Gamma^\pm(|\ipc X|, |\ipc X|) \|_{\frac32}\le \e^{\frac 3 2}\lbr X\rbr^2.\label{gammaIPIP65}\ee
Moreover, 
\begin{multline} \e^{\frac 1 2}\|\tilde\Gamma^\pm (|\ipc X|, |\P_\c X|) \|_{\frac32}\le 
(\e^{\frac 1 2}\|\P_\c X\|_3) \|\ipc X\|_3\lesssim  \\(\e^{\frac 1 2}\|\P_\c X\|_3)\|\ipc X\|_\nu^{\frac12} \|\ipc X\|_6^{\frac12}
\le \e^{\frac12}\lbr X\rbr^2.\label{gammaPIP65}\end{multline}
Finally 
\begin{multline} \e^{\frac 1 2}\|\Gamma^\pm (|\P_\c X|, |\P_\c X|) \|_{\frac32}\le 
(\e^{\frac 1 2}\|\P_\c X\|_3) \|\P_\c X\|_3\le  \e^{-\frac 1 2}(\e^{\frac 1 2}\|\P_\c X\|_3)^2
\le \e^{-\frac12}\lbr X\rbr^2.\label{gammaPP32}\end{multline}
and \eqref{3P3+P} follows.
\end{proof}
Now we are ready to bound the several terms entering in $\mathscr{M}$.
\begin{proposition} If $|\c|\ll1$ and $\e\ll 1$ then, with $$\Xi_\ell= \sup_{0\le j\le \ell}\lbr R_j\rbr,$$ we have 
\be\mathscr{M}\big(\e^{\frac 1 2} \G_\c(\bar R_\ell, \bar R_\ell) +L_\c^{(1)}\bar R_\ell +\e^{\frac 1 2} \bar A_\c,r_\ell\big) \lesssim \Xi_\ell^4+|\c|^2 \Xi_\ell^2+\e|\c|^2+\e^{\infty}.\label{P10003}\ee\end{proposition}
\begin{proof}
With $g=\e^{\frac 1 2}\G_\c(\bar R_\ell,\bar R_\ell)+ L_\c^{(1)} \bar R_\ell +\e^{\frac 1 2} \bar A_\c$, we have
\begin{multline} \|\nu^{-\frac 1 2} {\ipc} g\|_2^2\le \e \| \nu^{-\frac 1 2} \G_\c(\bar R_\ell,\bar R_\ell)\|_2^2+ \| \nu^{-\frac 1 2} L_\c^{(1)}\bar R_\ell\|_2^2 +\e \| \nu^{-\frac 1 2}  {\ipc}\bar A_\c\|_2^2\\\lesssim \lbr R_\ell\rbr^4+|\c|^2 \lbr R_\ell\rbr^2 +\e |\c|^2,\end{multline}
by using \eqref{2XX}, \eqref{L1baraR}, \eqref{Pbarac} and \eqref{ipcbara}.

The next term in \eqref{mathscrM} is
\begin{multline} \e\|\nu^{-\frac 1 2}(\e^{\frac 1 2}\G_\c(\bar R_\ell,\bar R_\ell)+ L_\c^{(1)} \bar R_\ell +\e^{\frac 1 2} \bar A_\c)\|_{\frac 3 2}^2\le \e\|\nu^{-\frac 1 2}\e^{\frac 1 2}\G_\c(\bar R_\ell,\bar R_\ell)\|_{\frac 3 2}^2+\e\|\nu^{-\frac 1 2} L_\c^{(1)} \bar R_\ell\|_{\frac 3 2}^2 \\+\e\|\nu^{-\frac 1 2} \e^{\frac 1 2}\bar A_\c\|_{\frac 3 2}^2\lesssim \lbr R_\ell\rbr^4+ \e|\c|\lbr R_\ell\rbr^2+ \e^4 |\c|^2,
\end{multline}
by using \eqref{3P3+P}, \eqref{L1baraR}, \eqref{Pbarac} and \eqref{ipcbara}.

Then we have 
\begin{multline} \e^3\Big\|\langle v\rangle^{-1} {w}\big[\e^{\frac 1 2}\G_\c(\bar R_\ell,\bar R_\ell)+ L_\c^{(1)} \bar R_\ell +\e^{\frac 1 2} \bar A_\c\big]\Big\|_{\infty}^2\le \\\e^3\Big\|\langle v\rangle^{-1} {w}\e^{\frac 1 2}\G_\c(\bar R_\ell,\bar R_\ell)\Big\|_{\infty}^2+\e^3\Big\|\langle v\rangle^{-1} {w} L_\c^{(1)} \bar R_\ell\Big\|_{\infty}^2 +\e^3\Big\|\e^{\frac 1 2} \langle v\rangle^{-1} {w}\bar A_\c\big]\Big\|_{\infty}^2\\\lesssim \e^2\lbr R_\ell\rbr^4+ \e^2|\c|\lbr R_\ell\rbr^2+ \e^2 |\c|^2,
\end{multline}
by using \eqref{gammainftyinfty}, \eqref{L1baraR1}, \eqref{Pbarac} and \eqref{ipcbara}.

Since $\P_\c g= {\e^{1/2}}\P_\c \bar A_\c$, the term $\|\P_\c g\|_2^2+ \e^{-2} |\c|^2\|\P_\c g\|_{\frac 6 5}^2$ in \eqref{mathscrM} is bounded by $\e^{\infty}$
using \eqref{Pbarac}. Next we bound 
\be\|z_\g(r_\ell)\|_2^2\lesssim \|z_\g(\bar r)\|_2^2+\e^{-1}\|z_\g(P_\g^\c \tilde R_\ell)\|_2^2.\ee
The first term is bounded by $\e^\infty$ using \eqref{r0mod1}. Moreover, by \eqref{estR-barR}, 
\begin{multline*} (\e^{\frac 1 2-2\sigma}|\c|^{-2+2\varrho}+|\c|^{-1}\e^{-1}) \e^{-1}\|z_\g(P_\g^\c \tilde R_\ell)\|_2^2\\
\le \e^{-2}|\c|^{-1}[\e(|\c|+\e^{\frac 1 2}\|w R_\ell\|_\infty)]^{2(1+\beta)}|R_\ell|_{2,+}^2\\
\le \e^{2\beta}|\c|^{-1}(|\c|+\e^{\frac 1 2}\|w R_\ell\|_\infty)^{2(1+\beta)} |R_\ell|_{2,+}^2\end{multline*}
To bound $|R_\ell|_{2,+}$ we use Lemma \ref{trace_s} and \eqref{Rmodlin} with $\ell$ replaced by $\ell-1$ to obtain
\begin{multline} |R_\ell|^2_{2,+}\lesssim\|\P_\c R_\ell\|^2_6+(\e^{-1}\|\ipc R_\ell\|_              \nu)^2+\e\| \nu^{-\frac 1 2}\G_\c(\bar R_{\ell-1}, \bar R_{\ell-1})\|_2 +\|\nu^{-\frac 1 2}L_\c^{(1)}\bar R_{\ell-1}\|_2^2 \\+\e \|\nu^{-\frac 1 2}\bar A_\c\|_2^2\lesssim\lbr R_\ell\rbr^2+ \Xi_\ell^4 + |\c|^2\Xi_\ell^2+\e |\c|^2,\end{multline}
by using \eqref{2XX}, \eqref{L1baraR}, \eqref{Pbarac} and \eqref{ipcbara}.
Hence, since $\Xi<\t<1$, for $\e\ll1$ we have  
\begin{multline} (\e^{-2\sigma}|\c|^{-2+2\varrho}+|\c|^{-1}\e^{-1}) \e^{-1}\|r_\ell)\|_2^2\\\lesssim \e^\infty+\e^{2\beta}|\c|^{-1}(|\c|+\lbr R_\ell\rbr)^{2(1+\beta)}\{\lbr R_\ell\rbr^2+ \Xi_\ell^4 + |\c|^2\Xi_\ell^2+|\c|^2\}\\\lesssim \Xi_\ell^4 +|\c|^2\Xi_\ell^2+\e|\c|^2+\e^\infty,\end{multline}
\newline
The terms $|r|_{2,-}^2$is treated in a similar way. As for $\e|w r|_\infty$ we proceed as before using \eqref{estR-barRinfty}, \eqref{gammainftyinfty}, \eqref{L1baraR1} and \eqref{Pbarac} and \eqref{ipcbara}.

Collecting the estimates, since $\e<1$, we conclude that 
\be \mathscr{M}\big(\e^{\frac 1 2} \G_\c(\bar R_\ell, \bar R_\ell) +L_\c^{(1)}\bar R_\ell +\e^{\frac 1 2} \bar A_\c,r_\ell\big) \lesssim \Xi_\ell^4+ |\c|^2 \Xi_\ell^2 +\e|\c|^2 +\e^\infty\ee
\end{proof}
Since $\Xi_\ell\le \t$, from \eqref{mainlinest} we obtain
\be \lbr R_{\ell+1}\rbr^2\le \t^2[\t^2+c_0^2\t^2+c_0^2+ \e^\infty\t^{-2}]< \t^2,\ee
provided that $$\t^2+c_0^2\t^2+c_0^2+ \e^\infty\t^{-2}<1.$$ This is verified if $\t\ll1$, $\e\ll1$, $c_0\ll 1$.

The same arguments prove \eqref{thesi2}, by using \eqref{lipbarR} and \eqref{liptildeR}.
The sequence $\{R_\ell\}$ thus converges strongly to $R$ such that $\lbr R\rbr<\t$. It is standard to check that $R$ solves \eqref{Rmod}. Since convergence in $\lbr \,\cdot\,\rbr$ implies pointwise convergence, by \eqref{liptildeR} it follows that $R$ satisfies \eqref{bmod}.

\end{proof}\bigskip
Therefore $F=\mu_\c+\e\mathscr{Q}+\e^{\frac 3 2} R$ solves the problem \eqref{mainpose}, \eqref{mainposb} and hence it is positive by construction. Moreover, it is in $L^\infty$, even if not uniformly bounded in $\e$. We can use Proposition \ref{arkeryd} to conclude that it is also solution to the original problem \eqref{1}, with boundary condition \eqref{bc0} and condition at infinity \eqref{Finfty}. The same estimates also prove uniqueness in the larger space because we can drop the assumption $\beta>0$ which was used before only to deal with terms appearing in the modified problem  \eqref{mainpose}, \eqref{mainposb}.

\appendix
\section{Bounds on the velocity field}\label{appendice}
\numberwithin{equation}{section}
\begin{proposition}
If $|\c|$ is sufficiently small, then the solution to the problem
\begin{eqnarray}
&&U\cdot \nabla U+\nabla p =\Delta U, \quad \nabla\cdot U=0, \quad \text{in } \O^c \\
&&\lim_{|x|\rightarrow \infty }U =\c, \quad U\Big|_{\partial \Omega } =0,
\end{eqnarray}
is such that
\be \|U-\c\|_p\lesssim |\c|, \quad \text{ for any } p\ge 3.\label{A3}\ee
\end{proposition}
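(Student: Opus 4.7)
The plan is to set $u = U - \c$, so that $(u,P)$ solves the perturbed Navier--Stokes system \eqref{INS}, with boundary data $u|_{\partial\Omega} = -\c$ and $u\to 0$ at infinity. Galdi's Theorem~X.6.4 (already invoked in the paper after \eqref{INS}) provides existence and uniqueness in $L^q(\O^c)$ for every $q > 2$ whenever $|\c|$ is sufficiently small; the remaining task is the quantitative bound $\|u\|_p \lesssim |\c|$ for $p \ge 3$.

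I would proceed by a linear-plus-nonlinear splitting $u = u_L + w$. First, I would solve the linear exterior Oseen problem
\[
\c\cdot \nabla u_L + \nabla P_L = \mathfrak{v}\Delta u_L,\quad \nabla\cdot u_L = 0,\quad u_L|_{\partial\Omega} = -\c,\quad u_L\to 0 \text{ at } \infty,
\]
via the Oseen fundamental tensor $E_\c(x)$ and the single/double layer potential representation on $\partial\Omega$ (cf.\ Galdi, Chapter VII). Since the problem is linear in the boundary datum $-\c$, the Oseen estimates yield $\|u_L\|_p \lesssim |\c|$ for every $p > 2$, including the borderline $p = 3$. This is precisely where the hypothesis $\c\neq 0$ matters: the paraboloidal wake structure of the Oseen tensor produces integrability in $L^q$ for every $q > 2$, whereas the Stokeslet decay $|x|^{-1}$ alone would fail at $p = 3$. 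The correction $w$ then solves
\[
\c\cdot\nabla w - \mathfrak{v}\Delta w + \nabla \pi = -(u_L + w)\cdot \nabla(u_L + w),\quad \nabla\cdot w = 0,
\]
with $w|_{\partial\Omega} = 0$ and $w\to 0$. I would close this by a contraction mapping in a small $L^p$ ball: the linear Oseen solvability estimate applied to the left-hand side, combined with the quadratic bound $\|f\cdot\nabla g\|_{\text{source space}} \lesssim \|f\|_p\,\|\nabla g\|_{\text{suitable}}$, gives $\|w\|_p \lesssim |\c|^2$. Hence $\|u\|_p \le \|u_L\|_p + \|w\|_p \lesssim |\c|$ for $|\c|$ small, and uniqueness from Galdi's theorem identifies this with $U-\c$.

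The main obstacle I expect is obtaining the sharp linear bound $\|u_L\|_3 \lesssim |\c|$ with a constant that stays controlled as $|\c|\to 0$. The standard Oseen estimates carry an implicit dependence on the Reynolds number, and to extract a clean linear factor of $|\c|$ at the borderline exponent one must handle the surface potentials on $\partial\Omega$ uniformly in $|\c|$. The most natural route is a rescaling $y = |\c| x$, converting the problem to one with unit free-stream velocity in a dilated exterior domain where classical Oseen bounds apply, and then transferring the estimate back, keeping careful track of how the surface measure and the boundary datum rescale. Once this uniform linear estimate is in hand, the nonlinear iteration is routine.
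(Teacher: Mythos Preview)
Your strategy is reasonable but genuinely different from the paper's, and the obstacle you flag is exactly what the paper's argument is built to sidestep.

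The paper never touches the Oseen fundamental tensor or layer potentials. It instead subtracts a \emph{compactly supported} divergence-free lifting $w$ of the boundary data, namely $w=\c-\mathrm{curl}\,[\chi(d(x,\partial\Omega))(\c_2x_3,\c_3x_1,\c_1x_2)]$, which is an explicit construction and not the solution of any PDE. The remainder $v=U-w$ has zero Dirichlet data, $v\to 0$ at infinity, and all $w$-dependent sources are compactly supported with norms $\lesssim|\c|$. The $L^3$ estimate is then obtained by writing the equation as $(\c\cdot\nabla-\Delta)v^\ell+\nabla p^\ell=\mathrm{div}(\text{quadratic terms})$ and applying the Mihlin--H\"ormander multiplier theorem to the symbol $k_mk_j(|k|^2+i\c\cdot k)^{-1}\hat\Pi$. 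The crucial point is that this symbol satisfies $|\partial_k^l(\cdot)|\lesssim |k|^{-l}$ with constants \emph{independent of $\c$}, because numerator and denominator are both of order $|k|^2$ and the $i\c\cdot k$ term is merely lower order. Sobolev embedding $W^{1,3/2}\hookrightarrow L^3$ then yields $\|v^\ell\|_3\lesssim |\c|+\|v^{\ell-1}\|_3\|v^\ell\|_3$, which closes by induction for $|\c|$ small; interpolation with the $L^\infty$ bound gives the full range $p\ge 3$.

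Your rescaling $y=|\c|x$ would convert the problem to unit free-stream velocity past the shrinking obstacle $|\c|\Omega$, and since the $L^3$ norm is invariant under the Navier--Stokes scaling in $\R^3$, you would then need Oseen estimates uniform as the obstacle collapses to a point. That is not standard and not obviously easier than the original uniformity-in-$|\c|$ question. So your route is plausible but remains incomplete precisely where you say it does, whereas the paper's compact-support lifting plus multiplier argument gives the uniform constant for free.
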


\begin{proof}
We first construct $w(x)$ such that $\nabla \cdot w(x)=0,
$ $\lim_{|x|\rightarrow \infty }w(x)=\c$,  and $w(x)|_{\partial \Omega }=0$, with $|w(x)-\c|=0$ for $x$ sufficiently large. In fact (see \cite{Lady}) we can choose
\[ w=\c- \mathop{\rm curl}[\chi({d(x,\pt\O)})(\c_2x_3,\c_3x_1,\c_1x_2)],\]
where $\chi(z)$ is smooth with $\chi(z)=1$ for $x<\frac 1 2$ and $\chi(z)=0$ for $z\ge 1$.
By construction $\nabla\cdot w=0$. Moreover, we have
\begin{multline*}\mathop{\rm curl}[\chi({d(x,\pt\O)})(\c_2x_3,\c_3x_1,\c_1x_2)]= \\
\chi' ({d(x,\pt\O)})\nabla_x d(x,\pt\O) (\c_2x_3,\c_3x_1,\c_1x_2)+
\chi({d(x,\pt\O)})\c$$
=\begin{cases}\c \quad x\in\pt\O,\\
 0 \quad d(x,\pt\O)>1.\end{cases}\end{multline*}
Clearly $w-\c$ is compactly supported and $\|w\|_{W^{s,p}}\lesssim |\c|$ for any $p\ge 1$ and any $s\ge 0$. We then seek for $U=w+v$, with $v$  such that 
\begin{eqnarray}
&&w\cdot \nabla v-\Delta v+\nabla p =-(w+v)\cdot
\nabla w+\Delta w-v\cdot \nabla v \label{A40}\\
&&\lim_{|x|\rightarrow \infty }v =0 \quad
v\big|_{\partial \Omega} =0.\label{A41}
\end{eqnarray}
We construct the approximating sequence solving 
\begin{eqnarray}
&&w\cdot \nabla v^\ell-\Delta v^\ell+\nabla p^\ell =-(w+v^\ell)\cdot
\nabla w+\Delta w-v^{\ell-1}\cdot \nabla v^\ell \label{A4}\\
&&\lim_{|x|\rightarrow \infty }v^\ell =0 \quad
v^\ell\big|_{\partial \Omega =0} =0,
\end{eqnarray}
for $\ell\ge 1$ and $v^0=0$.

\noindent\underline{Step 1}. By energy estimate and weak solution theory, we can show there is a solution $v$ to \eqref{A40}, \eqref{A41}, unique for $|\c|\ll1$, which is the weak limit of $\{v^\ell\}$ and for any $\ell$ 
\[
\|\nabla v^\ell\|_{L^{2}}+\|v^\ell\|_{L^{6}}\lesssim|\c| .
\]

\noindent\underline{Step 2}. We now show that  $v\in L^{3}$  and $\|v\|_{L^3}\lesssim |\c|$. Using $\pt_j\c=0$, $\nabla\cdot v^\ell=0$ and $\nabla\cdot w=0$, we write the $i$-th component of \eqref{A4} as
\be
\sum_{j=1}^3[\c_j \pt_j v_i^\ell-\pt_j^2 v_i^\ell]+\pt_i p^\ell =
-\sum_{j=1}^3\pt_j\big[
-\pt_j w_i  + (w_j-\c_j)v_i^\ell +v_j^\ell(w_i-\c_i) +w_jw_i+ v_j^{\ell-1} v_i^\ell\big]\\
.
\ee
In Fourier space, we have
(using the Leray Projector $\Pi$,  and $k\cdot \hat v(k)=0$): 
\[\widehat{\pt_m v_i^{\ell}}=\sum_{j=1}^3\frac{k_m k_j}{|k|^{2}+i\c\cdot k}\hat \Pi\mathcal{F}\big\{
-\pt_j w_i + (w_j-\c_j)v_i^\ell +v_j^\ell(w_i-\c_i) +w_jw_i+ v_j^{\ell-1} v_i^\ell \big\}.
\]
We have 
\[
\Big|\pt_k^l\frac{k_m k_j}{|k|^{2}+i\c\cdot k}\hat\Pi\Big|\leq \frac{1}{|k|^l},
\]
independent of $\c$. Hence we can use  the Mihlin-Hormander theorem. Therefore, by Sobolev embedding in 3D ($
W^{1,\frac 3 2}\subset L^{3}$) and the compact support of $w-\c$, we obtain  
\begin{eqnarray*}
\|v^\ell\|_{L^{3}} &\leq &\sup_{m}\|\pt_m v^\ell\|_{\frac 3 2}\\&\le&\big\|-\pt_j w_i  + (w_j-\c_j)v^\ell_i +v_j^\ell(w_i-\c_i) +w_jw_i+ v_j^{\ell-1} v_i^\ell\big\|_{L^{\frac 3 2}} \\
&\lesssim &|\c|(1+\|v^\ell\|_6)+\||v^{\ell-1}|\/|v^\ell|\|_{L^{\frac 3 2}}\lesssim|\c|+\|v^\ell\|_{L^3 }\|v^{\ell-1}\|_{L^3 }.
\end{eqnarray*}
Therefore, if we assume the recurrence hypothesis $\displaystyle{\sup_{0\le m\le \ell-1}\|v^{m}\|_{L^3}\le C |\c|}$,  by choosing  $|\c|\ll 1$ we obtain 
\[
\|v^\ell\|_{L^{3}}\le C |\c|,
\]
 and the limit satisfies $\|v\|_{L^3}\le C |\c|$.

\noindent \underline{Step 3}. By differentiating the equation, from the energy inequality for the derivative we obtain $\| D v\|_6\lesssim |\c|$ and hence $\|v\|_\infty\lesssim |\c|$. By interpolation we conclude \eqref{A3}.
\end{proof}
{\bf Acknowledgments} 
Yan Guo's research is supported in part by NSF grant 1611695,  Chinese NSF grant 10828103, as well as a Simon Fellowship. R. Marra's research is partially supported by MIUR-Prin.
We are very grateful for extensive and constructive comments from the referees, which help us to improve the presentation of the paper. We thank Junhwa Jung for pointing out some omissions.


\begin{thebibliography}{99}
\bibitem{AN} Arkeryd L.; Nouri  A.: \textit{On a Taylor-Couette Type Bifurcation for the Stationary Nonlinear Boltzmann Equation} Jour.  Stat. Phys., {\bf124}, pp. 401--443 (2006)
\bibitem{Bab} Babenko, K.I.: \textit{On Stationary Solutions of the Problem of Flow Past a Body of a Viscous Incompressible Fluid}, Mat. Sb., {\bf 91} (133), 3-27 (1973); English Transl.:
Math. SSSR Sbornik, {\bf 20} 1973, 1-25 (1973)
\bibitem{BGL89} C. Bardos; F. Golse; D. Levermore: \textit{Sur les limites
asymptotiques 41Acad. Sci. Paris S\'er. I Math.}, \textbf{309}, 727--732
(1989)
\bibitem{BU} C. Bardos; S. Ukai: \textit{The classical incompressible
Navier-Stokes limit of the Boltzmann equation,} Math. Mod. Meth. Appl. S.,
vol.1, p 235 (1991)
\bibitem{Bob}Bobylev A. V.; Mossberg E.: \textit{On Some Properties of Linear and Linearized Boltzmann Collision Oerators for Hard Spheres} Kinetic and Related Models, {\bf 1}, pp. 521-555 (2008)
\bibitem{Ce} C. Cercignani, Mathematical methods in Kinetic Theory, Plenum Press, New York 1969
\bibitem{CIP} C. Cercignani; R. Illner; M. Pulvirenti: \textit{The
mathematical theory of dilute gases,} Springer-Verlag (1994)
\bibitem{DEL} A. De Masi; R. Esposito; J. L. Lebowitz: \textit{
Incompressible Navier-Stokes and Euler Limits of the Boltzmann Equation}, {
Comm. Pure and Appl. Math.}, \textbf{42}, 1189--1214 (1989)
\bibitem{EGKM} R. Esposito; Y. Guo; C. Kim; R. Marra: \textit{Non-Isothermal
Boundary in the Boltzmann Theory and Fourier Law}, Comm. Math. Phys. \textbf{
323}, 177--239 (2013) 
\bibitem{EGKM2} R. Esposito; Y. Guo; C. Kim; R. Marra: \textit{Stationary solutions to the Boltzmann equation in the Hydrodynamic limit}, Annals of PDE, {\bf 4} pp. 1--119 (2018), DOI 10.1007/s40818-017-0037-5; ArXiv, 1502.05324v3 (2016)
\bibitem{EV} L. C. Evans: Partial Differential Equations, Graduate Studies
in Mathematics, vol. 19 American Mathematical Society (1998)
\bibitem{Fin} Finn, R.: \textit{Estimates at Infinity for Stationary Solutions of the Navier-Stokes Equations}, Bull. Math. Soc. Sci. Math. Phys. R. P. Roumaine, {\bf 3} (51) 387-418 (1959)
\bibitem{Gal} Galdi G. P.: \textit{An Introduction to the Mathematical Theory of the Navier-Stokes Equations} Springer (2011)
\bibitem{GSR} F. Golse; L. Saint-Raymond: \textit{The Navier-Stokes limit of
the Boltzmann equation for bounded collision kernels}, Invent. Math. \textbf{
155}, 81--161 (2004) 
\bibitem{Guo06} Y. Guo: \textit{Boltzmann diffusive limit beyond the
Navier-Stokes approximation.} {Comm. Pure and Appl. Math.} \textbf{59},
626--687 (2006)
\bibitem{GJ} Y. Guo; J. Jang: \textit{Global Hilbert expansion for the
Vlasov-Poisson-Boltzmann system.} Comm. Math. Phys. \textbf{299}, 469--501
(2010)
\bibitem{GJJ} Y. Guo; J. Jang; N. Jiang: \textit{Acoustic limit for the
Boltzmann equation in optimal scaling.} Comm. Pure Appl. Math. \textbf{63},
337--361 (2010)
\bibitem{Hor} H\"ormander L.: \textit{Estimates for translation invariant operators in $L^p$ spaces}. Acta Math. {\bf 104}, 93--140 (1960).
\bibitem{Lady} O. Ladyzhenskaya: Mathematical Theory of Viscous Incompressible Flow, 2th. ed., Gordon and Breach (1969) 
\bibitem{Lam} Lamb, H.: Hydrodynamics (Sixth ed.).  Dover Publications. (1945)
\bibitem{Leo} Leoni G.: A First Course in Sobolev Spaces, AMS Grad. Stud. in Math. (2009)
\bibitem{Ler} Leray, J.: \textit{\'Etude de Diverses \'Equations Integrales non Lin\'eaires et de Quelques Probl\`emes que Pose l' Hydrodynamique}, J. Math. Pures Appl., {\bf 12}, 1-82 (1933)
\bibitem{LM} Lions, P.-L. ; Masmoudi, N. : \textit{From the Boltzmann equations
to the equations of incompressible fluid mechanics. I, II.} Arch. Ration.
Mech. Anal. \textbf{158}, no. 3, 173--193, 195--211 (2001)
\bibitem{Masl} N. Maslova: \textrm{Nonlinear Evolution Equations}, World Scientific (1993)
\bibitem{MS}Masmoudi,  N. ; Saint-Raymond, L. : \textit{From the Boltzmann
equation to the Stokes-Fourier system in a bounded domain.} Comm. Pure Appl.
Math. \textbf{56}, no. 9, 1263--1293 (2003)

\bibitem{Mih}  Mihlin S. G.: \textit{On the multipliers of Fourier integrals} (Russian). Dokl. Akad. Nauk SSSR {\bf 12}, 143-155 (1957).
\bibitem{OS} Oseen, C. W.: \"Uber die Stokes'sche formel, und \"uber eine verwandte Aufgabe in der Hydrodynamik, Arkiv f\"or matematik, astronomi och fysik, vi (29) (1910)
\bibitem{SR} Saint-Raymond, L.: Hydrodynamic Limits of the Boltzmann Equation, Springer Lecture Notes in Mathematics, (2009)
\bibitem{Th} Thomson, W.: On ship waves, Institution of Mechanical Engineers, Proceedings, {\bf 38} : 409--434 (1887) 
\bibitem{UA1} Ukai  S.; Asano  K.: \textit{Steady solutions of the Boltzmann equation for a gas flow past an obstacle. I. Existence}. Arch. Rational Mech. Anal. {\bf 84}, 249--291 (1983).
\bibitem{UA2}Ukai  S.; Asano  K.:\textit{ Steady solutions of the Boltzmann equation for a gas flow past an obstacle. II. Stability}. Publ. Res. Inst. Math. Sci. {\bf 22}, 1035--1062  (1986)
\bibitem{UYZ}Ukai S.; Yang  T.; Zhao  H.: \textit{Stationary solutions to the exterior problems for the Boltzmann equation. I. Existence}. Discrete Contin. Dyn. Syst. {\bf 23}, 495--520  (2009)
\bibitem{UYZ1}Ukai S.; Yang  T.; Zhao  H.: \textit{Exterior Problem for the Boltzmann equationwith Temperature Difference}. Commun. in Pure and Appl. Analysis, {\bf 8}, pp 473-491 (2009)

\bibitem{GW} L. Wu; Y. Guo: \textit{Geometric Correction for Diffusive
Expansion of Steady Neutron Transport Equation}, arXiv:1404.2583, 
in Comm. Math. Phys.  {\bf 336} 1473--1533 (2015)
\end{thebibliography}
\end{document}